\renewenvironment{proof}[1][\proofname]{{\bfseries #1.} }{\qed}
\def\Cov{{\rm Cov\,}}
\newcommand{\field}[1]{\mathbb{#1}}
\newcommand{\R}{\field{R}}
\newcommand{\Var}{{\rm Var}}
\newcommand{\Corr}{{\rm Corr}}
\newcommand{\e}{{\rm e}}
\newcommand{\eps}{\varepsilon}
\def\authors#1{{ \begin{center} #1 \vspace{0pt} \end{center} } \smallskip}
\def\title#1{{\huge\bf  \begin{center} #1 \vspace{0pt} \end{center}  } \smallskip}
\def\E{{\mathbb{ E}}}
\def\P{{\mathbb{P}}}
\def\paref#1{(\ref{#1})}
\newtheorem{theorem}{Theorem}[section]
\newtheorem{proposition}[theorem]{Proposition}
\newtheorem{lemma}[theorem]{Lemma}
\newtheorem{corollary}[theorem]{Corollary}
\newtheorem{remark}[theorem]{Remark}
\begin{document}

\date{Nov 2018}

\title{{\bf On the correlation between nodal and boundary lengths for random spherical harmonics}}
\authors{\large\sc Domenico Marinucci and  Maurizia Rossi}
%
%\institution{\inst{\dagger}Dipartimento di Matematica, Universit\`a di Roma ``Tor Vergata"\\
%\inst{\ddagger}Dipartimento di Matematica, Universit\`a di Pisa
%\inst{3} Univ3\\
%\inst{4} Univ4

\begin{abstract}

We study the correlation between the nodal length of random spherical
harmonics  and the measure of the boundary for excursion sets at any non-zero
level. We show that the correlation is asymptotically zero, while the
partial correlation after controlling for the random $L^2$-norm on the sphere of the
eigenfunctions is asymptotically one.

\smallskip

\noindent\textbf{Keywords and Phrases:} Nodal Length, Spherical Harmonics, Partial Correlation.

\smallskip

\noindent \textbf{AMS Classification:} 60G60, 62M15, 53C65, 42C10, 33C55.
\end{abstract}

\section{Introduction and main result}

\subsection{Random spherical harmonics}

On the unit two-dimensional sphere $\mathbb{S}^{2},$ let us consider the
Helmholtz equation
\[
\Delta _{\mathbb{S}^{2}}f_{\ell }+\lambda _{\ell }f_{\ell }=0\text{ },\hspace{%
1cm}f_{\ell }:\mathbb{S}^{2}\rightarrow \mathbb{\mathbb{R}},
\]%
where
\[
\Delta _{\mathbb{S}^{2}}=\frac{1}{\sin \theta }\frac{\partial }{\partial
\theta }\left (\sin \theta \frac{\partial }{\partial \theta } \right )+\frac{1}{\sin
^{2}\theta }\frac{\partial }{\partial \varphi ^{2}}
\]%
is the Laplace-Beltrami operator on $\mathbb{S}^{2}$ in spherical coordinates $(\theta,\varphi)$ and $\left\{ \lambda
_{\ell }=\ell (\ell +1)\right\}_{\ell\in \mathbb N}$ represent the set of
eigenvalues of $-\Delta_{\mathbb{S}^{2}}$. For any $\lambda _{\ell }$, the corresponding eigenspace is the
$(2\ell +1)$-dimensional space of spherical harmonics of degree $\ell $; as
usual, we can choose an $L^{2}$-orthonormal basis $\left\{ Y_{%
\mathbb{\ell }m}\right\} _{m=-\ell ,\dots , \ell }$ \cite[\S 3.4]{MaPeCUP}, and focus, for $\ell\in \mathbb N^*$, on random
eigenfunctions of the form
\begin{equation}
f_{\ell }(x)=\sqrt{\frac{4\pi }{2\ell +1}}\sum_{m=-\ell }^{\ell }a_{\ell
m}Y_{\ell m}(x)\text{ ,}\qquad x\in \mathbb S^2. \label{fell}
\end{equation}%
Here the coefficients $\left\{ a_{\mathbb{\ell }m}\right\}_{m=-\ell ,\dots , \ell }$
are random variables (defined on some probability space $(\Omega, \mathcal F, \mathbb P)$) such that $a_{\ell,0}$ is a real standard Gaussian, and for $m\ne 0$ the $a_{\ell,m}$'s are standard complex Gaussians (independent of $a_{\ell,0}$) and independent save for the relation $\overline{a_{\ell,m}} =(-1)^m a_{\ell, -m}$ ensuring $f_\ell$ to be real valued -- it is immediate to see that
the law of the process $f_{\ell }$ is invariant with
respect to the choice of a $L^{2}$-orthonormal basis of eigenfunctions.

For every $\ell$, the spherical
random field $f_{\ell }$ is centred, Gaussian
and isotropic; from the addition theorem for spherical harmonics \cite%
[(3.42)]{MaPeCUP}, the covariance function is given by
\[
r_\ell(x,y):=\mathbb{E}[f_{\ell }(x)f_{\ell }(y)]=P_{\ell }(\cos d(x,y)),\qquad x,y\in \mathbb S^2,
\]%
where $P_{\ell }$ is the $\ell$-th Legendre polynomial, defined by Rodrigues'
formula, as
\[
P_{\ell }(t):=\frac{1}{2^{\ell }\ell !}\frac{d^\ell}{dt^{\ell }}(t^{2}-1)^{\ell }%
\text{ , }\qquad  t\in [-1, 1],
\]%
and $d(x,y)$ stands for the spherical geodesic distance between the points $x$ and $y$.
As discussed elsewhere
(see i.e., \cite{CMW, CaMar2016, CW}), random spherical harmonics
arise naturally from the spectral analysis of isotropic spherical random
fields or in the investigation of quantum chaos (see for instance \cite{MaPeCUP}, \cite{wigmanreview} for reviews). We may assume $f_\ell$ and $f_{\ell'}$ to be independent  random fields whenever $\ell\ne \ell'$.

In this paper, we shall focus on the boundary length of the excursion sets of $%
 f_{\ell }$ in (\ref{fell}), which is an a.s. smooth curve whose connected components are homeomorphic to the circle; i.e., for $u\in \mathbb{R}$, we will investigate the sequence of random variables
\[
\mathcal{L}_{\ell }(u):=\text{length}\left ( f_{\ell }^{-1}(u)\right ).
\]
If $u=0$ the random quantity $\mathcal{L}_{\ell }(0)$ is known as nodal length.

\subsubsection{Nodal length}\label{Snodal}

The sequence of random variables $\{\mathcal L_\ell(0)\}_{\mathcal \ell\in \mathbb N^*}$
has been the object of an enormous amount of activity, see i.e. \cite{berard, Wig, MRW, npr, Tod18}.
In particular, according to the celebrated Yau's conjecture \cite%
{Yau}, which has been proved for real analytic manifolds by Donnelly and
Fefferman, see \cite{DonnFeffa}, and more recently for smooth manifolds by
\cite{limonov}, for any \emph{realization} $f_{\ell }$ one has, for positive
constants $c,C>0$,
\[
c\sqrt{\lambda _{\ell }}\leq \mathcal{L}_{\ell }(0)\leq C\sqrt{\lambda
_{\ell }}
\]%
for every $\ell \in \mathbb N^*$.
For the ``typical" eigenfunction, fluctuations are indeed of smaller order;
more precisely, tighter bounds can be given in a probabilistic sense. In
fact, under Gaussianity the expected value of $\mathcal{L}_{\ell }(0)$ is
easily computed to be (for instance by the Gaussian Kinematic Formula \cite{adlertaylor}, see also \cite{berard})%
\[
\mathbb{E}\left[ \mathcal{L}_{\ell }(0)\right] =2\pi \sqrt{\frac{\lambda _{\ell
}}{2}}.
\]%
The computation of the variance is much more challenging, and was solved by
\cite{Wig}, where it is shown that, as $\ell \rightarrow \infty ,$%
\begin{equation}\label{varnodal}
\Var\left( \mathcal{L}_{\ell }(0)\right) =\frac{\log \ell }{32}+O(1)\text{ .%
}
\end{equation}
More recently, \cite{MRW} actually provided a stronger characterization of
the nodal length fluctuation around their expected value. More precisely,
they were able to establish the asymptotic equivalence (in the $L^{2}(\mathbb P
)$-sense) of the nodal length and the so-called sample trispectrum of $f_{\ell}$ i.e.,
integral over the sphere of $H_{4}(f_{\ell })$ that is the
fourth-order Hermite polynomial evaluated at the field itself (we recall that $%
H_{4}(t)=t^{4}-6t^{2}+3$). Indeed, let us define first the sequence of random variables
\begin{equation}\label{M}
\mathcal{M}_{\ell }:=-\frac{1}{4}\sqrt{\frac{\lambda_\ell}{2}}\frac{1}{4!}%
\int_{\mathbb{S}^{2}}H_{4}(f_{\ell }(x))\,dx,\qquad \ell \in \mathbb N^*;
\end{equation}
thanks to the properties of Hermite and Legendre polynomials it is easy to check that \cite{MW2011,MW2014}
\[
\mathbb{E}\left[ \mathcal{M}_{\ell }\right] =0,\qquad \Var\left (  \mathcal{M%
}_{\ell }\right ) =\frac{\log \ell }{32}+O_{\ell\to +\infty}(1)\text{ . }
\]%
Moreover, denote by $\widetilde{\mathcal{L}}_{\ell }(0)$ the standardized
nodal length, i.e.,
\[
\widetilde{\mathcal{L}}_{\ell }(0):=\frac{\mathcal{L}_{\ell }(0)-\mathbb{E}%
[\mathcal{L}_{\ell }(0)]}{\sqrt{\Var(\mathcal{L}_{\ell }(0))}}\text{ ,}
\]%
and likewise for $\mathcal{M}_{\ell }$, writing $\widetilde{\mathcal{M}}%
_{\ell }:=\frac{\mathcal{M}_{\ell }}{\sqrt{\Var(\mathcal{M}_{\ell })}}.$ It
is shown in \cite{MRW} that, as $\ell \rightarrow \infty ,$%
\begin{equation}\label{mrw}
\mathbb{E}\left[ \left | \widetilde{\mathcal{L}}_{\ell }(0)-\widetilde{\mathcal{M}}%
_{\ell } \right |^2 \right]=O\left (\frac{1}{\log \ell} \right )\quad \Longrightarrow \quad \widetilde{%
\mathcal{L}}_{\ell }(0)%=\widetilde{\mathcal{M}}_{\ell }+O_{p}(\frac{1}{\sqrt{\log \ell }})
=\widetilde{\mathcal{M}}_{\ell }+o_{\mathbb P}(1),
\end{equation}
where by $o_{\mathbb P}(1)$ we mean a sequence of random variables converging to zero in probability.
Note that, bearing in mind \cite[Theorem 1.1]{MW2014}, \paref{mrw} yields immediately a CLT for the normalized nodal length.
%In other words, we have here two results:
%
%\textcolor{blue}{Qui non abbiamo ancora mai parlato di espansioni caotiche}
%
%i) the leading term in the $L^{2}(\Omega )$ expansion of the nodal length
%around its expected value is provided by its projection on the fourth-order
%Wiener chaos (see e.g. \cite{noupebook} for details on chaotic decompositions);
%all the other projection terms are asymptotically of smaller order,
%as $\ell \rightarrow \infty$;
%
%ii) the projection on the fourth-order chaos takes (asymptotically) a very simple form, namely
%the integral in $\mathcal{M}_{\ell }$ depends just on the process itself,
%and not on its gradient (despite the fact that the gradient is indeed needed
%to express nodal length by means of the coarea/Kac-Rice formula \cite{}).

\subsubsection{Boundary length}\label{Sboundary}

A natural further
question to investigate is then how much the nodal length behaviour
characterizes the full geometry of eigenfunctions, i.e., the behaviour of
excursion sets for arbitrary levels $u\neq 0$ \cite{MW2011}. To this aim, let us first
recall that (see \cite{Wig, ROSSI2015}), for $u\in \mathbb R$ and $\ell\in \mathbb N^*$,
\begin{equation}\label{mediau}
\mathbb{E}\left[ \mathcal{L}_{\ell }(u)\right] =2\pi\sqrt{\frac{\lambda _{\ell
}}{2}}\,\e^{-u^{2}/2},
\end{equation}
and, for $u\ne 0$, as $\ell\to +\infty$,
\begin{equation}\label{varu}
\Var\left (  \mathcal{L}_{\ell }(u)\right ) \sim \frac{\pi^2}{2} u^4 \e^{-u^2} \cdot \ell.
\end{equation}
Let us now consider the random variable (recall that $H_2(t) = t^2 -1$ is the second Hermite polynomial)
\begin{equation}\label{D}
\begin{split}
\mathcal D_\ell(u) := & \frac{1 }{4}\sqrt{ \frac{\lambda _{\ell }}{2}}\, u^{2}\e^{-u^2/2} \int_{%
\mathbb{S}^{2}}H_{2}(f_{\ell }(x))\,dx.
\end{split}
\end{equation}
From \paref{D} we easily find (cf. \paref{varu})
\begin{equation}\label{varDell}
\Var( \mathcal D_\ell(u)) = \pi^2 \frac{\lambda_\ell}{2\ell +1} u^4 \e^{-u^2};
\end{equation}
let us now consider, for $u\ne 0$, the standardized
boundary length, i.e.,
\[
\widetilde{\mathcal{L}}_{\ell }(u):=\frac{\mathcal{L}_{\ell }(u)-\mathbb{E}%
[\mathcal{L}_{\ell }(u)]}{\sqrt{\Var(\mathcal{L}_{\ell }(u))}}\text{ ,}
\]%
and analogously $\widetilde{\mathcal{D}}_{\ell }(u):=\frac{\mathcal{D}_{\ell }(u)}{\sqrt{\Var(\mathcal{D}_{\ell }(u))}}.$
 It was shown in \cite{ROSSI2015} that, for $u\ne 0$, as $\ell\to +\infty$,
\begin{equation}\label{lengthu}
\E[|\mathcal L_\ell(u) - \E[\mathcal L_\ell(u)] - \mathcal D_\ell(u)|^2] =o(\ell) \quad \Rightarrow\quad  \widetilde{%
\mathcal{L}}_{\ell }(u)
=\widetilde{\mathcal{D}}_{\ell }(u)+o_{\mathbb P}(1).
\end{equation}
%\[
%\mathcal{L}_{\ell }(u)-\mathbb{E}\left[ \mathcal{L}_{\ell }(u)\right] =\text{Proj}[%
%\mathcal{L}_{\ell }(u)|2]+\text{ smaller order terms,}
%\]%
%where $Proj[\mathcal{L}_{\ell }(u)|2]$ denotes the projection on the second
%order chaos, which takes the form (see \cite{ROSSI2015}, \cite{CaMar2016})
%\[
%Proj[\mathcal{L}_{\ell }(u)|2]=2\left\{ \frac{\lambda _{\ell }}{2}\right\}
%^{1/2}\sqrt{\frac{\pi }{8}}\left\{ u^{2}\phi (u)\right\} \frac{1}{2!}\int_{%
%\mathbb{S}^{2}}H_{2}(f_{\ell }(x))dx\text{ ;}
%\]%
%the variance of this projection (and hence the leading term in the variance
%of the boundary length) is easily computed to be%
%\begin{eqnarray*}
%Var\left\{ Proj[\mathcal{L}_{\ell }(u)|2]\right\}  &=&\frac{\lambda _{\ell }%
%}{2}\frac{\pi }{8}\left\{ u^{2}\phi (u)\right\} ^{2}Var\left\{ \int_{\mathbb{%
%S}^{2}}H_{2}(f_{\ell }(x))dx\right\}  \\
%&=&....
%\end{eqnarray*}%
%
In \S \ref{Soutline} we will interpret the results recalled in \S\ref{Snodal} and \S\ref{Sboundary} for nodal and boundary lengths respectively in terms of so-called chaotic components.

\begin{remark}\rm\label{parseval}

i) Let us stress now that one of the main differences w.r.t. the nodal case \paref{mrw} is that, as argued in \cite{CaMar2016}, thanks to \paref{lengthu} the boundary length is asymptotically (as $\ell\to +\infty$) \emph{perfectly correlated}
with other geometric functionals, such as the area \cite{DI, MW2014}, the Euler-Poincar\'{e}
characteristic \cite{CaMar2016}, and the number of critical points \cite{CaMar2019} above any nonzero threshold;
knowledge of any of these quantities yields asymptotically the full information on the behaviour of all the
others.
%= & \pi \sqrt{ \frac{\lambda _{\ell }}{2(2\ell+1)}}\, u^{2}\e^{-u^2/2} %\frac{1}{\sqrt{2\ell+1}} \sum_{m=-\ell}^\ell (a_{\ell m}^2 -1).

ii) From \paref{D} it is convenient to note that
\[
\int_{\mathbb{S}^{2}}H_{2}(f_{\ell }(x))\,dx=\int_{\mathbb{S}^{2}}(f_{\ell
}^{2}(x)-1)\,dx=\left\Vert f_{\ell }\right\Vert_{L^2(\mathbb S^2)}^{2}-4\pi =\left\Vert
f_{\ell }\right\Vert_{L^2(\mathbb S^2)}^{2}-\mathbb{E}\left[\left\Vert f_{\ell }\right\Vert_{L^2(\mathbb S^2)}^{2}\right];
\]
moreover a simple application of Parseval's identity yields
\[
\left\Vert f_{\ell }\right\Vert_{L^2(\mathbb S^2)}^{2}=\frac{4\pi}{2\ell+1}\sum_{m=-\ell }^{\ell }
|a_{\ell m}|^{2}=4\pi\,\widehat{C}_{\ell }\text{ ,}
\]
where $\widehat{C}_{\ell }:=(2\ell +1)^{-1}\sum_{m=-\ell }^{\ell }
|a_{\ell m}|^{2}$ is usually denoted the sample power spectrum. Roughly speaking, the boundary length is asymptotically proportional to the (random)
sample norm of the eigenfunctions: from this point of view, it is more
natural to expect that it should not play a role in the behaviour of nodal
lines, which are clearly invariant to scaling factors.
\end{remark}

Note that ii) in Remark \ref{parseval} and \paref{lengthu} immediately give a CLT for the normalized boundary length.

\subsection{Main result}

The main result in this paper is the characterization of the correlation
structure between the nodal and non-nodal case. To do so, it is important to
recall the standard distinction between the classical correlation
coefficient between two (finite-variance) random variables $X,Y$, which of
course is given by
\[
\rho (X,Y):=\frac{\Cov(X,Y)}{\sqrt{\Var(X)\Var(Y)}}\in[-1,1]\text{ ,}
\]%
and the Partial Correlation Coefficient (conditional on the finite variance
random variable $Z$), i.e.%
\begin{equation}\label{partialcorr}
\rho _{Z}(X,Y):=\rho({X^*},{Y^*}),
\end{equation}
where ${X^*},{Y^*}$ are the ``residuals" after projecting $%
X,Y$ on the ``explanatory variable" $Z$, i.e. for a finite-variance random variable $W$
\begin{equation}\label{explanatory}
{W^*}:=(W-\mathbb{E}[W])-\frac{\Cov(W,Z)}{\Var(Z)}(Z-\mathbb{E}[Z]%
\mathbb{)}.\ %{Y^*}:=(Y-\mathbb{E}[Y])-\frac{\Cov(Y,Z)}{%
%\Var(Z)}(Z-\mathbb{E}[Z]).
\end{equation}
As well-known, $\rho _{Z}(X,Y)$ admits a standard interpretation as a
measure of the (linear) dependence between $X$ and $Y,$ after we got rid of
the common components depending on $Z.$
Our main result is obtained by taking $X,Y$ $\ $to be the boundary lengths
at different levels and $Z=\left\Vert f_{\ell }\right\Vert
_{L^{2}(\mathbb S^{2})}^{2}=4\pi\,\widehat{C}_{\ell }$ to be the random $L^{2}$%
-norm of the eigenfunctions $f_{\ell }$:

\begin{theorem}\label{mainth}
As $\ell \rightarrow \infty ,$ for all $u_{1}\neq 0$
\begin{equation}\label{eqteo1}
\lim_{\ell \rightarrow \infty }\rho \left( \mathcal{L}_{\ell }(u_{1}),%
\mathcal{L}_{\ell }(u_{2})\right) =\left\{
\begin{array}{c}
1,\qquad \text{if } u_{2}\neq 0, \\
0,\qquad \text{if } u_{2}=0;%
\end{array}%
\right.
\end{equation}
on the other hand
\begin{equation}\label{eqteo2}
\lim_{\ell \rightarrow \infty }\rho _{\left\Vert f_{\ell }\right\Vert
_{L^{2}(\mathbb S^{2})}^{2}}\left( \mathcal{L}_{\ell }(u_{1}),\mathcal{L}_{\ell }(u_{2})\right) =1,\quad
\forall u_{1},u_{2}\in \mathbb{R}.
\end{equation}
\end{theorem}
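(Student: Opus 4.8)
The plan is to reduce everything to the two $L^2$-approximations \paref{mrw}, \paref{lengthu} together with the orthogonality of Wiener chaoses of distinct orders. Since correlation is invariant under increasing affine maps, $\rho(\mathcal L_\ell(u_1),\mathcal L_\ell(u_2))=\E[\widetilde{\mathcal L}_\ell(u_1)\widetilde{\mathcal L}_\ell(u_2)]$, and the partial correlation equals the ordinary correlation of the regression residuals $\mathcal L_\ell(u_i)^*$. First I would upgrade \paref{lengthu} to $\|\widetilde{\mathcal L}_\ell(u)-\widetilde{\mathcal D}_\ell(u)\|_{L^2(\P)}\to0$ for $u\neq0$: since \paref{varu} and \paref{varDell} give $\Var(\mathcal D_\ell(u))\sim\Var(\mathcal L_\ell(u))\sim\frac{\pi^2}{2}u^4\e^{-u^2}\ell$, dividing $\mathcal L_\ell(u)-\E[\mathcal L_\ell(u)]-\mathcal D_\ell(u)$ by $\sqrt{\Var\mathcal L_\ell(u)}$ absorbs the $o(\ell)$ remainder, while swapping the two asymptotically equal normalisations costs only a factor tending to $1$. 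For $u=0$, \paref{mrw} already gives $\|\widetilde{\mathcal L}_\ell(0)-\widetilde{\mathcal M}_\ell\|_{L^2}=O((\log\ell)^{-1/2})\to0$.

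For \paref{eqteo1} the key structural fact, read off from \paref{D}, is that $\mathcal D_\ell(u)=c_\ell(u)\int_{\mathbb S^2}H_2(f_\ell)$ with $c_\ell(u)=\frac14\sqrt{\lambda_\ell/2}\,u^2\e^{-u^2/2}>0$ for every $u\neq0$; hence after standardisation $\widetilde{\mathcal D}_\ell(u)=\int_{\mathbb S^2}H_2(f_\ell)/\sqrt{\Var\int_{\mathbb S^2}H_2(f_\ell)}$ is the \emph{same} random variable $\widetilde{\mathcal D}_\ell$ for all nonzero $u$. For $u_1,u_2\neq0$ I would write $\widetilde{\mathcal L}_\ell(u_i)=\widetilde{\mathcal D}_\ell+R_\ell(u_i)$ with $\|R_\ell(u_i)\|_{L^2}\to0$ and expand $\E[\widetilde{\mathcal L}_\ell(u_1)\widetilde{\mathcal L}_\ell(u_2)]$: the leading term is $\E[\widetilde{\mathcal D}_\ell^2]=1$ and the three cross terms are $O(\max_i\|R_\ell(u_i)\|_{L^2})\to0$ by Cauchy--Schwarz, giving the limit $1$. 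For $u_1\neq0,u_2=0$ the point is instead that $\widetilde{\mathcal D}_\ell$ lies in the second chaos while $\widetilde{\mathcal M}_\ell\propto\int_{\mathbb S^2}H_4(f_\ell)$ lies in the fourth; as $\E[H_2(f_\ell(x))H_4(f_\ell(y))]=0$ for the jointly Gaussian pair $(f_\ell(x),f_\ell(y))$, they are orthogonal, and the same Cauchy--Schwarz expansion forces $\E[\widetilde{\mathcal L}_\ell(u_1)\widetilde{\mathcal L}_\ell(0)]\to0$.

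For \paref{eqteo2} I would first use Remark \ref{parseval}(ii), namely $Z-\E[Z]=\int_{\mathbb S^2}H_2(f_\ell)$, so that conditioning on $Z$ amounts to projecting $W-\E[W]$ onto the orthogonal complement of the second-chaos element $\int_{\mathbb S^2}H_2(f_\ell)$. The crucial exact identity is that $\mathcal D_\ell(u)$ is the \emph{full} second-chaos projection of $\mathcal L_\ell(u)$: although the chaos expansion of the Kac--Rice representation $\mathcal L_\ell(u)=\int_{\mathbb S^2}\delta_u(f_\ell)|\nabla f_\ell|$ also produces a second-chaos gradient term proportional to $\int_{\mathbb S^2}(|\nabla f_\ell|^2-\E|\nabla f_\ell|^2)$, the Helmholtz equation yields $\int_{\mathbb S^2}|\nabla f_\ell|^2=\lambda_\ell\int_{\mathbb S^2}f_\ell^2$, collapsing that term into a multiple of $\int_{\mathbb S^2}H_2(f_\ell)$ (see \cite{ROSSI2015}). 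Consequently the regression coefficient is exactly $\Cov(\mathcal L_\ell(u),Z)/\Var(Z)=c_\ell(u)$, the second chaos cancels identically, and $\mathcal L_\ell(u)^*$ equals the projection of $\mathcal L_\ell(u)$ onto all chaoses of order $\neq2$ (for $u=0$, $c_\ell(0)=0$ and $\mathcal L_\ell(0)^*=\mathcal L_\ell(0)-\E[\mathcal L_\ell(0)]$). Thus \paref{eqteo2} reduces to the claim that these residuals are asymptotically collinear across all levels.

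The hard part is the fourth-chaos analysis. I would compute $\mathcal L_\ell(u)[4]$ by projecting the product of the Hermite expansions of $\delta_u(f_\ell)$ and of $|\nabla f_\ell|$ onto the fourth chaos, obtaining, besides the pure $H_4(f_\ell)$ term underlying $\mathcal M_\ell$, mixed terms coupling $f_\ell$ with the normalised gradient components, whose $L^2$-sizes are controlled by moments $\int_{-1}^1 P_\ell(t)^aP_\ell'(t)^b\cdots\,dt$ of Legendre polynomials. The goal, and the main obstacle, is to prove that in $L^2(\P)$ and to leading order in $\ell$ one has $\mathcal L_\ell(u)[4]=\gamma_\ell(u)\,\mathcal M_\ell\,(1+o(1))$ with $\gamma_\ell(u)$ of constant sign, and that every other chaos in the residual is negligible: the odd-order projections vanish for $u=0$ by the symmetry $f_\ell\mapsto-f_\ell$ and should be of smaller order for general $u$, while $\|\sum_{q\geq6}\mathcal L_\ell(u)[q]\|_{L^2}=o(\|\mathcal L_\ell(u)[4]\|_{L^2})$. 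This is exactly the cancellation phenomenon established for $u=0$ in \cite{MRW}, which I would extend to arbitrary $u$, the delicate point being the asymptotic evaluation of the Legendre-moment integrals governing the gradient-type contributions. Granting it, $\mathcal L_\ell(u)^*/\|\mathcal L_\ell(u)^*\|_{L^2}\to\widetilde{\mathcal M}_\ell$ in $L^2$ for every level $u$, and the same Cauchy--Schwarz expansion as above yields $\rho(\mathcal L_\ell(u_1)^*,\mathcal L_\ell(u_2)^*)\to1$, proving \paref{eqteo2}.
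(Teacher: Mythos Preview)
Your proposal is correct and follows essentially the same route as the paper: for \paref{eqteo1} you use the $L^2$-approximations \paref{mrw}, \paref{lengthu} together with the orthogonality of the second and fourth Wiener chaoses, exactly as in the paper's proof; for \paref{eqteo2} you correctly identify, via the Green/Helmholtz identity, that $\mathcal D_\ell(u)$ is the full second-chaos projection so that the regression residual equals $\sum_{q\ge 3}\text{proj}[\mathcal L_\ell(u)\,|\,q]$, and you isolate as the ``hard part'' precisely the content of the paper's Propositions~\ref{varu4} and~\ref{propdom} (dominance of the fourth chaos, with the remaining chaoses contributing $O(1)$ via Legendre-moment estimates). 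The only cosmetic difference is bookkeeping: the paper treats $q=3$, $q=4$ and $q\ge 5$ separately rather than grouping odd orders with $q\ge 6$.
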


As an easy corollary of this theorem, we obtain the following joint
weak-convergence results for the normalized boundary and nodal length.

\begin{corollary}\label{cor1}
As $\ell \rightarrow \infty ,$ for all $u\neq 0$%
\[
\left( \widetilde{\mathcal{L}}_{\ell }(u),\widetilde{\mathcal{L}}_{\ell
}(0)\right) \rightarrow _{d}(Z_{1},Z_{2})\text{ ,}
\]%
where $(Z_{1},Z_{2})$ denotes a bivariate vector of standard, independent
Gaussian variables.
\end{corollary}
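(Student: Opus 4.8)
The plan is to reduce the joint convergence of the two standardized lengths to a statement about vectors of Wiener chaoses and then invoke a multivariate fourth-moment theorem. First I would combine the two $L^2$-approximations already recorded above: by \paref{lengthu} one has $\widetilde{\mathcal L}_\ell(u)=\widetilde{\mathcal D}_\ell(u)+o_{\mathbb P}(1)$, and by \paref{mrw} one has $\widetilde{\mathcal L}_\ell(0)=\widetilde{\mathcal M}_\ell+o_{\mathbb P}(1)$. Since adding to a convergent-in-distribution $\mathbb R^2$-valued sequence a pair of remainders that tend to $0$ in probability does not affect the limiting law (Slutsky's theorem in $\mathbb R^2$), it suffices to prove that
\[
(\widetilde{\mathcal D}_\ell(u),\ \widetilde{\mathcal M}_\ell)\ \rightarrow_d\ (Z_1,Z_2),
\]
with $Z_1,Z_2$ independent standard Gaussians.

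Next I would exploit the chaotic structure of the two approximating variables. The field $f_\ell$ is a linear functional of the Gaussian coefficients $\{a_{\ell m}\}$, hence lies in the first Wiener chaos; consequently $\mathcal D_\ell(u)$, being a constant multiple of $\int_{\mathbb S^2}H_2(f_\ell)$, belongs to the \emph{second} Wiener chaos, while $\mathcal M_\ell$, a constant multiple of $\int_{\mathbb S^2}H_4(f_\ell)$, belongs to the \emph{fourth} Wiener chaos. Because distinct Wiener chaoses are orthogonal in $L^2(\Omega)$, we obtain $\Cov(\mathcal D_\ell(u),\mathcal M_\ell)=0$ \emph{exactly}, for every $\ell$; therefore the covariance matrix of the standardized vector $(\widetilde{\mathcal D}_\ell(u),\widetilde{\mathcal M}_\ell)$ equals the identity $I_2$ for all $\ell$. (This exact orthogonality is the same mechanism responsible for the vanishing correlation in \paref{eqteo1} when $u_2=0$.)

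Finally I would appeal to the multivariate fourth-moment theorem of Peccati and Tudor: for a sequence of random vectors whose components live in fixed Wiener chaoses, componentwise convergence to Gaussian limits together with convergence of the covariance matrix to some $\Sigma$ implies joint convergence to the centred Gaussian law $N(0,\Sigma)$. The two marginal CLTs are already available — for $\widetilde{\mathcal M}_\ell$ from \paref{mrw} together with the univariate fourth-moment theorem, and for $\widetilde{\mathcal D}_\ell(u)$ (equivalently the normalized sample power spectrum) from Remark~\ref{parseval}(ii) and \paref{lengthu}. Since here $\Sigma=I_2$, the limit is the law of two independent standard Gaussians, which is exactly the assertion.

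The main obstacle is conceptual rather than computational: one must legitimately pass from the two separate marginal CLTs to the \emph{joint} statement, and the observation that makes this painless is that the two relevant functionals sit in different even chaoses, so their covariance is not merely asymptotically but identically zero. Given the individual CLTs, the only genuine work left is to verify the hypotheses of the multivariate fourth-moment theorem and to check that the $o_{\mathbb P}(1)$ remainders in \paref{mrw} and \paref{lengthu} are absorbed jointly (Slutsky in $\mathbb R^2$), neither of which requires any estimate beyond those quoted above.
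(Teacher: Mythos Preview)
Your proposal is correct and is precisely the natural argument the paper has in mind: the authors do not spell out a proof of this corollary, presenting it only as an ``easy'' consequence of the preceding results, but the ingredients they would use are exactly the ones you invoke --- the $L^2$-reductions \paref{lengthu} and \paref{mrw} to $\widetilde{\mathcal D}_\ell(u)$ and $\widetilde{\mathcal M}_\ell$, the orthogonality of the second and fourth chaoses giving exact vanishing of the cross-covariance, and the Peccati--Tudor multivariate fourth-moment theorem to upgrade the two marginal CLTs (already recorded in the paper after \paref{mrw} and after Remark~\ref{parseval}) to joint convergence with limit $N(0,I_2)$.
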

Corollary \ref{cor1} states that the limiting distribution of the nodal
and boundary length (at non-zero level) are independent, in the
high energy limit. As motivated above, this results is substantially
spurious, as it depends crucially on the dominant role played in the
boundary length behaviour by the random energy of the eigenfunction. Taking
the effect into account, the landscape changes entirely; indeed, consider
the ``regression residual"%
\begin{equation}\label{resC}
\mathcal{L}_{\ell |\widehat{C}_{\ell }}(u):=\mathcal{L}_{\ell }(u)-\mathbb{E}
[\mathcal{L}_{\ell }(u)]-\frac{\Cov(\mathbb{\mathcal{L}_{\ell }}(u),
\widehat{C}_{\ell })}{\Var(\widehat{C}_{\ell })}\left (\widehat{C}_{\ell }-\mathbb{E}[\widehat{C}_{\ell }] \right ),
\end{equation}
which we again normalize by taking%
\[
\widetilde{\mathcal{L}}_{\ell |\widehat{C}_{\ell }}(u):=\frac{\mathcal{L}%
_{\ell |\widehat{C}_{\ell }}(u)}{\sqrt{\Var\left( \mathcal{L}_{\ell |%
\widehat{C}_{\ell }}(u)\right) }}.
\]%
We have then the next, fully degenerate, convergence result.

\begin{corollary}
As $\ell \rightarrow \infty ,$ for all $u\in \mathbb{R}$
\[
\left( \widetilde{\mathcal{L}}_{\ell |\widehat{C}_{\ell }}(u),\widetilde{%
\mathcal{L}}_{\ell |\widehat{C}_{\ell }}(0)\right) \rightarrow _{d}(Z,Z)%
\text{ ,}
\]%
where $Z$ denotes a standard Gaussian variable.
\end{corollary}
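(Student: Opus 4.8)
The plan is to deduce the statement from Theorem \ref{mainth}, specifically the partial-correlation limit \paref{eqteo2}, combined with the marginal central limit theorem for the nodal length. First I would record that conditioning on $\left\Vert f_{\ell }\right\Vert_{L^{2}(\mathbb S^{2})}^{2}=4\pi\widehat{C}_{\ell }$ is the same as conditioning on $\widehat{C}_{\ell }$: the residual \paref{explanatory} depends on the explanatory variable only through the affine span of $\{1,Z\}$, so the residuals $\mathcal{L}_{\ell |\widehat{C}_{\ell }}(u)$ and their normalizations are unchanged when $\widehat{C}_{\ell }$ is replaced by $\left\Vert f_{\ell }\right\Vert_{L^{2}(\mathbb S^{2})}^{2}$. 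Consequently \paref{eqteo2}, taken with $u_1=u$ and $u_2=0$, reads exactly
\[
\rho\left(\widetilde{\mathcal{L}}_{\ell |\widehat{C}_{\ell }}(u),\widetilde{\mathcal{L}}_{\ell |\widehat{C}_{\ell }}(0)\right)=\rho_{\widehat{C}_{\ell }}\left(\mathcal{L}_{\ell }(u),\mathcal{L}_{\ell }(0)\right)\longrightarrow 1.
\]
Since both normalized residuals have, by construction, mean zero and unit variance, asymptotically unit correlation is equivalent to $L^{2}$-closeness, namely
\[
\mathbb{E}\left[\left|\widetilde{\mathcal{L}}_{\ell |\widehat{C}_{\ell }}(u)-\widetilde{\mathcal{L}}_{\ell |\widehat{C}_{\ell }}(0)\right|^{2}\right]=2\left(1-\rho\right)\longrightarrow 0,
\]
so that $\widetilde{\mathcal{L}}_{\ell |\widehat{C}_{\ell }}(u)=\widetilde{\mathcal{L}}_{\ell |\widehat{C}_{\ell }}(0)+o_{\mathbb P}(1)$.

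Next I would pin down the marginal limit of the conditioned nodal residual at $u=0$. Here $\mathcal{D}_\ell(0)=0$ by \paref{D}, and, crucially, the nodal length carries no second-order chaotic component (cf. \cite{Wig,MRW}); since by ii) of Remark \ref{parseval} we have $\widehat{C}_{\ell }-\mathbb{E}[\widehat{C}_{\ell }]=\tfrac{1}{4\pi}\int_{\mathbb S^2}H_2(f_\ell(x))\,dx$, which lives entirely in the second Wiener chaos, it follows that $\Cov(\mathcal{L}_{\ell }(0),\widehat{C}_{\ell })=0$. Hence the regression \paref{resC} is trivial at $u=0$ and
\[
\widetilde{\mathcal{L}}_{\ell |\widehat{C}_{\ell }}(0)=\widetilde{\mathcal{L}}_{\ell }(0).
\]
By \paref{mrw} one has $\widetilde{\mathcal{L}}_{\ell }(0)=\widetilde{\mathcal{M}}_{\ell }+o_{\mathbb P}(1)$, and the fourth moment theorem applied to the sample trispectrum (via \cite[Theorem 1.1]{MW2014}, equivalently the marginal already contained in Corollary \ref{cor1}) gives $\widetilde{\mathcal{M}}_{\ell }\to_d Z$ with $Z$ standard Gaussian; therefore $\widetilde{\mathcal{L}}_{\ell |\widehat{C}_{\ell }}(0)\to_d Z$.

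Finally I would combine the two steps by a Slutsky argument: writing the pair as $\big(\widetilde{\mathcal{L}}_{\ell |\widehat{C}_{\ell }}(0)+o_{\mathbb P}(1),\,\widetilde{\mathcal{L}}_{\ell |\widehat{C}_{\ell }}(0)\big)$ and using $\widetilde{\mathcal{L}}_{\ell |\widehat{C}_{\ell }}(0)\to_d Z$, the joint law converges to the degenerate vector $(Z,Z)$; the case $u=0$ is immediate since the two coordinates then coincide. The genuine analytic content is entirely absorbed into Theorem \ref{mainth}, so the one point of the corollary that requires care is the exact vanishing of the second chaotic projection of $\mathcal{L}_{\ell }(0)$: this is what collapses the conditioned nodal residual onto the centered nodal length and thereby transfers the known nodal CLT to the degenerate limit. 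The remaining ingredients — the conversion of asymptotically unit correlation into $L^2$-equivalence of the standardized residuals, and the Slutsky combination yielding joint convergence — are standard.
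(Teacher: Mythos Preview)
Your proof is correct and follows precisely the route the paper intends: the corollary is stated without a separate proof because it is meant to be read off from \paref{eqteo2} together with the nodal CLT, exactly as you do. The key observations you isolate---that the residual is unchanged under affine reparametrizations of the conditioning variable, that unit correlation between standardized variables forces $L^{2}$-equivalence, and that $\text{proj}[\mathcal L_\ell(0)\mid 2]=0$ so $\widetilde{\mathcal L}_{\ell\mid\widehat C_\ell}(0)=\widetilde{\mathcal L}_\ell(0)$---are the ones the paper relies on implicitly.
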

In short, boundary (non-zero level) and nodal length are asymptotically
independent, meaning also that the nodal length carries no information on
other functionals such as the excursion area, the Euler-Poincar\'{e}
characteristic or the number of critical points above a given threshold (see Remark \ref{parseval}).
This result, however, must be interpreted with great care: it is due to the
dominant role played by the sample norm in the behaviour of excursion sets.
When this effect is properly subtracted, the behaviour of length
fluctuations at any level are fully explained by the nodal length, in the
high-energy limit, and the joint distributions are completely degenerate.
Thus indeed the nodal lengths are asymptotically sufficient (in the high-energy limit) to characterize the measure of
the boundary at any threshold level, provided that the effect of random
fluctuations in the norm are properly taken into account. We refer to \cite{Fantaye} for some numerical evidence on these and related issues.

%The values of the multipoles for which this asymptotic regime will yield a good approximation are however expected to be quite large, as %the convergence to unity occurs only at a logarithmic rate.

\subsection*{Acknowledgements}
We are grateful to Igor Wigman for suggesting this question and related topics for research, and for many insightful conversations. The research by DM was supported by the MIUR Excellence Department Project awarded to the Department of Mathematics, University of Rome Tor Vergata, CUP E83C18000100006; the research of MR was supported by the FSMP, the ANR-17-CE40-0008 project \emph{Unirandom} and the PRA 2018 49 project at the University of Pisa.

\section{Outline of the paper and proof of the main result}\label{Soutline}

The main ingredient behind our proofs is a neat series
representation (chaotic decomposition) of the length and a consequent careful investigation of its chaotic components. In particular, as briefly anticipated, the results recalled in \S\ref{Snodal} and \S \ref{Sboundary} can be interpreted in terms of Wiener-It\^o theory. Let us first recall the notion of Wiener chaos, restricting ourselves to our specific setting on the sphere (see  \cite[\S 2.2]{noupebook} and the references therein for a complete discussion).

\subsection{Wiener chaos}\label{subWiener}

Let us consider the sequence $\{H_k\}_{k\in \mathbb N}$ of Hermite polynomials on $\mathbb{R}$; these are defined as follows: $H_0 \equiv 1$ and
 $$H_{k}(t) := (-1)^k \gamma^{-1}(t) \frac{d^k}{dt^k} \gamma(t), \qquad k\in \mathbb N^*,$$
 where  $\gamma$ denotes the standard Gaussian density on $\mathbb R$.
The family $\mathbb{H} := \{(k!)^{-1/2} H_k, k\ge 0\}$ is a complete orthonormal system in the space of functions $L^2(\mathbb R, \mathcal B(\mathbb R), \gamma(t)dt)=:L^2(\gamma)$.
Bearing in mind \paref{fell}, we define the space ${\mathcal X}$ to be the closure in $L^2(\mathbb{P})$ of all real finite linear combinations of random variables $\xi$  of the form $\xi = r \, a_{\ell,0}$ for some $r\in \mathbb R$ or for $m\neq 0$  $$\xi = z \, a_{\ell,m} + \overline{z} \, (-1)^m a_{\ell, -m},\qquad z\in \mathbb{C},$$ thus ${\mathcal X}$ is a real centered Gaussian Hilbert subspace
of $L^2(\mathbb{P})$.
Let $q\ge 0$ be  an integer; we define the $q$-th Wiener chaos $C_q$ associated with ${\mathcal X}$ as the closure in $L^2(\mathbb{P})$ of all real finite linear combinations of random variables of the type
$$
H_{p_1}(\xi_1)\cdot H_{p_2}(\xi_2)\cdots H_{p_k}(\xi_k),\qquad k\ge 1,
$$
where  $p_1,...,p_k \in \mathbb N$ are such that $p_1+\cdots+p_k = q$, and $(\xi_1,...,\xi_k)$ is a standard real Gaussian vector extracted
from ${\mathcal X}$ (in particular, $C_0 = \mathbb{R}$).

The orthonormality and completeness of $\mathbb{H}$ in $L^2(\gamma)$, together with a standard monotone class argument \cite[Theorem 2.2.4]{noupebook}, implies that $C_q \,\bot\, C_m$ in  $L^2(\mathbb{P})$ for every $q\neq m$, and
\begin{equation*}
L^2(\Omega, \sigma({\mathcal X}), \mathbb{P}) = \bigoplus_{q=0}^\infty C_q,
\end{equation*}
that is, every square integrable real-valued functional $F$ of ${\mathcal X}$ can be (uniquely) represented as a series, converging in $L^2(\mathbb P)$, of the form
\begin{equation}\label{e:chaos2}
F %= \sum_{q=0}^\infty F[q]
= \sum_{q=0}^\infty \text{proj}[F| q],
\end{equation}
where $\text{proj}[F| q]:={\rm proj}(F \, | \, C_q)$ stands for the the projection of $F$ onto $C_q$ ($\text{proj}[F| 0]={\rm proj}(F \, | \, C_0) = \E [F]$ since $C_0=\mathbb R$).

\subsection{Chaotic expansions for lengths}\label{Swiener}

The perimeter of the boundary of excursion sets on the sphere can be (formally) written as
\begin{equation}\label{rapSphere}
\mathcal L_\ell(u) = \int_{\mathbb S^2} \delta_u(f_\ell(x)) \|\nabla f_\ell(x)\|\,dx
%=\sqrt{\frac{\ell(\ell+1)}{2}}\int_{\mathbb S^2} \delta_0(f_\ell(x)) \|\widetilde \nabla f_\ell(x)\|\,dx,
\end{equation}
where $\delta_u$ denotes the Dirac mass in $u$, $\nabla f_\ell$ the gradient field and $\| \cdot \|$ the Euclidean norm in $\R^2$.
Indeed, let us consider the $\eps$-approximating random variable ($\eps >0$)
$$
\mathcal L_\ell^\eps(u) := \frac{1}{2\eps} \int_{\mathbb S^2} 1_{[u-\eps, u+\eps]}(f_\ell(x))\, \| \nabla f_\ell(x)\|\, dx,
$$
where $1_{[u-\eps, u+\eps]}$ denotes the indicator function of the interval $[u-\eps, u+\eps]$. We have the following result whose proof is similar to the one given in the nodal case (for details see \cite[Appendix B]{MRW}), and hence omitted.
\begin{lemma}\label{asconv}
For $u\in \mathbb R$,
\begin{equation*}
\lim_{\eps \to 0} \mathcal L_\ell^\varepsilon(u) = \text{length}(f_\ell^{-1}(u)),
\end{equation*}
both a.s. and in $L^2(\P)$.
\end{lemma}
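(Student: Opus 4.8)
The plan is to establish the convergence $\mathcal L_\ell^\eps(u) \to \text{length}(f_\ell^{-1}(u))$ both almost surely and in $L^2(\P)$ by adapting the nodal-case argument of \cite[Appendix B]{MRW} to arbitrary level $u$. First I would record the two structural facts that make the approximation legitimate. On the one hand, $f_\ell$ is almost surely smooth (being a finite linear combination of spherical harmonics) and, by isotropy and non-degeneracy of the Gaussian field $(f_\ell(x),\nabla f_\ell(x))$, the level $u$ is almost surely a regular value of $f_\ell$, so that $f_\ell^{-1}(u)$ is a.s. a smooth one-dimensional submanifold of $\mathbb S^2$ with finite length. On the other hand, the co-area formula applied to the smooth function $f_\ell$ gives, for any bounded measurable test weight, control of the integral against $\|\nabla f_\ell\|$ in terms of level-set lengths.

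The core of the almost-sure statement is the co-area (Kac--Rice type) identity: for a.e. realization,
\[
\mathcal L_\ell^\eps(u) = \frac{1}{2\eps}\int_{\mathbb S^2} 1_{[u-\eps,u+\eps]}(f_\ell(x))\,\|\nabla f_\ell(x)\|\,dx = \frac{1}{2\eps}\int_{u-\eps}^{u+\eps} \text{length}(f_\ell^{-1}(t))\,dt.
\]
The function $t\mapsto \text{length}(f_\ell^{-1}(t))$ is continuous at $t=u$ for a.e.\ realization (again because $u$ is a.s.\ a regular value, so nearby level sets vary smoothly), and hence the right-hand side, being the average of this function over $[u-\eps,u+\eps]$, converges to $\text{length}(f_\ell^{-1}(u))$ as $\eps\to 0$ by the Lebesgue differentiation theorem. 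This yields the a.s.\ convergence.

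For the $L^2(\P)$ convergence I would first obtain a uniform-in-$\eps$ second-moment bound $\sup_{0<\eps\le 1}\E[(\mathcal L_\ell^\eps(u))^2]<\infty$, which supplies uniform integrability of the squared approximants; combined with the a.s.\ convergence this upgrades the limit to convergence in $L^2(\P)$ via Vitali's theorem. The uniform bound follows from a Kac--Rice computation for the second factorial moment of $\mathcal L_\ell^\eps(u)$, exactly as in the nodal case, using that the Gaussian field $f_\ell$ together with its gradient has a non-degenerate covariance structure so that the relevant conditional densities are bounded near the diagonal. The main obstacle is precisely this second-moment control: one must check that the singularity of the Kac--Rice density along the diagonal $x=y$ is integrable uniformly as $\eps\to 0$, which for a fixed $\ell$ is a finite-dimensional and standard (if technical) estimate, and which is why \cite{MRW} can treat the nodal case $u=0$ and the argument carries over verbatim to $u\neq 0$ with only the level shifted. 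Hence the proof reduces to invoking \cite[Appendix B]{MRW}, and the statement follows.
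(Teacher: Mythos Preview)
Your sketch matches the paper's treatment: the paper omits the proof entirely and simply refers to \cite[Appendix B]{MRW} for the nodal case, noting that the argument carries over unchanged to general $u$; your co-area/regular-value argument for the a.s.\ part is exactly that standard route. One small technical slip in the $L^2$ step: a uniform bound on $\E[(\mathcal L_\ell^\eps(u))^2]$ yields uniform integrability of $\{\mathcal L_\ell^\eps(u)\}$, not of its squares, so Vitali as stated only gives $L^1$ convergence; to reach $L^2$ you should either bound a moment of order $>2$ uniformly, or (as in \cite{MRW}) show via dominated convergence in the Kac--Rice second-moment formula that $\E[(\mathcal L_\ell^\eps(u))^2]\to \E[(\mathcal L_\ell(u))^2]$ and combine this with a.s.\ convergence.
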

Lemma \ref{asconv} justifies \paref{rapSphere}. By a differentiation of  \eqref{fell} of $f_\ell$ it is easy to see that the random fields $f_\ell$ and the components of $\nabla f_\ell$, viewed as collections of
Gaussian random variables
indexed by $x\in\mathbb S^2$, are all lying in ${\mathcal X}$, hence $\mathcal L^\varepsilon_\ell(u), \mathcal L_\ell(u)\in L^2(\Omega, \sigma({\mathcal X}), \mathbb{P})$.

From the chaotic expansion of $\mathcal L_\ell^\eps(u)$ it is easy to obtain those of $\mathcal L_\ell(u)$ by letting $\eps$ go to zero. In order to recall the chaotic expansion (\ref{e:chaos2})
\begin{equation}\label{chaos_decomp}
\mathcal L^\eps_\ell(u) = \sum_{q=0}^{+\infty}  \text{proj}[\mathcal L_\ell^\eps(u) | {q}]
\end{equation}
for $\mathcal L_\ell^\eps(u)$,
let us first write
\begin{equation}\label{rapSphere2}
\mathcal L_\ell^\eps(u) =\frac{1}{2\eps} \sqrt{\frac{\ell(\ell+1)}{2}}\int_{\mathbb S^2} 1_{[u-\eps,u+\eps]}(f_\ell(x)) \|\widetilde \nabla f_\ell(x)\|\,dx,
\end{equation}
where  $\widetilde \nabla$ is the normalized gradient, i.e. $\widetilde \nabla:= \nabla / \sqrt{\frac{2}{\ell(\ell +1)}}$. (The variance of each component of $\nabla f_\ell(x)$ is $\ell(\ell +1)/2$.) Note that for each $x\in \mathbb S^2$, the random variables $f_\ell(x), \nabla f_\ell(x)$ are independent, and the components of $\nabla f_\ell(x)$ are independent as well.

In \cite[Proposition 7.2.2]{ROSSI2015} -- see \cite{GAFA2016, MRW} for the nodal case -- the terms of the series on the r.h.s. of \paref{chaos_decomp}
are explicitly given.
Let us introduce now two sequences of real numbers $\lbrace \beta^\eps_{k}(u)\rbrace_{k=0}^{+\infty}$ and $\lbrace \alpha_{2n,2m}\rbrace_{n,m=0}^{+\infty}$ corresponding to the chaotic coefficients of the indicator function $1_{[u-\eps,u+\eps]}(\cdot)/(2\eps)$ and the Euclidean norm respectively: for  $k=0, 1, 2, \dots$,
\begin{equation}\label{beta}
\begin{split}
\beta^\eps_{k}(u) &:= \frac{1}{2\eps} \int_{u-\eps}^{u+\eps} H_k(t)\gamma(t)\,dt \ \mathop{\longrightarrow}^{\eps\to 0}\  \gamma(u) H_k(u) =: \beta_k(u),
\end{split}
\end{equation}
where $\gamma$ still denotes the standard Gaussian density,
while for $n,m=0, 1, 2, \dots$
\begin{equation}\label{alpha}
\alpha _{2n,2m}:=\sqrt{\frac{\pi }{2}}\frac{(2n)!(2m)!}{n!m!}\frac{1}{2^{n+m}%
}p_{n+m}\left(\frac{1}{4}\right),
\end{equation}%
 $p_{N}$ denoting the swinging factorial coefficient
\begin{equation*}
p_{N}(x):=\sum_{j=0}^{N}(-1)^{j}(-1)^{N}\left(
\begin{array}{c}
N \\
j%
\end{array}%
\right) \frac{(2j+1)!}{(j!)^{2}}x^{j}.
\end{equation*}
Note that, if $u=0$, the coefficient $\beta^\eps_{k}(0)=0$ (and hence $\beta_k(0)=0$) whenever $k$ is odd.
%The first few terms are
%\begin{equation}\label{first_coeff}
%\begin{split}
%\beta _{0}&=\frac{1}{\sqrt{2\pi }},\quad \text{ }%\beta _{2}=-\frac{1}{\sqrt{%
%2\pi }},\quad \beta _{4}=\frac{3}{\sqrt{2\pi }},\\
%\alpha _{00}&=\sqrt{\frac{\pi }{2}},\quad  \alpha _{02}=\frac{1}{2}\sqrt{%
%\frac{\pi }{2}},\quad \alpha _{04}=-\frac{3}{8}\sqrt{\frac{\pi }{2}}.
%\end{split}
%\end{equation}%
The proof of the following is analogous to the one given in the nodal case (see \cite[\S 2]{MRW} and \cite[Proposition 7.2.2]{ROSSI2015}) and hence omitted.
\begin{proposition}\label{prop_exp}
The chaotic expansion \paref{e:chaos2} of the approximate length is
\begin{equation}\label{chaos_decomp_sphere_eps}
\begin{split}
\mathcal{L}_{\ell }^\eps(u) =\sum_{q=0}^{+\infty}  \text{proj}[\mathcal L_\ell^\eps(u)| {q}] & =   \sum_{q=0}^{\infty }\sqrt{\frac{\ell (\ell
+1)}{2}}\sum_{2a + 2b + c = q}\frac{\alpha
_{2a,2b}\beta _{c}^\eps(u)}{(2a)!(2b)!c!}\times \\
&\times \int_{\mathbb{S}^{2}}H_{c}(f_{\ell }(x))H_{2a}(\widetilde \partial
_{1;x}f_{\ell }(x))H_{2b}(\widetilde \partial
_{2;x}f_{\ell }(x))\,dx,
\end{split}
\end{equation}
where we use spherical coordinates (colatitude $\theta ,$ longitude $%
\varphi $) and for $x=(\theta _{x},\varphi _{x})$ we are using the notation
\begin{equation*}
\widetilde \partial _{1;x}=\left. (\ell(\ell+1)/2)^{-1/2}\cdot \frac{\partial }{\partial \theta }\right\vert
_{\theta =\theta _{x}},\quad \widetilde \partial _{2;x}= (\ell(\ell+1)/2)^{-1/2}\cdot\left. \frac{1}{\sin \theta }%
\frac{\partial }{\partial \varphi }\right\vert _{\theta =\theta _{x},\varphi,
=\varphi _{x}},
\end{equation*}
and the chaotic coefficients are as in \paref{beta} and \paref{alpha}. By letting $\eps \to 0$ we find
\begin{equation}\label{chaos_decomp_sphere}
\begin{split}
\mathcal{L}_{\ell }(u) =\sum_{q=0}^{+\infty}  \text{proj}[\mathcal L_\ell(u)| {q}] & =   \sum_{q=0}^{\infty }\sqrt{\frac{\ell (\ell
+1)}{2}}\sum_{2a + 2b + c = q}\frac{\alpha
_{2a,2b}\beta _{c}(u)}{(2a)!(2b)!c!}\times \\
&\times \int_{\mathbb{S}^{2}}H_{c}(f_{\ell }(x))H_{2a}(\widetilde \partial
_{1;x}f_{\ell }(x))H_{2b}(\widetilde \partial
_{2;x}f_{\ell }(x))\,dx.
\end{split}
\end{equation}
\end{proposition}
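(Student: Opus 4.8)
The plan is to expand the two factors appearing in the integrand of \paref{rapSphere2} into their respective Hermite series, multiply them, collect terms of equal total degree to read off the chaotic projections, integrate over $\mathbb S^2$, and finally pass to the limit $\eps\to 0$. The whole argument rests on the pointwise independence structure noted after \paref{rapSphere2}: for each fixed $x\in\mathbb S^2$ the triple $(f_\ell(x),\widetilde\partial_{1;x}f_\ell(x),\widetilde\partial_{2;x}f_\ell(x))$ is a standard Gaussian vector with independent components. This is what forces a product of Hermite polynomials in these three variables to lie in a single Wiener chaos.

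First I would expand each factor in the orthonormal Hermite basis of $L^2(\gamma)$. For the rescaled indicator, as a function of the single variable $f_\ell(x)$,
\[
\frac{1}{2\eps}1_{[u-\eps,u+\eps]}(f_\ell(x)) = \sum_{c=0}^\infty \frac{\beta_c^\eps(u)}{c!}\,H_c(f_\ell(x)),
\]
with $\beta_c^\eps(u)$ as in \paref{beta}, while for the Euclidean norm, as a function of the two independent standard Gaussians $\widetilde\partial_{1;x}f_\ell(x)$ and $\widetilde\partial_{2;x}f_\ell(x)$,
\[
\|\widetilde\nabla f_\ell(x)\| = \sum_{a,b=0}^\infty \frac{\alpha_{2a,2b}}{(2a)!(2b)!}\,H_{2a}(\widetilde\partial_{1;x}f_\ell(x))\,H_{2b}(\widetilde\partial_{2;x}f_\ell(x)),
\]
with $\alpha_{2a,2b}$ as in \paref{alpha}; only even indices survive since the norm is even in each argument. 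I would then multiply these two series. By the independence above, each product $H_c(f_\ell(x))H_{2a}(\widetilde\partial_{1;x}f_\ell(x))H_{2b}(\widetilde\partial_{2;x}f_\ell(x))$ is a product of Hermite polynomials evaluated at independent standard Gaussians, hence belongs to $C_{2a+2b+c}$; grouping the terms with $2a+2b+c=q$ produces exactly $\text{proj}[\cdot\,|\,q]$ of the integrand. Multiplying by the deterministic factor $\sqrt{\ell(\ell+1)/2}$ and integrating over $\mathbb S^2$ yields \paref{chaos_decomp_sphere_eps}, once one observes that integration against $dx$ is an $L^2(\mathbb P)$-limit of finite linear combinations and therefore preserves each $C_q$, so that the $q$-th projection of $\mathcal L_\ell^\eps(u)$ equals the integral of the $q$-th projection of the integrand.

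It remains to take $\eps\to 0$. By Lemma \ref{asconv}, $\mathcal L_\ell^\eps(u)\to\mathcal L_\ell(u)$ in $L^2(\mathbb P)$; since each projection $\text{proj}[\cdot\,|\,q]$ is a continuous orthogonal operator on $L^2(\mathbb P)$, the $q$-th projection converges accordingly, and the pointwise convergence $\beta_c^\eps(u)\to\beta_c(u)=\gamma(u)H_c(u)$ recorded in \paref{beta} identifies the limiting coefficients, giving \paref{chaos_decomp_sphere}. The orthogonality of distinct chaoses guarantees that the grouping by total degree $q$ genuinely recovers the decomposition \paref{e:chaos2}.

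The main obstacle is not the algebra but the two convergence interchanges carried out at the level of the full series rather than projection-by-projection: justifying the term-by-term integration requires a Fubini argument controlling the $L^2(\mathbb P)$ tails of the chaotic expansion uniformly in $x$ (equivalently, summability of the integrated truncation tails, using $\mathcal L_\ell^\eps(u)\in L^2(\mathbb P)$), and the passage $\eps\to 0$ requires a dominated-convergence control on $\sum_q\|\text{proj}[\mathcal L_\ell^\eps(u)\,|\,q]\|_{L^2(\mathbb P)}^2$ to ensure no mass escapes to high-order chaoses as $\eps\to 0$. The explicit evaluation of $\alpha_{2a,2b}$ in terms of the swinging factorial coefficient $p_{a+b}(1/4)$ underlying \paref{alpha} is the other computationally delicate point, but it is exactly the computation carried out in the cited nodal-case analysis and may be imported wholesale here.
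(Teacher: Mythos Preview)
Your proposal is correct and follows exactly the standard route the paper has in mind: the paper omits the proof and refers to \cite[\S 2]{MRW} and \cite[Proposition 7.2.2]{ROSSI2015}, where the argument is precisely the one you sketch --- expand $\frac{1}{2\eps}1_{[u-\eps,u+\eps]}$ and $\|\cdot\|$ separately in Hermite polynomials, exploit the pointwise independence of $(f_\ell(x),\widetilde\partial_{1;x}f_\ell(x),\widetilde\partial_{2;x}f_\ell(x))$ to identify each product with a single chaos, integrate, and let $\eps\to 0$ using Lemma \ref{asconv}. Your flagged obstacles (the Fubini-type interchange and the $\eps\to 0$ dominated-convergence control) are exactly the technical points handled in those references, so there is nothing to add.
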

%\begin{lemma}
%Let $\mathcal L_\ell$ denote the nodal length of random spherical harmonics. Using the notation introduced in \paref{rapSphere} we have
%\begin{equation}
%\mathcal L_\ell[0] =  \sqrt{\frac{\ell(\ell+1)}{2}} \beta_0 \alpha_{00}.
%\end{equation}
%\begin{equation}\label{2chaos}
%\begin{split}
%\mathcal L_\ell[2] = \sqrt{\frac{\ell(\ell+1)}{2}}\Big ( & \frac{\beta_2 \alpha_{00}}{2!}\int_{\mathbb S^2} H_2(F_\ell(x))\,dx \\
%&+ \frac{\beta_0 \alpha_{20}}{2!}\int_{\mathbb S^2} H_2(\widetilde \partial_1F_\ell(x))\,dx  + \frac{\beta_0 \alpha_{02}}{2!}\int_{\mathbb S^2} H_2(\widetilde \partial_2 F_\ell(x))\,dx \Big )
%\end{split}
%\end{equation}
%\begin{equation}
%\begin{split}
%\mathcal L_\ell[4] = \sqrt{\frac{\ell(\ell+1)}{2}}\Big ( & \frac{\beta_4 \alpha_{00}}{4!}\int_{\mathbb S^2} H_4(F_\ell(x))\,dx + \frac{\beta_0 \alpha_{40}}{4!}\int_{\mathbb S^2} H_4(\widetilde \partial_1F_\ell(x))\,dx  + \frac{\beta_0 \alpha_{04}}{4!}\int_{\mathbb S^2} H_4(\widetilde \partial_2 F_\ell(x))\,dx\\
%&+ \frac{\beta_2 \alpha_{20}}{2! 2!}\int_{\mathbb S^2} H_2(F_\ell(x)) H_2(\widetilde \partial_1F_\ell(x))\,dx  + \frac{\beta_2 \alpha_{02}}{2! 2!}\int_{\mathbb S^2} H_2(F_\ell(x))H_2(\widetilde \partial_2 F_\ell(x))\,dx\\
%& + \frac{\beta_0 \alpha_{22}}{2! 2!}\int_{\mathbb S^2} H_2(\widetilde \partial_1 F_\ell(x))H_2(\widetilde \partial_2 F_\ell(x))\,dx\Big ).
%\end{split}
%\end{equation}
%\end{lemma}
As it will be clearer later, it suffices to deal with the first few terms of the series in \paref{chaos_decomp_sphere}. For every $u\in \mathbb R$,
\begin{equation}\label{chaos0}
\text{proj}[\mathcal{L}_{\ell }(u)|0]=\gamma (u)\sqrt{ \frac{\lambda _{\ell }}{2}}
4\pi\sqrt{\frac{\pi }{2}} = \mathbb E[\mathcal L_\ell(u)],
\end{equation}
cf. \paref{mediau},
\begin{equation}\label{chaos1}
\text{proj}[\mathcal{L}_{\ell }(u)|1] = 0
\end{equation}
since spherical harmonics have zero mean on the sphere, and thanks to Green formula
(see \cite[Proposition 7.3.1]{ROSSI2015})
\begin{equation}\label{chaos2}
\text{proj}[\mathcal{L}_{\ell }(u)|2]=\sqrt{ \frac{\lambda _{\ell }}{2}}\sqrt{\frac{\pi }{2}} u^{2}\phi (u) \frac{1}{2}\int_{%
\mathbb{S}^{2}}H_{2}(f_{\ell }(x))\,dx,
\end{equation}
cf. \paref{D}; notice that the right-hand side of \paref{chaos2} is identically equal to zero if and only if $u=0$ (see also \cite{MRW}).
Obviously, if $u=0$, the term $\text{proj}[\mathcal{L}_{\ell }(0)|q]$ vanishes whenever $q$ is odd.

\smallskip

We are in a position to make some more comments, comparing results recalled in \S\ref{Snodal} and \S\ref{Sboundary}, and the theory of chaotic decompositions exploited above: the leading term in the $L^{2}(\mathbb P)$-expansion of the boundary length
around its expected value is provided by its orthogonal projection on the second-order
Wiener chaos \paref{chaos2}, rather than the fourth as in the nodal case (see \paref{M} and \paref{mrw} and compare with Proposition \ref{propdom});
as a consequence, the asymptotic variance is of order $\ell ,$ rather
than $\log \ell$ (Berry's cancellation phenomenon). Similarly to the nodal length case \paref{M}, the projection \paref{chaos2} takes a very simple
form, as the integral depends only on the (second power of the) random
eigenfunction, and not on its gradient.

\subsection{Proof of the main result}

In this paper we complete the characterization of the chaos expansion for
the boundary length of excursion sets, and indeed we show the following results which are of independent interest.

\begin{proposition}\label{varu4}
For all $u\in \mathbb{R},$ as $\ell \rightarrow \infty $
$$
\Var\left (\sum_{q=3}^{+\infty} \text{proj}[\mathcal L_\ell(u)| q] \right ) = \frac{\pi}{4} \phi(u)^2 \left ( H_{4}(u)+2H_{2}(u)-\frac{3}{2}\right )^2  \log \ell + O(1).
$$
\end{proposition}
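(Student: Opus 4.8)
The plan is to turn the variance of an infinite chaos sum into the variance of a single projection. Since the projections $\text{proj}[\mathcal{L}_\ell(u)|q]$ lie in mutually orthogonal Wiener chaoses and are centred for $q\ge1$,
\[
\Var\Big(\sum_{q=3}^{+\infty}\text{proj}[\mathcal{L}_\ell(u)|q]\Big)=\sum_{q=3}^{+\infty}\Var\big(\text{proj}[\mathcal{L}_\ell(u)|q]\big),
\]
and I would show that the summand $q=4$ carries the entire $\log\ell$ term while all the others are $O(1)$. By Proposition \ref{prop_exp} each projection is $\sqrt{\lambda_\ell/2}\int_{\mathbb{S}^2}W_q\,dx$, with $W_q$ a fixed linear combination of products $H_c(f_\ell)H_{2a}(\widetilde\partial_{1;x}f_\ell)H_{2b}(\widetilde\partial_{2;x}f_\ell)$, $2a+2b+c=q$; by isotropy its variance equals $\tfrac{\lambda_\ell}{2}\,8\pi^2\int_0^\pi K_q(\theta)\sin\theta\,d\theta$, where $K_q(\theta):=\mathbb{E}[W_q(x)W_q(y)]$ depends only on $\theta=d(x,y)$.

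The second step is a rescaling $\psi=\ell\theta$ together with the diagram (Wick) formula. Since the same-point correlations $\mathbb{E}[f_\ell\,\widetilde\partial_i f_\ell]$ and $\mathbb{E}[\widetilde\partial_1 f_\ell\,\widetilde\partial_2 f_\ell]$ vanish, every diagram contributing to $K_q$ is a product of exactly $q$ cross-correlations, each of which — by Hilb's asymptotics — is a Bessel-type function of $\psi$ decaying like $\psi^{-1/2}$ and oscillating like $\cos(\psi+\text{const})$. As $\tfrac{\lambda_\ell}{2}\sim\ell^2/2$ cancels the $\ell^{-2}$ from $\sin\theta\,d\theta$, one gets
\[
\Var\big(\text{proj}[\mathcal{L}_\ell(u)|q]\big)\simeq 4\pi^2\int_0^{\ell\pi}\widetilde K_q(\psi)\,\psi\,d\psi,\qquad \widetilde K_q(\psi)\sim\psi^{-q/2}.
\]
For $q\ge5$ the integrand is integrable at infinity, so the variance is $O(1)$. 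For the odd value $q=3$ the leading tail is a product of an odd number of oscillating factors and hence has no non-oscillatory component; the integral is then $O(1)$ by a Riemann--Lebesgue / stationary-phase argument. Only $q=4$ has an even product whose non-oscillatory part decays like $\psi^{-2}$, producing $\int^{\ell\pi}\psi^{-1}d\psi\sim\log\ell$.

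It thus remains to pin down the constant in front of $\psi^{-2}$ in $\widetilde K_4$. I would expand $W_4$ into the six-term combination indexed by $(a,b,c)\in\{(0,0,4),(1,0,2),(0,1,2),(2,0,0),(1,1,0),(0,2,0)\}$, whose only $u$-dependence sits in $\beta_c(u)=\phi(u)H_c(u)$. The cleanest route is to establish that, in the high-energy limit, all these degree-four Hermite functionals of $f_\ell$ and of its normalized gradient are asymptotically proportional in $L^2(\mathbb{P})$ to the single trispectrum $\int_{\mathbb{S}^2}H_4(f_\ell(x))\,dx$ governing the nodal case (cf. \paref{M}, \paref{mrw}); then $\text{proj}[\mathcal{L}_\ell(u)|4]$ is asymptotically a scalar multiple of that trispectrum, the scalar being $\phi(u)$ times a linear combination of $H_4(u),H_2(u),1$. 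This asymptotic collinearity is exactly what forces the limit to be a perfect square. Computing the three proportionality constants together with $\Var(\mathcal{M}_\ell)=\tfrac{\log\ell}{32}+O(1)$ identifies the bracket as $H_4(u)+2H_2(u)-\tfrac32$ (consistently, at $u=0$ the second chaos vanishes and the formula returns the nodal constant $1/32$), yielding
\[
\sum_{q=3}^{+\infty}\Var\big(\text{proj}[\mathcal{L}_\ell(u)|q]\big)=\frac{\pi}{4}\phi(u)^2\Big(H_4(u)+2H_2(u)-\tfrac32\Big)^2\log\ell+O(1).
\]

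The main obstacle is this last constant. Extracting the exact non-oscillatory coefficient of $\widetilde K_4$ requires tracking all Wick pairings among $f_\ell$, $\widetilde\partial_{1;x}f_\ell$ and $\widetilde\partial_{2;x}f_\ell$ (whose cross-correlations involve $P_\ell'$ and $P_\ell''$) and checking that the cross-terms between the $H_4(f_\ell)$, $H_2(f_\ell)H_2(\widetilde\partial_i f_\ell)$ and $H_4(\widetilde\partial_i f_\ell)$, $H_2(\widetilde\partial_1 f_\ell)H_2(\widetilde\partial_2 f_\ell)$ pieces recombine into a single square rather than a generic quadratic form. One must also justify discarding the oscillatory remainders uniformly in $\ell$ and control the range of $\theta$ bounded away from $0$, where Hilb's approximation breaks down but the correlations are uniformly small; these are standard but delicate.
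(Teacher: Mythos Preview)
Your overall architecture matches the paper's: reduce by isotropy to a one-dimensional integral, rescale $\psi=L\theta$, invoke Hilb-type asymptotics for $P_\ell,P_\ell',P_\ell''$, and then treat $q=3$, $q=4$, $q\ge5$ separately. The paper packages these as Lemmas~3.1--3.4. Your reason for $q=3$ being $O(1)$ (an odd product of oscillating factors has no non-oscillatory part, so the $1/\sqrt\psi$ tail is integrable) is exactly what the paper uses.

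Two genuine gaps remain.

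\textbf{Summability over $q\ge5$.} You argue that each $\Var(\text{proj}[\mathcal L_\ell(u)|q])$ is $O(1)$, but you need the \emph{sum} over $q\ge5$ to be $O(1)$ uniformly in $\ell$. The diagram formula produces up to $q!$ terms in $K_q$, and the chaotic coefficients $\alpha_{2a,2b}\beta_c(u)/((2a)!(2b)!c!)$ must be controlled jointly with that growth. The paper (Lemma~3.4) extracts a factor $(1-\delta)^{q-5}$ from the product of $q$ cross-correlations on $[C/\ell,\pi/2]$ and then shows the residual series $\sum_q q!(1-\delta)^q\big(\sum|\alpha\beta|/\cdots\big)^2$ converges, using $|H_k(u)|\gamma(u)\le C\sqrt{k!}$. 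Without this, your tail bound is only termwise.

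\textbf{The $q=4$ constant via ``asymptotic collinearity'' is not a shortcut.} To prove that each of the six degree-four functionals $\int H_c(f_\ell)H_{2a}(\widetilde\partial_1 f_\ell)H_{2b}(\widetilde\partial_2 f_\ell)\,dx$ is asymptotically a specific multiple of $\int H_4(f_\ell)\,dx$ in $L^2(\mathbb P)$, you must compute the variance of each difference, and that unpacks into precisely the same family of integrals
\[
\int_{C/\ell}^{\pi/2} P_\ell^{m_1}\,(P_\ell'\sin\theta)^{m_2}\,(P_\ell'\cos\theta-P_\ell''\sin^2\theta)^{m_3}\,(P_\ell')^{m_4}\sin\theta\,d\theta,\qquad m_1+\cdots+m_4=4,
\]
that a direct computation of $\Var(\text{proj}[\mathcal L_\ell(u)|4])$ requires. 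The paper does this head-on (Lemma~3.3): it lists all eleven such integrals with their $\log\ell/\ell^2$ leading coefficients, contracts them against the diagram-formula expansion of $\mathbb E[\Psi_\ell(\bar x;4)\Psi_\ell(y(\theta);4)]$, and the perfect square $\big(H_4(u)+2H_2(u)-\tfrac32\big)^2$ drops out of the arithmetic of the $\alpha_{ij}\beta_k(u)$. Your a posteriori explanation for why the answer \emph{should} be a square is correct, but it does not let you bypass the integral bookkeeping; in the paper's logic the collinearity (Proposition~2.4) is established \emph{after} and partly \emph{using} Proposition~2.3, not before.

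A minor slip: you say Hilb's approximation ``breaks down'' for $\theta$ bounded away from $0$. It is the opposite; the delicate range is $\theta\in[0,C/\ell]$, which the paper handles separately (Lemma~3.1) via the two-point correlation function bounds of Wigman.
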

Note that for $u=0$, from Proposition (\ref{prop_exp}), Proposition \ref{varu4} gives an alternative proof for \paref{varnodal}. Let us set for notational simplicity
$$
\mathcal M_\ell(u) := \sqrt{ \frac{\lambda _{\ell }}{2}}\sqrt{\frac{\pi }{2}}\phi (u)\left ( H_{4}(u)+2H_{2}(u)-\frac{3}{2}%
\right ) \frac{1}{4!}\int_{\mathbb{S}^{2}}H_{4}(f_{\ell }(x))dx.
$$
($\mathcal M_\ell(0) = \mathcal M_\ell$ in (\ref{M}).)
It is easy to show that, as $\ell\to +\infty$,
\begin{equation}\label{varMu}
\Var \left( \mathcal M_\ell(u) \right ) = \frac{\pi}{4} \phi(u)^2 \left ( H_{4}(u)+2H_{2}(u)-\frac{3}{2}\right )^2  \log \ell + O\left ( 1 \right ).
\end{equation}
\begin{proposition}\label{propdom}
\label{4thchaos} For all $u\in \mathbb{R},$ as $\ell \rightarrow \infty $%
$$
\Corr\left (\sum_{q=3}^{+\infty} \text{proj}[\mathcal L_\ell(u)| q], \mathcal M_\ell(u)  \right ) \to 1.
$$
\end{proposition}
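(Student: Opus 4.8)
The plan is to derive the convergence of the correlation from an $L^2$ comparison. Set
\[
X_\ell := \sum_{q=3}^{+\infty}\text{proj}[\mathcal L_\ell(u)|q], \qquad Y_\ell := \mathcal M_\ell(u),
\]
which are both centred, and abbreviate $\sigma_\ell^2 := \frac{\pi}{4}\phi(u)^2\left(H_4(u)+2H_2(u)-\tfrac{3}{2}\right)^2\log\ell$. By Proposition~\ref{varu4} and \paref{varMu} we have $\Var(X_\ell)=\sigma_\ell^2+O(1)$ and $\Var(Y_\ell)=\sigma_\ell^2+O(1)$. Since for centred variables
\[
\Corr(X_\ell,Y_\ell)=\frac{\Var(X_\ell)+\Var(Y_\ell)-\E\left[|X_\ell-Y_\ell|^2\right]}{2\sqrt{\Var(X_\ell)\Var(Y_\ell)}},
\]
the matching leading variances reduce the statement to the single $L^2$ estimate $\E\left[|X_\ell-Y_\ell|^2\right]=o(\log\ell)$, which then forces $\Corr(X_\ell,Y_\ell)\to1$.

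Next I would split $X_\ell-Y_\ell$ along the Wiener chaoses. As $Y_\ell$ is a multiple of $\int_{\mathbb S^2}H_4(f_\ell)$ it lies entirely in $C_4$, so
\[
X_\ell-Y_\ell=\text{proj}[\mathcal L_\ell(u)|3]+\left(\text{proj}[\mathcal L_\ell(u)|4]-\mathcal M_\ell(u)\right)+\sum_{q=5}^{+\infty}\text{proj}[\mathcal L_\ell(u)|q],
\]
and the three summands, lying in mutually orthogonal subspaces, decouple, so that $\E[|X_\ell-Y_\ell|^2]$ is the sum of their three second moments. The crux is the middle piece: I claim
\begin{equation}\label{starbound}
\E\left[\left|\text{proj}[\mathcal L_\ell(u)|4]-\mathcal M_\ell(u)\right|^2\right]=O(1).
\end{equation}
Granting \paref{starbound}, the other two contributions are controlled for free: by \paref{starbound}, \paref{varMu} and Cauchy--Schwarz one gets $\Var(\text{proj}[\mathcal L_\ell(u)|4])=\sigma_\ell^2+o(\log\ell)$, while Proposition~\ref{varu4} and orthogonality give
\[
\Var(\text{proj}[\mathcal L_\ell(u)|3])+\Var\left(\sum_{q=5}^{+\infty}\text{proj}[\mathcal L_\ell(u)|q]\right)=\Var(X_\ell)-\Var(\text{proj}[\mathcal L_\ell(u)|4])=o(\log\ell);
\]
since both terms on the left are nonnegative, each is $o(\log\ell)$, and summing the three pieces yields $\E[|X_\ell-Y_\ell|^2]=o(\log\ell)$.

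It remains to prove \paref{starbound}, which I expect to be the main obstacle. Using Proposition~\ref{prop_exp} I would write $\text{proj}[\mathcal L_\ell(u)|4]$ explicitly as the sum, over the six indices $(a,b,c)$ with $2a+2b+c=4$, of the integrals $\int_{\mathbb S^2}H_c(f_\ell)H_{2a}(\widetilde\partial_{1}f_\ell)H_{2b}(\widetilde\partial_{2}f_\ell)$ weighted by $\sqrt{\lambda_\ell/2}\,\alpha_{2a,2b}\beta_c(u)/((2a)!(2b)!\,c!)$. The term $(a,b,c)=(0,0,4)$ is already a multiple of $\int_{\mathbb S^2}H_4(f_\ell)$, carrying the factor $\beta_4(u)=\phi(u)H_4(u)$, whereas the remaining five terms involve the gradient components. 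The strategy is to eliminate the gradient dependence: integrating by parts on $\mathbb S^2$ via Green's formula and inserting the eigenfunction relation $\Delta_{\mathbb S^2}f_\ell=-\lambda_\ell f_\ell$ (exactly as in the passage to \paref{chaos2} for the second chaos), the dominant part of each gradient integral becomes proportional to $\int_{\mathbb S^2}H_4(f_\ell)$, the proportionality constants combining through the arithmetic of the $\alpha_{2a,2b}$ at levels $c=4,2,0$ into precisely the bracket $H_4(u)+2H_2(u)-\tfrac{3}{2}$ defining $\mathcal M_\ell(u)$. The error produced by this reduction is a sum of integrals whose second moments are double integrals over $\mathbb S^2\times\mathbb S^2$ of products of the covariance kernel $P_\ell(\cos d(x,y))$ and its normalized derivatives; the quantitative heart of the argument is the asymptotic analysis of these kernel integrals, showing that only the pure fourth power $\int_{\mathbb S^2\times\mathbb S^2}P_\ell(\cos d(x,y))^4\,dx\,dy\sim c\,\ell^{-2}\log\ell$ carries the logarithmic divergence---already absorbed into $\mathcal M_\ell(u)$---while every integral containing a derivative kernel is $O(\ell^{-2})$ and hence, after the $\lambda_\ell\sim\ell^2$ scaling, contributes only $O(1)$ to the variance. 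This bookkeeping of coefficients together with the Legendre moment estimates is the delicate step, and it follows the scheme of the nodal computation in \cite[\S2]{MRW} and \cite[\S7]{ROSSI2015}; for $u=0$ it reduces to \paref{mrw}.
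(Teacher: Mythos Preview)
Your overall reduction is sound and, via the polarization identity, essentially equivalent to the paper's route: the paper proves directly that
\[
\Cov\Bigl(\sum_{q\ge3}\text{proj}[\mathcal L_\ell(u)|q],\,\mathcal M_\ell(u)\Bigr)=\sigma_\ell^2+O(1)
\]
by computing the covariance kernel via the diagram formula and integrating the Legendre asymptotics (Lemma~\ref{keylemma}), whereas you expand $\E[|X_\ell-Y_\ell|^2]$ chaos by chaos and reduce to \paref{starbound}. Your observation that, once \paref{starbound} is known, the contributions of the third and the $q\ge5$ chaoses follow ``for free'' from Proposition~\ref{varu4} and orthogonality is neat and correct.

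The gap is in your sketched mechanism for \paref{starbound}. Two points fail. First, the Green/eigenfunction reduction that collapses the second-chaos gradient terms into $\int H_2(f_\ell)$ (as in \paref{chaos2}) has no fourth-chaos analogue: an integral like $\int_{\mathbb S^2}H_4(\widetilde\partial_1 f_\ell)$ or $\int_{\mathbb S^2}H_2(f_\ell)H_2(\widetilde\partial_1 f_\ell)$ does \emph{not} become a multiple of $\int_{\mathbb S^2}H_4(f_\ell)$ plus a lower-order remainder by integration by parts; the identity $\Delta f_\ell=-\lambda_\ell f_\ell$ is linear and does not simplify quartic functionals of $\nabla f_\ell$ in this way. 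Second, your claim that ``every integral containing a derivative kernel is $O(\ell^{-2})$'' is false: the list \paref{momento4} shows that several mixed moments such as $\int P_\ell^2\,\frac{2}{\ell(\ell+1)}(P'_\ell\sin\theta)^2\sin\theta\,d\theta$ and $\int\frac{4}{\ell^2(\ell+1)^2}(P'_\ell\sin\theta)^4\sin\theta\,d\theta$ carry a genuine $\log\ell/\ell^2$ contribution. Consequently the six $(a,b,c)$-terms in $\text{proj}[\mathcal L_\ell(u)|4]$ individually have variances of order $\log\ell$; \paref{starbound} holds not because the gradient pieces are small but because their $\log\ell$ parts \emph{cancel} exactly against the corresponding part of $\mathcal M_\ell(u)$.

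To carry your program through you therefore need the same computation the paper does: evaluate $\E[\Psi_\ell(\overline x;4)\Psi_\ell(y(\theta);4)]$, $\E[M_\ell(\overline x;u)M_\ell(y(\theta);u)]$ and their cross term via the diagram formula and the Legendre asymptotics in \paref{momento4}, and check that the three leading $\log\ell$-coefficients combine to zero. This is precisely the content of Lemma~\ref{lemma4} together with Lemma~\ref{keylemma}; the constants \paref{calcolocoeff} are what produce the factor $H_4(u)+2H_2(u)-\tfrac32$ after the cancellation, not an a priori integration by parts.
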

Note that Proposition \ref{propdom} generalizes results in \cite{MRW}, moreover implies that
\begin{equation}\label{cov4}
\Cov(\text{proj}[\mathcal L_\ell(u)|4], \mathcal M_\ell(u)) \sim \frac{\pi}{4} \phi(u)^2 \left ( H_{4}(u)+2H_{2}(u)-\frac{3}{2}\right )^2  \log \ell.
\end{equation}
Thanks to Proposition \ref{varu4}, \paref{varMu} and \paref{cov4} we have that, as $\ell \to +\infty$,
\begin{equation}
\Var\left ( \text{proj}[\mathcal L_\ell(u)|4]  \right ) \sim \frac{\pi}{4} \phi(u)^2 \left ( H_{4}(u)+2H_{2}(u)-\frac{3}{2}\right )^2  \log \ell,
\end{equation}
so the contribution of $\text{proj}[\mathcal L_\ell(u)|3] + \sum_{q=5}^{+\infty} \text{proj}[\mathcal L_\ell(u)| q] $ is negligible, moreover we get the following.
\begin{corollary}
For all $u\in \mathbb{R},$ as $\ell \rightarrow \infty $%
$$
\Corr\left (\text{proj}[\mathcal L_\ell(u)| 4], \mathcal M_\ell(u)  \right ) \to 1.
$$
\end{corollary}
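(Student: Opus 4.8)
The plan is to read the corollary off as a direct consequence of the three asymptotic estimates already assembled just above its statement, so that no new analytic input is required: the hard work sits entirely in Propositions~\ref{varu4} and~\ref{propdom}. Write $S_\ell(u):=\sum_{q=3}^{+\infty}\text{proj}[\mathcal L_\ell(u)|q]$ and set, for brevity, $c(u):=\tfrac{\pi}{4}\phi(u)^2\bigl(H_4(u)+2H_2(u)-\tfrac32\bigr)^2$, so that Proposition~\ref{varu4} reads $\Var(S_\ell(u))=c(u)\log\ell+O(1)$ and \paref{varMu} reads $\Var(\mathcal M_\ell(u))=c(u)\log\ell+O(1)$. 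By definition of the correlation coefficient the corollary is the assertion
$$\frac{\Cov(\text{proj}[\mathcal L_\ell(u)|4],\mathcal M_\ell(u))}{\sqrt{\Var(\text{proj}[\mathcal L_\ell(u)|4])\,\Var(\mathcal M_\ell(u))}}\longrightarrow 1,$$
so it suffices to pin down the numerator and the first factor in the denominator, the second factor being already known.

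The structural observation that makes everything collapse is that $\mathcal M_\ell(u)$ is a \emph{pure fourth-chaos} random variable: by its very definition it is a deterministic multiple of $\int_{\mathbb S^2}H_4(f_\ell(x))\,dx$, which lies in $C_4$. Since the Wiener chaoses $\{C_q\}$ are mutually orthogonal in $L^2(\mathbb P)$, every summand of $S_\ell(u)$ with $q\neq 4$ is orthogonal to $\mathcal M_\ell(u)$, whence
$$\Cov\bigl(S_\ell(u),\mathcal M_\ell(u)\bigr)=\Cov\bigl(\text{proj}[\mathcal L_\ell(u)|4],\mathcal M_\ell(u)\bigr).$$
Rewriting Proposition~\ref{propdom} as $\Cov(S_\ell(u),\mathcal M_\ell(u))/\sqrt{\Var(S_\ell(u))\Var(\mathcal M_\ell(u))}\to1$ and inserting the two variance asymptotics gives $\Cov(S_\ell(u),\mathcal M_\ell(u))\sim c(u)\log\ell$, which is exactly \paref{cov4}.

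It remains to squeeze $\Var(\text{proj}[\mathcal L_\ell(u)|4])$ from both sides. For the upper bound, orthogonality of the chaoses gives $\Var(S_\ell(u))=\sum_{q\ge3}\Var(\text{proj}[\mathcal L_\ell(u)|q])\ge\Var(\text{proj}[\mathcal L_\ell(u)|4])$, so $\Var(\text{proj}[\mathcal L_\ell(u)|4])\le c(u)\log\ell+O(1)$. For the lower bound, Cauchy--Schwarz together with \paref{cov4} and \paref{varMu} yields
$$\Var(\text{proj}[\mathcal L_\ell(u)|4])\ge\frac{\Cov(\text{proj}[\mathcal L_\ell(u)|4],\mathcal M_\ell(u))^2}{\Var(\mathcal M_\ell(u))}\sim\frac{(c(u)\log\ell)^2}{c(u)\log\ell}=c(u)\log\ell,$$
so $\Var(\text{proj}[\mathcal L_\ell(u)|4])\sim c(u)\log\ell$. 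Substituting the three equivalences into the displayed ratio makes numerator and denominator both equal to $c(u)\log\ell\,(1+o(1))$, and the correlation tends to $1$.

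There is no genuine analytic obstacle here; the statement is bookkeeping on top of the two main propositions, and the only points demanding care are the orthogonality collapse of the covariance (which is precisely what transfers Proposition~\ref{propdom} from the whole tail $S_\ell(u)$ to its single fourth-chaos component) and the tacit requirement that the leading constant be nonzero. The latter fails exactly when $H_4(u)+2H_2(u)-\tfrac32=u^4-4u^2-\tfrac12=0$; at those finitely many levels $\mathcal M_\ell(u)\equiv0$ and the correlation is undefined, so the statement is to be read at all $u$ where it is meaningful, consistently with Proposition~\ref{propdom}.
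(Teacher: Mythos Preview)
Your argument is correct and follows exactly the route the paper sketches in the paragraph preceding the corollary: use orthogonality to identify $\Cov(S_\ell(u),\mathcal M_\ell(u))$ with $\Cov(\text{proj}[\mathcal L_\ell(u)|4],\mathcal M_\ell(u))$, then combine Proposition~\ref{varu4}, \paref{varMu} and \paref{cov4} (with the Cauchy--Schwarz lower bound and the orthogonality upper bound) to pin down $\Var(\text{proj}[\mathcal L_\ell(u)|4])$. You have simply made explicit what the paper compresses into the phrase ``Thanks to Proposition~\ref{varu4}, \paref{varMu} and \paref{cov4}'', and your closing remark on the exceptional levels where $c(u)=0$ is a legitimate caveat the paper leaves implicit.
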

The proofs of Proposition \ref{varu4} and Proposition \ref{propdom} are delicate and technical and will be given in the next sections. We are now ready to prove our main result.

\begin{proof}[Proof of Theorem \ref{mainth}]
Let us consider $u_1, u_2\in \mathbb R$; recall that
\begin{equation*}
\rho({\mathcal L}_\ell(u_1), {\mathcal L}_\ell(u_2)) = \frac{\Cov(\mathcal L_\ell(u_1), \mathcal L_\ell(u_2))}{\sqrt{\Var(\mathcal L_\ell(u_1))\Var(\mathcal L_\ell(u_2))}}.
\end{equation*}
Assume first that $u_1 \cdot u_2 \ne 0$, %and bear in mind \paref{chaos_decomp_sphere}.
%Thanks to \paref{chaos0}-\paref{chaos2}, Proposition \ref{propdom} and Proposition \ref{propmau} we have for $i=1,2$
thanks to \paref{lengthu}
\begin{equation}\label{bellou}
\mathcal L_\ell(u_i) - \E[\mathcal L_\ell(u_i) ]= \text{proj}[\mathcal L_\ell(u_i)|2] + \sum_{q=3}^{+\infty} \text{proj}[\mathcal L_\ell(u_i)|q]=\mathcal D_\ell(u_i) + \mathcal R_\ell(u_i),
\end{equation}
where $\Var(\mathcal R_\ell(u_i)) = o(\ell)$, as $\ell \to +\infty$.
Let us bear in mind \paref{varu}, \paref{varDell} and \paref{bellou}, thanks to Cauchy-Schwartz inequality we find, as $\ell \to +\infty$,
\begin{equation}\label{A1}
\begin{split}
\rho({\mathcal L}_\ell(u_1), {\mathcal L}_\ell(u_2)) \to 1.
\end{split}
\end{equation}
 For the nodal length we have, from  \paref{mrw},
 \begin{equation}\label{bello0}
 \mathcal L_\ell(0) - \E[\mathcal L_\ell(0) ]= \text{proj}[\mathcal L_\ell(0)|4] + \sum_{q=3}^{+\infty} \text{proj}[\mathcal L_\ell(0)|2q]=\mathcal M_\ell + \mathcal R_\ell,
\end{equation}
where $\mathcal M_\ell$ is as in \paref{M} and $\Var(\mathcal R_\ell) = O(1)$ as $\ell\to +\infty$. From \paref{bello0} and \paref{bellou} we have, for $u_1\ne 0$ and as $\ell\to +\infty$,
$$
\rho({\mathcal L}_\ell(u_1), {\mathcal L}_\ell(0)) \to 0,
$$
which together with \paref{A1} gives \paref{eqteo1}.

Recall now \paref{partialcorr} and \paref{explanatory}, here $Z=\| f_\ell\|^2_{L^2(\mathbb S^2)}$. From Proposition \ref{prop_exp}, %Proposition \ref{propdom} and Proposition \ref{varu4}
we have, for every $u\in \mathbb R$ and as $\ell\to +\infty$,
\begin{equation}\label{resu}
\begin{split}
\mathcal L_\ell(u)^* =& \sum_{q=3}^{+\infty} \text{proj}[\mathcal L_\ell(u)|q].
\end{split}
\end{equation}
From Proposition \ref{varu4} and Proposition \ref{propdom} we have, for $u_1,u_2 \in \mathbb R$,
$$\lim_{\ell\to +\infty}\rho_{\| f_\ell\|^2_{L^2(\mathbb S^2)}}(\mathcal L_\ell(u_1), \mathcal L_\ell(u_2)) = 1$$
which proves \paref{eqteo2}. The proof of Theorem \ref{mainth} is hence complete.
\end{proof}

\subsection{Discussion}\label{Sdiscussion}

It can be instructive to compare the results in this paper with other recent
characterizations which have been given for the asymptotic distribution for
the nodal length of random eigenfunctions in the non-spherical case. We
recall first that a (non-universal) non-central limit theorem for arithmetic random waves,
i.e. Gaussian Laplacian eigenfunctions on the standard two-dimensional flat torus $\mathbb{T}^{2}:=\mathbb R^2/\mathbb Z^2,$
was established in \cite{GAFA2016}. To obtain this result, analogously to
our discussion above the nodal length was decomposed into
chaotic components (see \S \ref{Swiener}).
%
%, i.e., the elements belonging to the orthogonal
%subspaces of $L^{2}(\Omega )$ spanned by linear combinations of %multiple
%Hermite polynomials of degree $q.$ In other words, the following %orthogonal
%decomposition was exploited:%
%\[
%L^{2}(\Omega )=\bigoplus\limits_{q=0}^{\infty }\mathcal{H}_{q}\text{ ,}
%\]%
%$\mathcal{H}_{q}$ denoting the $q$-th Wiener chaos, i.e., the linear span of
%Hermite polynomials of order $q$; it follows that
%\begin{equation}
%\mathcal{L}_{n}(0)=\sum_{q=0}^{\infty }\text{Proj}[\mathcal{L}_{n}(0)|q]%
%\text{ ;}  \label{Hermexp}
%\end{equation}%
%here $\mathcal{L}_{n}[0]$ represents nodal length of Gaussian arithmetic
%random waves of degree $n=a^{2}+b^{2},$ where $n,a,b\in \mathbb{N}$, and Proj%
%$[.|q]$ projection on $\mathcal{H}_{q}$, see \cite{KKW,GAFA2016} for more
%discussion and details.
%
The expansion of nodal length in the toroidal
and spherical cases have both analogies and important differences. In both
cases, the term corresponding to $q=2$ disappears at $u=0,$ thus entailing
that the variance becomes of lower order (the so-called Berry's cancellation
phenomenon). Likewise, in both cases the nodal length is dominated by the
fourth-order chaos: however, it is only in the spherical case that the
fourth-order term admits an expression depending on the field only (and not
on the gradient components). Because of this, we do not expect that taking
into-account the random norm behaviour will be enough to establish full
correlation between nodal length and boundary curves
(it could be the case
that a degeneracy occurs when a sufficient number of different levels is
considered).

Similar cancellation phenomena occur for other geometric functionals,
including the excursion area and the Defect (\cite{DI}, \cite{MW2014}, \cite%
{MR2015}, \cite{ROSSI2016}, which cover any dimension $d\geq 2$), the Euler-Poincar\'{e}
characteristic \cite{CaMar2016}, and the zeros of complex arithmetic
random waves \cite{DNPR2016}; quantitative central limit theorems have
been given on the sphere in \cite{CaMar2016, DI,MW2014,MR2015} for
many of these statistics, in the high-energy limit where $\ell \rightarrow
\infty .$ On the torus the asymptotic behaviour has been shown to be more
complicated, because it is nonGaussian and differs across different
subsequences as the eigenvalue diverges (see \cite{GAFA2016}, \cite{DNPR2016}).

\section{Proof of Proposition \ref{varu4}}

Let us bear in mind Proposition \ref{prop_exp}. We need to take care first of second chaotic components; thanks to Green's formula (for details see \cite{ROSSI2015}) we can write
\begin{eqnarray}\label{green2}
\text{proj}[\mathcal L_\ell^\varepsilon(u)|2] =\sqrt{\frac{\ell(\ell+1)}{2} } \left ( \frac12 \beta^\varepsilon_{2}(u) \alpha_{0,0} + \beta^\varepsilon_0(u) \alpha_{2,0}   \right ) \int_{\mathbb S^2} H_2(f_\ell(x))\,dx
\end{eqnarray}
(note that $\text{proj}[\mathcal L_\ell(u) |2] = \mathcal D_\ell(u)$ in \paref{D}).
Let us hence set
\begin{eqnarray}\label{psi}
\mathcal L_\ell^\varepsilon(u) &=:& \int_{\mathbb S^2} \Psi_\ell^\varepsilon(x)\,dx, \qquad
\text{proj}[\mathcal L_\ell^\varepsilon(u) | q] =: \int_{\mathbb S^2} \Psi_\ell^\varepsilon(x;q)\,dx;\cr
\mathcal L_\ell(u) &=:& \int_{\mathbb S^2} \Psi_\ell(x)\,dx, \qquad
\text{proj}[\mathcal L_\ell(u) | q] =: \int_{\mathbb S^2} \Psi_\ell(x;q)\,dx,
\end{eqnarray}
where in particular for $q=2$ we refer to \paref{D} and \paref{green2}.
Before we proceed we need to introduce some more notation: let us fix $\overline{x} = (0,0)$ to be the ``north pole" and $y(\theta) =(0,\theta)$ to be points on the meridian where $\varphi =0$.
We will split the proof of Proposition \ref{varu4} into some lemmas.
\begin{lemma}\label{lemmaFubini}
For $C>0$
\begin{equation}\label{ris1}
\begin{split}
\Var\left (\sum_{q=3}^{+\infty} \text{proj}[\mathcal L_\ell(u) | q]  \right ) =  O(1) + 2\cdot 8\pi^2 \sum_{q=3}^{+\infty} \int_{C/\ell}^{\pi/2}  \E\left [  \Psi_\ell(\overline x;q) \Psi_\ell(y(\theta);q) \right ]\sin \theta\, d\theta,
\end{split}
\end{equation}
where the constant involved in the $O$-notation does not depend on $\ell$.
\end{lemma}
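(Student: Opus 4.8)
The plan is to reduce the variance of the higher-order chaos to a spatial integral of the covariance of the integrand density, and then to localize this integral to a shrinking neighborhood of the diagonal where the correlations are non-negligible. First I would start from the chaotic expansion of Proposition \ref{prop_exp}, writing each projection $\text{proj}[\mathcal L_\ell(u)|q]=\int_{\mathbb S^2}\Psi_\ell(x;q)\,dx$ as in \paref{psi}. By the orthogonality of distinct Wiener chaoses in $L^2(\mathbb P)$ and the fact that the $\text{proj}[\cdot|q]$ are mutually uncorrelated for different $q$, the variance of the sum $\sum_{q\ge 3}\text{proj}[\mathcal L_\ell(u)|q]$ is the sum over $q\ge 3$ of the variances of the individual projections. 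For each fixed $q$,
\[
\Var\left(\text{proj}[\mathcal L_\ell(u)|q]\right)=\int_{\mathbb S^2}\int_{\mathbb S^2}\E\left[\Psi_\ell(x;q)\Psi_\ell(y;q)\right]\,dx\,dy,
\]
so that the whole quantity is a double spherical integral of $\sum_{q\ge3}\E[\Psi_\ell(x;q)\Psi_\ell(y;q)]$.

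Next I would exploit isotropy. Because $f_\ell$ is an isotropic field, the expectation $\E[\Psi_\ell(x;q)\Psi_\ell(y;q)]$ depends on $x,y$ only through the geodesic distance $d(x,y)$; this is where fixing $\overline x=(0,0)$ as the north pole and $y(\theta)=(0,\theta)$ on the meridian comes in. Using the invariance I can replace the outer $x$-integration by the total mass $4\pi$ of the sphere and rewrite the inner integral in the polar coordinate $\theta=d(\overline x,y)$, picking up the surface element $\sin\theta\,d\theta\,d\varphi$; the $\varphi$-integration contributes a factor $2\pi$, producing the constant $2\cdot 8\pi^2$ (after accounting for the $[0,\pi]\to[0,\pi/2]$ symmetry of the correlation structure about $\theta=\pi/2$, which contributes the outer factor $2$). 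This turns the double integral into
\[
2\cdot 8\pi^2\sum_{q=3}^{+\infty}\int_0^{\pi/2}\E\left[\Psi_\ell(\overline x;q)\Psi_\ell(y(\theta);q)\right]\sin\theta\,d\theta.
\]

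Finally I would handle the region $\theta\in[0,C/\ell)$ near the diagonal and show its contribution is $O(1)$, which accounts for the $O(1)$ term and restores the lower limit $C/\ell$ in \paref{ris1}. On this small cap the two-point correlations of $f_\ell$ and its (normalized) gradient are close to their diagonal values; combined with the area factor $\sin\theta\,d\theta\sim\theta\,d\theta$ over an interval of length $C/\ell$, and a uniform $L^2$ bound on $\Psi_\ell(\cdot;q)$ summed over $q$, the integral over this cap stays bounded uniformly in $\ell$. The main obstacle, which I expect to be the crux of the whole argument, is justifying the interchange of the infinite sum over $q$ with the spatial integration and controlling the tail $\sum_q$ uniformly: one must verify that $\sum_{q\ge3}\E[\Psi_\ell(\overline x;q)\Psi_\ell(y(\theta);q)]$ is absolutely summable and integrable, so that Fubini applies. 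This requires a uniform (in $\ell$ and $\theta$) bound on the chaotic coefficients $\alpha_{2a,2b}\beta_c(u)$ and on the moments of Hermite polynomials of the correlated Gaussian vector $(f_\ell,\widetilde\nabla f_\ell)$ at the two points; a Cauchy--Schwarz estimate together with the $L^2(\P)$-convergence in Lemma \ref{asconv} (ensuring the $\eps\to0$ limit is legitimate) should close this gap, but the bookkeeping needed to make the tail summability uniform is the delicate point.
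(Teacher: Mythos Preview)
Your overall architecture (orthogonality of chaoses, isotropy to reduce to a one-dimensional integral, split at $C/\ell$) matches the paper, but two of the steps you flag as routine are in fact the substance of the proof, and neither of your proposed arguments works as stated.

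For the short-range piece $[0,C/\ell]$: there is no ``uniform $L^2$ bound on $\Psi_\ell(\cdot;q)$ summed over $q$'' at a point --- the pointwise sum $\sum_{q\ge0}\E[\Psi_\ell(x;q)^2]$ is the (formal) second moment of $\delta_u(f_\ell(x))\|\nabla f_\ell(x)\|$, which is infinite. The paper instead works with the $\varepsilon$-approximation throughout (so that $|\Psi_\ell^\varepsilon(x)|\le(2\varepsilon)^{-1}\|\nabla f_\ell(x)\|$ makes Fubini honest), rewrites $\sum_{q\ge3}\E[\Psi_\ell^\varepsilon(\overline x;q)\Psi_\ell^\varepsilon(y(\theta);q)]$ as the full two-point function $K_\ell^\varepsilon(\overline x,y(\theta))$ minus the $q=2$ term, and then invokes Wigman's near-diagonal estimates on $K_\ell$ (Lemma~3.4 and Corollary~3.5 of \cite{Wig09}) to get $\int_0^{C/\ell}K_\ell(\overline x,y(\theta))\sin\theta\,d\theta=O(1)$. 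The scaling here is $K_\ell\sim\ell^2$ against an area $\sim\ell^{-2}$; you cannot get this from a generic Cauchy--Schwarz bound.

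For the interchange of $\sum_q$ and $\int_{C/\ell}^{\pi/2}$: Cauchy--Schwarz and Lemma~\ref{asconv} alone do not produce the required absolute summability in $q$, since the variances of the individual chaos projections decay only polynomially. The mechanism the paper uses is that on $[C/\ell,\pi/2]$ every normalized covariance entering the Diagram Formula (namely $P_\ell(\cos\theta)$, $\frac{2}{\ell(\ell+1)}(P_\ell'(\cos\theta)\sin\theta)^2$, etc.) is bounded in absolute value by $1-\delta$ for some $\delta>0$ independent of $\ell$; this yields a geometric factor $(1-\delta)^q$ multiplying at most $q!$ diagram terms, and the resulting series is shown to converge via the argument of \cite[Lemma~3.5]{DNPR2016} together with the bound $|H_k(u)|\gamma(u)\le C\sqrt{k!}$ on the $\beta$-coefficients. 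This $(1-\delta)^q$ decay is the key idea you are missing.
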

Let us deal with the terms of the series on the r.h.s. of \paref{ris1}.
\begin{lemma}\label{lemma3}
We have
\begin{eqnarray}\label{3O}
 \int_{C/\ell}^{\pi/2} \E\left [  \Psi_\ell(\overline x;3) \Psi_\ell(y(\theta);3) \right ]\sin \theta\,d\theta &=& O(1),\qquad \text{as } \ell\to +\infty.
\end{eqnarray}
\end{lemma}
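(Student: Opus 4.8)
The plan is to show that the contribution of the third-order chaos to the boundary-length variance is $O(1)$, which amounts to estimating the covariance integral in \paref{3O}. First I would write out $\Psi_\ell(x;3)$ explicitly from \paref{chaos_decomp_sphere} by collecting the terms with $2a+2b+c=3$. Since $c$ must be odd, the admissible triples are $(a,b,c)\in\{(0,0,3),(1,0,1),(0,1,1)\}$, so $\Psi_\ell(x;3)$ is a linear combination, with coefficients built from the $\alpha_{2a,2b}$ and $\beta_c(u)$, of the three random fields
\[
H_3(f_\ell(x)),\quad H_1(f_\ell(x))H_2(\widetilde\partial_{1;x}f_\ell(x)),\quad H_1(f_\ell(x))H_2(\widetilde\partial_{2;x}f_\ell(x)),
\]
each multiplied by the global factor $\sqrt{\lambda_\ell/2}$. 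The key structural point is that every term sits in the third Wiener chaos and involves an odd total Hermite degree in $f_\ell$ together with an even degree in the gradient components.

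Next I would invoke isotropy to reduce the covariance $\E[\Psi_\ell(\overline x;3)\Psi_\ell(y(\theta);3)]$ to a function of the geodesic distance $\theta$ alone, and then apply Wick's theorem (the diagram formula for products of Hermite polynomials, as in \cite{MRW, ROSSI2015}). Because the two points carry an \emph{odd} number of $f_\ell$-legs (three each) but the field $f_\ell$ is independent of its gradient at each fixed point, the only non-vanishing pairings must connect $f_\ell(\overline x)$- or gradient-legs at $\overline x$ to legs at $y(\theta)$ across the two points. I would express the resulting covariance in terms of the covariance function $r_\ell(\theta)=P_\ell(\cos\theta)$ and its derivatives — concretely the normalized quantities appearing in the gradient covariances, which after the $(\ell(\ell+1)/2)^{-1/2}$ scaling are $O(1)$ combinations of $P_\ell'$ and $P_\ell''$. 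The upshot is a finite sum of monomials in $P_\ell(\cos\theta)$ and its (normalized) derivatives, with \emph{total degree three} in these small quantities.

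The decisive estimate is then the Hilb-type asymptotic for Legendre polynomials away from the poles: for $C/\ell\le\theta\le\pi/2$ one has, uniformly, bounds of the form $|P_\ell(\cos\theta)|\lesssim (\ell\sin\theta)^{-1/2}$ with analogous $(\ell\sin\theta)^{-1/2}$-type control (after the natural normalization) on the derivative terms. Substituting these into the degree-three integrand gives a bound of order $(\ell\sin\theta)^{-3/2}$, so that
\[
\int_{C/\ell}^{\pi/2}\left|\E[\Psi_\ell(\overline x;3)\Psi_\ell(y(\theta);3)]\right|\sin\theta\,d\theta
\;\lesssim\;\int_{C/\ell}^{\pi/2}\frac{\sin\theta}{(\ell\sin\theta)^{3/2}}\,d\theta
\;=\;\ell^{-3/2}\int_{C/\ell}^{\pi/2}(\sin\theta)^{-1/2}\,d\theta,
\]
and since $(\sin\theta)^{-1/2}$ is integrable near $0$ the last integral is $O(1)$, leaving an overall bound $O(\ell^{-3/2})=O(1)$. (Here the global $\sqrt{\lambda_\ell/2}$ prefactors from the two copies contribute a factor $\lambda_\ell\sim\ell^2$; I must keep careful track of these, and verify that the odd-degree structure forces at least a net $(\ell\sin\theta)^{-3/2}$ decay so that the $\ell^2$ prefactor is beaten — this is precisely the cancellation that makes the third chaos subdominant relative to the fourth.)

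The main obstacle I expect is bookkeeping rather than conceptual: correctly enumerating the Wick pairings for the odd-degree third chaos, tracking which pairings survive given the within-point independence of $f_\ell$ and $\nabla f_\ell$, and confirming that after multiplication by the $\lambda_\ell$-prefactor the integrand genuinely decays like $(\ell\sin\theta)^{-3/2}$ with no stray lower-order-decay term. A secondary technical point is justifying the uniform Legendre estimates on $[C/\ell,\pi/2]$ for the derivative combinations (not just $P_\ell$ itself); these follow from standard Hilb asymptotics but should be cited or recorded. Once the per-point power counting is pinned down, the integral estimate above closes the argument and yields \paref{3O}.
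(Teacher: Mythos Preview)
Your overall outline --- enumerate the third-chaos terms, apply the diagram formula, reduce everything to degree-three monomials in $P_\ell$, $P_\ell'$, $P_\ell''$, then estimate via Hilb asymptotics --- matches the paper's route, but the power counting in your final estimate is off and the argument as written does \emph{not} close. With the $\sqrt{\lambda_\ell/2}$ prefactor on each copy of $\Psi_\ell(\cdot;3)$ you get a factor $\lambda_\ell/2\sim \ell^2/2$ in front, so bounding each normalized Legendre factor by $|P_\ell(\cos\theta)|\lesssim (\ell\sin\theta)^{-1/2}$ (and analogously for the derivative combinations) yields only
\[
\int_{C/\ell}^{\pi/2}\bigl|\E[\Psi_\ell(\overline x;3)\Psi_\ell(y(\theta);3)]\bigr|\sin\theta\,d\theta
\ \lesssim\ \ell^2\int_{C/\ell}^{\pi/2}(\ell\sin\theta)^{-3/2}\sin\theta\,d\theta
\ =\ \ell^{1/2}\int_{C/\ell}^{\pi/2}(\sin\theta)^{-1/2}\,d\theta
\ =\ O(\ell^{1/2}),
\]
which is $\sqrt{\ell}$ too large. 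There is no additional ``odd-degree'' cancellation hidden in the absolute-value bound; $(\ell\sin\theta)^{-3/2}$ simply does not beat $\ell^2$.

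The missing idea is that one must retain the oscillation in the Hilb asymptotics rather than throw it away. Substituting $P_\ell(\cos\psi/L)\sim\sqrt{2/(\pi\ell\sin(\psi/L))}\,\sin(\psi+\pi/4)$ (and the analogous formulas for $P'_\ell\sin\theta$ and $P'_\ell\cos\theta-P''_\ell\sin^2\theta$) into each degree-three monomial produces, after the change of variable $\psi=L\theta$, an integrand of the form $g(\psi)/\sqrt{\psi}$ where $g$ is a \emph{mean-zero} trigonometric polynomial such as $(\sin\psi+\cos\psi)^3$ or $\sin(\psi+\pi/4)\sin^2(\psi-\pi/4)$. The convergence of $\int_1^\infty g(\psi)\psi^{-1/2}\,d\psi$ (Dirichlet test / integration by parts) is what kills the extra $\ell^{1/2}$ and gives $O(1)$. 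This is exactly how the paper proceeds in \eqref{momento3}; your write-up should keep the sine/cosine factors and exploit conditional convergence rather than pass to absolute values.
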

\begin{lemma}\label{lemma4}
We have
\begin{eqnarray}\label{4O}
2\cdot 8\pi^2 \int_{C/\ell}^{\pi/2} \E\left [  \Psi_\ell(\overline x;4) \Psi_\ell(y(\theta);4) \right ]\sin \theta d\theta = &&\frac{\pi}{4} \phi(u)^2 \left ( H_{4}(u)+2H_{2}(u)-\frac{3}{2}\right )^2  \log \ell \cr
&&+ O_{\ell\to+\infty}(1).
\end{eqnarray}
\end{lemma}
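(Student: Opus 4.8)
\textbf{Proof strategy for Lemma \ref{lemma4}.}
The plan is to evaluate the two-point correlation $\E[\Psi_\ell(\overline x;4)\,\Psi_\ell(y(\theta);4)]$ by the diagram (Wick) formula, insert the high-frequency (Hilb) asymptotics of the covariances appearing in it, and isolate the unique $\theta^{-2}$ contribution whose integral against $\sin\theta\,d\theta$ on $[C/\ell,\pi/2]$ produces $\log \ell$. First I would make $\Psi_\ell(\cdot\,;4)$ fully explicit from Proposition \ref{prop_exp}: it equals $\sqrt{\ell(\ell+1)/2}$ times the finite sum over $2a+2b+c=4$ of $\frac{\alpha_{2a,2b}\beta_c(u)}{(2a)!(2b)!c!}H_c(f_\ell)H_{2a}(\widetilde\partial_{1}f_\ell)H_{2b}(\widetilde\partial_{2}f_\ell)$, with $\beta_c(u)=\phi(u)H_c(u)$ and the explicit values $\alpha_{0,0}=\sqrt{\pi/2}$, $\alpha_{2,0}=\alpha_{0,2}=\tfrac12\sqrt{\pi/2}$, $\alpha_{2,2}=-\tfrac18\sqrt{\pi/2}$, $\alpha_{4,0}=\alpha_{0,4}=-\tfrac38\sqrt{\pi/2}$ read off from \paref{alpha}.

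Since at each fixed point $f_\ell$ and its two normalized gradient components are independent standard Gaussians, the diagram formula expresses $\E[\Psi_\ell(\overline x;4)\Psi_\ell(y(\theta);4)]$ as a finite sum of products of exactly four cross-covariances between $\overline x$ and $y(\theta)$. By the reflection symmetry $\varphi\mapsto-\varphi$ of the configuration (north pole and meridian) every cross-covariance carrying an odd total number of $\widetilde\partial_2$-legs vanishes, so the only surviving building blocks are $P_\ell(\cos\theta)$, the field--$\widetilde\partial_1$ covariance, and the two gradient--gradient covariances $\E[\widetilde\partial_i f_\ell(\overline x)\widetilde\partial_i f_\ell(y(\theta))]$ for $i=1,2$, each a normalized derivative of $P_\ell$.

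I would then substitute Hilb's asymptotics $P_\ell(\cos\theta)=(\theta/\sin\theta)^{1/2}J_0((\ell+\tfrac12)\theta)+O(\ell^{-3/2})$ and its differentiated analogues, so that each covariance is $O((\ell\theta)^{-1/2})$ with an oscillatory factor $\cos$ or $\sin$ of $(\ell+\tfrac12)\theta$ up to a phase. Multiplying the prefactor $\ell(\ell+1)/2\sim\ell^2/2$ by a product of four such factors gives a leading term of order $\theta^{-2}$; keeping only the non-oscillating part of each product of four trigonometric functions (via $\langle\cos^4\rangle=\tfrac38$, $\langle\cos^2\sin^2\rangle=\tfrac18$, and the vanishing of the mixed averages) yields a constant times $\theta^{-2}$. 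Summing these ``DC'' contributions over all diagrams, weighted by the $\alpha,\beta$ coefficients, is exactly where the combination $H_4(u)+2H_2(u)-\tfrac32$ is produced and, because the fourth chaos is asymptotically one-dimensional (cf. Proposition \ref{propdom}), assembles into a perfect square. As a cross-check, the pure field--field pairing of the $H_4$--$H_4$ term alone already contributes $\phi(u)^2H_4(u)^2/(64\pi\,\theta^2)$, matching the $H_4(u)^2$ part of the target.

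Finally, $\int_{C/\ell}^{\pi/2}\theta^{-1}\,d\theta=\log\ell+O(1)$ turns the $\theta^{-2}$ leading term (times $\sin\theta\approx\theta$) into the asserted $\log\ell$ with constant $\tfrac{\pi}{4}\phi(u)^2(H_4(u)+2H_2(u)-\tfrac32)^2$, while the genuinely oscillatory parts $\cos((2\ell+1)\theta)/\theta$, the $(\theta/\sin\theta)$ corrections, and the Hilb error terms all integrate to $O(1)$ (the oscillatory ones by integration by parts, the rest by absolute convergence). The main obstacle is the combinatorial bookkeeping of all diagrams with the precise covariance asymptotics -- in particular, verifying that the DC parts sum exactly to $(H_4(u)+2H_2(u)-\tfrac32)^2$ and that no error term secretly contributes a second $\log\ell$, which requires the second-order (not merely leading) uniform asymptotics of the derivative covariances on $[C/\ell,\pi/2]$.
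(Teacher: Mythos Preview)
Your proposal is correct and follows essentially the same route as the paper: expand $\Psi_\ell(\cdot;4)$ via Proposition \ref{prop_exp}, apply the diagram formula to reduce the two-point function to a finite linear combination of quartic products of Legendre-type covariances, insert the Hilb asymptotics, and read off the non-oscillating $\theta^{-2}$ part whose $\sin\theta\,d\theta$-integral yields $\log\ell$. The paper organizes the computation slightly differently---it first records the needed integrals of Legendre products (the list \paref{momento4}) with their $\log\ell/\ell^2$ asymptotics, then assembles them with the explicit coefficients \paref{calcolocoeff}---but the analytic content is identical to your ``DC extraction''; your averages $\langle\cos^4\rangle=\tfrac38$, $\langle\cos^2\sin^2\rangle=\tfrac18$ are exactly the constants appearing in \paref{momento4}. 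One caution: do not invoke Proposition \ref{propdom} to argue that the DC parts must assemble into the perfect square $(H_4(u)+2H_2(u)-\tfrac32)^2$, since in the paper's logical order Proposition \ref{propdom} is established after (and partly relies on) Proposition \ref{varu4}, which in turn uses the present lemma; you must verify the perfect-square identity by direct computation with the $\alpha,\beta$ coefficients, as the paper does.
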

\begin{lemma}\label{lemma5}
We have
\begin{eqnarray}\label{5O}
 \sum_{q=5}^{+\infty} \int_{C/\ell}^{\pi/2} \E\left [  \Psi_\ell(\overline x;q) \Psi_\ell(y(\theta);q) \right ] \sin \theta\,d\theta=O_{\ell\to+\infty}(1).
\end{eqnarray}
\end{lemma}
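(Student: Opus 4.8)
The plan is to collapse the whole tail to one scalar generating-function estimate by means of a correlation inequality for Wiener chaoses, and then to integrate term by term, using that the power of the correlation length grows with $q$.

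First I would rewrite the chaotic densities in point-free form. By Proposition \ref{prop_exp}, for each $x$ one has $\Psi_\ell(x;q)=\sqrt{\lambda_\ell/2}\,h_q(\mathbf G_x)$, where $\mathbf G_x:=(f_\ell(x),\widetilde\partial_{1;x}f_\ell(x),\widetilde\partial_{2;x}f_\ell(x))$ is, for fixed $x$, a standard Gaussian vector in $\mathbb R^3$ (its three entries are independent with unit variance after normalization), and $h_q$ is the projection onto the $q$-th Wiener chaos of the fixed density map $(g_0,g_1,g_2)\mapsto \delta_u(g_0)\,\|(g_1,g_2)\|$, with chaotic coefficients $\alpha_{2a,2b}\beta_c(u)/((2a)!(2b)!c!)$ as in \paref{alpha} and \paref{beta}. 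By isotropy $h_q$ is the same function at every point, so
\[
\E\!\left[\Psi_\ell(\overline x;q)\,\Psi_\ell(y(\theta);q)\right]=\tfrac{\lambda_\ell}{2}\,\E\!\left[h_q(\mathbf G_{\overline x})\,h_q(\mathbf G_{y(\theta)})\right].
\]

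Next I would invoke Arcones' correlation inequality (see \cite{noupebook}): since $h_q$ is a pure element of the $q$-th chaos it has Hermite rank $q$, and since the within-point covariances vanish, the maximal cross-correlation is $\psi(\theta):=\max_{i,j}\big|\E[(\mathbf G_{\overline x})_i(\mathbf G_{y(\theta)})_j]\big|$, whence
\[
\big|\E[h_q(\mathbf G_{\overline x})\,h_q(\mathbf G_{y(\theta)})]\big|\le \psi(\theta)^q\,v_q,\qquad v_q:=\E[h_q(\mathbf G)^2].
\]
Two inputs then drive the estimate. The covariance input is the uniform bound $\psi(\theta)\le K(\ell\sin\theta)^{-1/2}$ on $[C/\ell,\pi/2]$, which follows from Hilb's asymptotics for $P_\ell$ and the analogous bounds for the normalized field--gradient and gradient--gradient covariances (as in \cite{ROSSI2015, MRW}). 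The combinatorial input is that $v_q$ is independent of $\ell$ and that its generating function has radius of convergence at least one: factoring the constraint $2a+2b+c=q$ gives
\[
\sum_{q\ge0}v_q\,t^q=\gamma(u)^2\Big(\sum_{a,b}\tfrac{\alpha_{2a,2b}^2}{(2a)!(2b)!}t^{2a+2b}\Big)\Big(\sum_{c}\tfrac{H_c(u)^2}{c!}t^{c}\Big),
\]
where the first factor is bounded by $\sum_{a,b}\alpha_{2a,2b}^2/((2a)!(2b)!)=\E\|(Z_1,Z_2)\|^2=2$ for $|t|\le1$ (with $(Z_1,Z_2)$ standard Gaussian), and the second converges for $|t|<1$ by Mehler's formula.

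Finally I would integrate and sum. Using $\sin\theta\ge(2/\pi)\theta$ and $1-q/2<-1$ for $q\ge5$,
\[
\int_{C/\ell}^{\pi/2}\psi(\theta)^q\sin\theta\,d\theta\le K^q\ell^{-q/2}\!\!\int_{C/\ell}^{\pi/2}\!(\sin\theta)^{1-q/2}d\theta\le \frac{(K')^q}{q/2-2}\,\ell^{-2}C^{2-q/2},
\]
so that, after multiplying by $\lambda_\ell/2\sim \ell^2/2$, the powers of $\ell$ cancel and the $q$-th term is bounded by $\mathrm{const}\cdot C^2\,v_q\,\rho^q$ with $\rho=\mathrm{const}\cdot K'/\sqrt C$. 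Choosing $C$ large enough that $\rho<1$, the series $\sum_{q\ge5}v_q\rho^q$ converges and, starting at $q=5$, is $O(\rho^5)$; hence the whole expression is $O(C^{-1/2})=O(1)$ uniformly in $\ell$, as claimed.

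The main obstacle is the summation over $q$: the crude bound $\psi(\theta)^q v_q$ is exactly at the borderline of usefulness, being summable over $\theta$ only because $1-q/2<-1$ precisely when $q\ge5$ --- this is why the $q=3$ term (Lemma \ref{lemma3}) and the $q=4$ term (Lemma \ref{lemma4}), where the same bound would produce $\ell^{1/2}$ and $\log\ell$ respectively, must be isolated and treated by hand (the former requiring genuine cancellation, the latter the sharp constant). Controlling the tail therefore hinges on (i) the $\ell$-independence of $v_q$ together with the positive radius of convergence of $\sum_q v_q t^q$, which is what allows $C$ to be chosen large enough to force geometric summability, and (ii) the uniform gradient--gradient covariance bound hidden in the estimate for $\psi(\theta)$, which is the one genuinely technical point.
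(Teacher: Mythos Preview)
Your approach is correct and somewhat more streamlined than the paper's. The paper proceeds by explicit diagram formula: it writes $\E[\Psi_\ell(\overline x;q)\Psi_\ell(y(\theta);q)]$ as a sum of at most $q!$ products of $q$ normalized covariance factors, bounds $q-5$ of those factors by a uniform constant $(1-\delta)$ on $[C/\ell,\pi/2]$, and controls the remaining $5$-factor integral by a separate estimate (Lemma \ref{lemmaOgrande}, yielding $O(\ell^{-2})$). The tail summation then requires showing that $\sum_q q!(1-\delta)^q\cdot(\text{coefficient sum})<\infty$, which the paper handles by the same multinomial/Cauchy--Schwarz device you allude to (borrowed from \cite{DNPR2016}).

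Your use of Arcones' inequality packages the diagram-formula combinatorics into a single black box and avoids peeling off factors: the pointwise decay $\psi(\theta)\le K(\ell\sin\theta)^{-1/2}$ is sharper than the paper's sup bound $(1-\delta)$ and produces the $O(\ell^{-2})$ directly upon integration for $q\ge 5$, while the generating-function analysis of $\sum_q v_q t^q$ replaces the paper's $q!$ bookkeeping. One small correction: Arcones' $\psi$ is the maximal row/column $\ell^1$-norm of the cross-covariance matrix, not the maximal entry; since here $d=3$ this costs at most a factor $3^q$, harmlessly absorbed into $K'$ by enlarging $C$. Note also that the paper's proof, like yours, implicitly needs $C$ large (so that all normalized covariances are uniformly below $1$), so the freedom to choose $C$ is legitimate in both arguments.
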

The proofs of Lemmas \ref{lemma3}-\ref{lemma5} are technical and will be given just below, the proof of Lemma \ref{lemmaFubini} is postponed to the Appendix. Let us now prove Proposition \ref{varu4}.

\begin{proof}[Proof of Proposition \ref{varu4}]
From \paref{ris1} we have,  thanks to Lemma \ref{lemma3}-Lemma \ref{lemma5},
\begin{eqnarray*}
\Var\left (\sum_{q=3}^{+\infty} \text{proj}[\mathcal L_\ell(u) | q]  \right ) &=&  2\cdot 8\pi^2 \sum_{q=3}^{+\infty} \int_{C/\ell}^{\pi/2}  \E\left [  \Psi_\ell(\overline x;q) \Psi_\ell(y(\theta);q) \right ]\sin \theta\, d\theta + O(1)\cr
&=& \frac{\pi}{4} \phi(u)^2 \left ( H_{4}(u)+2H_{2}(u)-\frac{3}{2}\right )^2  \log \ell + O_{\ell\to+\infty}(1)
\end{eqnarray*}
that immediately concludes the proof.
\end{proof}

\subsection{Proof of Lemma \ref{lemma3}}\label{technical}

 We will need the following.
\begin{eqnarray}\label{momento3}
&&\frac{\ell (\ell +1)}{2} \int_{C/\ell}^{\pi/2} P_\ell(\cos \theta)^3 \sin \theta\, d\theta = O_{\ell\to +\infty}(1);\cr
&&\int_{C/\ell} ^{\pi/2} P_\ell(\cos \theta) (P'_\ell(\cos \theta) \sin \theta)^2 \sin \theta \,d\theta
= O_{\ell\to +\infty}(1);\cr
&&\frac{2}{\ell(\ell+1)} \int_{C/\ell} ^{\pi/2} P_\ell(\cos \theta) (P_\ell(\cos \theta)\cos \theta - P_\ell^{''}(\cos \theta)\sin^2\theta)^2\sin\theta\,d\theta =  O_{\ell\to +\infty}(1);\cr
&&\frac{2}{\ell(\ell+1)} \int_{C/\ell} ^{\pi/2}(P_\ell(\cos \theta) \sin \theta)^2 (P_\ell(\cos \theta)\cos \theta - P_\ell^{''}(\cos \theta)\sin^2\theta) \,\sin\theta\,d\theta =  O_{\ell\to +\infty}(1);\cr
&& \frac{2}{\ell(\ell+1)} \int_{C/\ell} ^{\pi/2} P_\ell(\cos \theta)(P'_\ell(\cos \theta))^2 \sin\theta\,d\theta =  O_{\ell\to +\infty}(1).
\end{eqnarray}
To prove \paref{momento3} we will use some properties of Legendre polynomials and their derivatives recalled e.g. in \cite[Appendix A]{MRW}. Let us set $L:=\ell + \frac12$, and let $C>0$ be an absolute constant. We show the details of the analysis of the first and second estimate in \paref{momento3} since the other terms can be treated analogously.
\begin{eqnarray}\label{conti}
 &&\frac{\ell (\ell +1)}{2} \int_{C/\ell} ^{\pi/2} P_\ell(\cos \theta)^3 \sin \theta\, d\theta  =    \frac{\ell (\ell
+1)}{2} \frac{1}{L} \int_C^{L \pi/2} P_\ell(\cos \psi/L)^3 \sin \psi/L\,d\psi \cr
&&=    c \frac{\ell (\ell+1)}{2} \frac{1}{L} \int_C^{L \pi/2} \frac{1}{\ell^{3/2} \sin^{3/2} \psi/L} \left ( \sin (\psi + \pi/4) + O\left(\frac{1}/{\psi}\right) \right )^3 \sin \psi/L\,d\psi \cr
&& =   c \frac{\ell (\ell+1)}{2} \frac{1}{L} \int_C^{L \pi/2} \frac{(\sin \psi + \cos \psi)^3}{\ell^{3/2} \sin^{1/2} \psi/L} \,d\psi +  O\left ( \frac{\ell (\ell+1)}{2} \frac{1}{L} \int_C^{L \pi/2} \frac{1}{\ell} \frac{1}{\psi \sqrt \psi}\, d\psi \right )\cr
&& =   c' \frac{\ell (\ell+1)}{2} \frac{1}{L} \int_C^{L \pi/2} \frac{(\sin \psi + \cos \psi)^3}{\ell^{3/2} \sin^{1/2} \psi/L} \,d\psi + O(1),
\end{eqnarray}
for some constants $c, c' >0$.
Since
$$
\int_1^{+\infty} \frac{(\sin \psi + \cos \psi)^3}{\sqrt \psi}\,d \psi < +\infty
$$
from \paref{conti} we deduce
$$
\frac{\ell (\ell +1)}{2} \int_{C/\ell} ^{\pi/2} P_\ell(\cos \theta)^3 \sin \theta\, d\theta = O_{\ell\to +\infty}(1).
$$
To prove the second estimate in \paref{momento3}  we can write
\begin{eqnarray}\label{secondterm}
\int_{C/\ell} ^\pi P_\ell(\cos \theta) (P'_\ell(\cos \theta) \sin \theta)^2 \sin \theta \,d\theta
& = & \frac{1}{L} \int_C^{ L \pi/2}  P_\ell(\cos \psi/L)(P'_\ell(\cos \psi/L) \sin \psi/L)^2 \sin \psi/L \,d\psi\cr
& =&   \frac{1}{L} \int_C^{ L \pi/2}  \sqrt{\frac{2}{\ell \pi \sin \psi/L}} (\sin (\psi + \pi/4) + O\left(\frac{1}{\psi}\right) \times \cr
& \times &\frac{2}{\pi \ell \sin^3 \psi/L} (\ell \sin(\psi - \pi/4) + O(1))^2\sin^3\psi/L\,d\psi \cr
& = &   c  \int_C^{ L \pi/2}  \sqrt{\frac{1}{\ell \sin \psi/L}}\times \cr
&&\times  \sin (\psi + \pi/4)  \sin(\psi - \pi/4)^2 \,d\psi + O(1).
\end{eqnarray}
Since
$$
 \int_1^{ +\infty}  \frac{\sin (\psi + \pi/4)    \sin(\psi - \pi/4)^2}{\sqrt \psi} \,d\psi < +\infty
 $$
 we deduce from \paref{secondterm} that
 $$
 \int_{C/\ell} ^{\pi/2} P_\ell(\cos \theta) (P'_\ell(\cos \theta) \sin \theta)^2 \sin \theta \,d\theta = O_{\ell \to +\infty}(1).
 $$
 \begin{proof}[Proof of Lemma \ref{lemma3}]
We have
\begin{equation*}
\begin{split}
\E\left [  \Psi_\ell(\overline x;3) \Psi_\ell(y(\theta);3) \right ] =& \frac{\ell (\ell
+1)}{2}\mathbb E\Big [ \Big ( \frac{\alpha
_{0,0}\beta _{3}(u)}{3!} H_{3}(f_{\ell }(\overline{x})) \cr
&+ \frac{\alpha
_{2,0}\beta _{1}(u)}{2!} H_1(f_\ell(\overline x))(H_{2}(\widetilde \partial
_{1;x}f_{\ell }(\overline x)) + H_{2}(\widetilde \partial
_{2;x}f_{\ell }(\overline x))) \Big ) \times \cr
&\times \Big ( \frac{\alpha
_{0,0}\beta _{3}(u)}{3!} H_{3}(f_{\ell }(y(\theta))) \cr
&+ \frac{\alpha
_{2,0}\beta _{1}(u)}{2!} H_1(f_\ell(y(\theta)))(H_{2}(\widetilde \partial
_{1;x}f_{\ell }(y(\theta))) + H_{2}(\widetilde \partial
_{2;x}f_{\ell }(y(\theta)))) \Big )\Big ].
\end{split}
\end{equation*}
From a standard application of the Diagram Formula and \cite[(A1)-(A6)]{MRW}
\begin{eqnarray}\label{chaos3}
\E\left [  \Psi_\ell(\overline x;3) \Psi_\ell(y(\theta);3) \right ] =&& \frac{\ell (\ell
+1)}{2}  \Big ( a_1 (u) P_\ell(\cos \theta)^3 + a_2 (u) P_\ell(\cos \theta) \frac{2}{\ell(\ell+1)}(P'_\ell(\cos \theta) \sin \theta)^2  \cr
&&+a_3 (u) \frac{4}{\ell^2(\ell+1)^2} P_\ell(\cos \theta) (P_\ell(\cos \theta)\cos \theta - P_\ell^{''}(\cos \theta)\sin^2\theta)^2 \cr
&&+ a_4 (u) \frac{4}{\ell^2(\ell+1)^2} (P_\ell(\cos \theta) \sin \theta)^2 (P_\ell(\cos \theta)\cos \theta - P_\ell^{''}(\cos \theta)\sin^2\theta) \cr
&& + a_5 (u) P_\ell(\cos \theta) \frac{4}{\ell^2(\ell+1)^2}(P'_\ell(\cos \theta))^2
\Big ),
\end{eqnarray}
where $a_i(u)$, $i=1,2, \dots, 5$ are constants depending on $u$ (we do not need to compute them explicitly).
It remains to prove that, as $\ell \to +\infty$,
\begin{equation*}
\int_{C/\ell} ^{\pi/2} \E\left [  \Psi_\ell(\overline x;3) \Psi_\ell(y(\theta);3) \right ] = O(1)
\end{equation*}
that immediately follows plugging \paref{momento3} in \paref{chaos3}.
 \end{proof}

 \subsection{Proof of Lemma \ref{lemma4}}

We will need the following.
As $\ell \to +\infty$
\begin{eqnarray} \label{momento4}
&&\int_{C/\ell} ^{\pi/2} P_\ell(\cos \theta)^4 \sin \theta\,d\theta = \frac{4}{\pi^2} \frac38 \frac{\log \ell}{\ell^2} + O\left ( \frac{1}{\ell^2}\right); \cr
&&\int_{C/\ell} ^{\pi/2} P_\ell(\cos \theta)^2 \frac{2}{\ell(\ell+1)} (P'_\ell(\cos \theta) \sin \theta)^2 \sin \theta\,d\theta  =  \frac{1}{\pi^2} \frac{\log \ell}{\ell^2} + O\left ( \frac{1}{\ell^2}\right);\cr
&&\int_{C/\ell} ^{\pi/2} P_\ell^2(\cos \theta) \frac{4}{\ell^2(\ell+1)^2} (P_\ell(\cos \theta)\cos \theta - P_\ell^{''}(\cos \theta)\sin^2\theta)^2 \sin \theta\,d\theta = \frac{16}{\pi^2} \frac38 \frac{\log \ell}{\ell^2} + O\left ( \frac{1}{\ell^2}\right);\cr
&&\int_{C/\ell} ^{\pi/2} P_\ell(\cos \theta) \frac{4}{\ell^2(\ell+1)^2} (P'_\ell(\cos \theta) \sin \theta)^2 (P_\ell(\cos \theta)\cos \theta - P_\ell^{''}(\cos \theta)\sin^2\theta)\sin \theta\,d\theta \cr
&&= \frac{16}{\pi^2} \frac18 \frac{\log \ell}{\ell^2}+ O\left ( \frac{1}{\ell^2}\right);\cr
&&\int_{C/\ell} ^{\pi/2} \frac{4}{\ell^2(\ell+1)^2}P'_\ell(\cos \theta)^4  \sin^5 \theta\,d\theta
=  \frac{16}{\pi^2} \frac38 \frac{\log \ell}{\ell^2} + O\left ( \frac{1}{\ell^2}\right);\cr
&&\int_{C/\ell} ^{\pi/2} \frac{8}{\ell^3(\ell+1)^3}(P'_\ell(\cos \theta)\sin \theta)^2 (P_\ell(\cos \theta)\cos \theta - P_\ell^{''}(\cos \theta)\sin^2\theta)^2  \sin \theta\,d\theta
= \frac{32}{\pi^2} \frac18 \frac{\log \ell}{\ell^2} + O\left ( \frac{1}{\ell^2}\right);\cr
&&\int_{C/\ell} ^{\pi/2} \frac{2^4}{\ell^4(\ell+1)^4} (P_\ell(\cos \theta)\cos \theta - P_\ell^{''}(\cos \theta)\sin^2\theta)^4  \sin \theta\,d\theta
 =  \frac{4\cdot 2^4}{\pi^2} \frac38 \frac{\log \ell}{\ell^2} + O\left ( \frac{1}{\ell^2}\right);\cr
 && \int_{C/\ell} ^{\pi/2} P_\ell^2(\cos \theta) \frac{4}{\ell^2(\ell+1)^2} P'_\ell(\cos \theta)^2 \sin \theta\,d\theta = O\left ( \frac{1}{\ell^2}\right);\cr
&&  \int_{C/\ell} ^{\pi/2} \frac{2}{\ell(\ell+1)} (P'_\ell(\cos \theta) \sin \theta)^2 \frac{4}{\ell^2 (\ell+1)^2} P'_\ell(\cos \theta)^2\sin \theta\,d\theta = O\left ( \frac{1}{\ell^2}\right);\cr
&& \int_{C/\ell} ^{\pi/2} \frac{4}{\ell^2(\ell +1)^2}(P_\ell(\cos \theta)\cos \theta - P_\ell^{''}(\cos \theta)\sin^2\theta)^2
  \frac{4}{\ell^2 (\ell+1)^2} P'_\ell(\cos \theta)^2 \sin \theta\,d\theta = O\left ( \frac{1}{\ell^2}\right);\cr
  &&\int_{C/\ell} ^{\pi/2}  \frac{2^4}{\ell^4(\ell+1)^4} (P_\ell'(\cos \theta))^4\sin \theta\,d\theta = O\left ( \frac{1}{\ell^2}\right).
  \end{eqnarray}
To prove \paref{momento4} we will use again some properties of Legendre polynomials and their derivatives recalled e.g. in \cite[Appendix A]{MRW}. Remember that $L=\ell + \frac12$, and  $C>0$ is an absolute constant. We show the details of the analysis of the first and second estimate in \paref{momento4} since the other terms can be treated analogously.
We have
\begin{eqnarray}
\int_{C/\ell} ^{\pi/2} P_\ell(\cos \theta)^4 \sin \theta\,d\theta &=&  \frac{1}{L} \int_C^{L\pi/2} P_\ell(\cos \psi /L)^4 \sin \psi/L\,d\psi \cr
&=&  \frac{1}{L} \int_C^{L\pi/2} \left ( \frac{4\sin^4(\psi + \pi/4)}{\pi^2 \ell^2 \sin^2 \frac{\psi}{L} }  + O\left (\frac{1}{\psi^{3}} \right ) \right )   \sin \frac{\psi}{L}\,d\psi \cr
&=& O\left(\frac{1}{L^2}\right) +  \frac{4}{\pi^2} \frac{1}{L} \int_C^{L\pi/2}  \frac{\sin^4(\psi + \pi/4)}{\ell^2 \sin \frac{\psi}{L} } \,d\psi \cr
&=& O\left(\frac{1}{L^2}\right) +  \frac{4}{\pi^2} \frac{1}{L} \int_C^{L\pi/2}  \frac{(\frac38 -\frac18 \cos 4\psi +\frac12 \sin 2\psi)}{\ell^2 \sin \frac{\psi}{L} } \,d\psi \cr
&=& \frac{4}{\pi^2} \frac38 \frac{\log \ell}{\ell^2} + O\left(\frac{1}{L^2}\right).
\end{eqnarray}
In order to prove the second estimate in \paref{momento4} we write
\begin{eqnarray}
&&\int_{C/\ell} ^{\pi/2} P_\ell(\cos \theta)^2 \frac{2}{\ell(\ell+1)} (P'_\ell(\cos \theta) \sin \theta)^2 \sin \theta\,d\theta \cr
&&=   \frac{1}{L} \int_C^{L\pi/2} P_\ell(\cos \psi /L)^2 \frac{2}{\ell(\ell+1)} P'_\ell(\cos \psi/L)^2 \sin^3 \psi/L \,d\psi \cr
&=& \frac{1}{L} \int_C^{L\pi/2} \left ( \frac{2\sin^2(\psi + \pi/4)}{\pi \ell \sin \frac{\psi}{L} }  + O\left (\frac{1}{\psi^{2}} \right ) \right )  \times \cr
&& \times \frac{2}{\ell(\ell+1)} \left ( \frac{2\ell^2 \sin^2(\psi - \pi/4)}{\pi \ell \sin^3 \frac{\psi}{L} }  + O\left (\frac{\ell}{\psi^{3}} \right ) \right )\, \sin^3 \psi/L\,d\psi  \cr
&& = O\left(\frac{1}{L^2}\right) + \frac{1}{L}  \frac{2}{\ell(\ell+1)}\int_C^{L\pi/2} \frac{4\sin^2(\psi + \pi/4)\sin^2(\psi - \pi/4)}{\pi^2  \sin \frac{\psi}{L} }  \,d\psi  \cr
&& =  O\left(\frac{1}{L^2}\right) +  \frac{2}{\ell(\ell+1)}\int_C^{L\pi/2} \frac{4\cdot \frac18 (1 + \cos(4\psi))}{\pi^2 \psi }  \,d\psi  \cr
&& = O\left(\frac{1}{L^2}\right) + \frac{1}{\pi^2} \frac{\log \ell}{\ell^2}.
\end{eqnarray}

We will also need the following.
\begin{eqnarray}\label{calcolocoeff}
\alpha _{00}\beta _{4}(u) &=&\sqrt{\frac{\pi }{2}}\frac{1}{\sqrt{2\pi }}\exp
\left (-\frac{u^{2}}{2}\right )(u^{4}-6u^{2}+3)=\frac{1}{2}\exp \left (-\frac{u^{2}}{2}%
\right )(u^{4}-6u^{2}+3 ), \cr
\alpha _{20}\beta _{2}(u) &=&\frac{1}{2}\sqrt{\frac{\pi }{2}}\frac{1}{\sqrt{%
2\pi }}\exp \left (-\frac{u^{2}}{2}\right )(u^{2}-1 )=\frac{1}{4}\exp \left (-\frac{u^{2}}{2}%
\right )(u^{2}-1), \cr
\alpha _{40}\beta _{0}(u) &=&-\frac{3}{8}\sqrt{\frac{\pi }{2}}\frac{1}{\sqrt{%
2\pi }}\exp \left (-\frac{u^{2}}{2}\right )=-\frac{3}{16}\exp \left (-\frac{u^{2}}{2}\right ).
\end{eqnarray}%
\begin{proof}[Proof of Lemma \ref{lemma4}]
We have
\begin{eqnarray}
\E\left [  \Psi_\ell(\overline x;4) \Psi_\ell(y(\theta);4) \right ] =&& \frac{\ell (\ell
+1)}{2}\mathbb E\Big [ \Big ( \frac{\alpha
_{0,0}\beta _{4}(u)}{4!} H_{4}(f_{\ell }(\overline{x})) \cr
&&+ \frac{\alpha
_{2,0}\beta _{2}(u)}{2! 2!} H_2(f_\ell(\overline x))(H_{2}(\widetilde \partial
_{1;x}f_{\ell }(\overline x)) + H_{2}(\widetilde \partial
_{2;x}f_{\ell }(\overline x))) \cr
&& + \frac{\alpha
_{2,2}\beta _{0}(u)}{2! 2!} H_{2}(\widetilde \partial
_{1;x}f_{\ell }(\overline x)) H_{2}(\widetilde \partial
_{2;x}f_{\ell }(\overline x)) \cr
&&  + \frac{\beta_0(u) \alpha_{4,0}}{4!} ( H_4(\widetilde \partial
_{1;x}f_{\ell }(\overline x)) + H_4(\widetilde \partial
_{2;x}f_{\ell }(\overline x))   \Big ) \times \cr
&&\times \Big ( \frac{\alpha
_{0,0}\beta _{4}(u)}{4!} H_{4}(f_{\ell }(y(\theta))) \cr
&&+ \frac{\alpha
_{2,0}\beta _{2}(u)}{2! 2!} H_2(f_\ell(y(\theta)))(H_{2}(\widetilde \partial
_{1;x}f_{\ell }(y(\theta))) + H_{2}(\widetilde \partial
_{2;x}f_{\ell }(y(\theta)))) \cr
&& + \frac{\alpha
_{2,2}\beta _{0}(u)}{2! 2!} H_{2}(\widetilde \partial
_{1;x}f_{\ell }(\overline y(\theta))) H_{2}(\widetilde \partial
_{2;x}f_{\ell }(\overline y(\theta)))\cr
 &&  + \frac{\beta_0(u) \alpha_{4,0}}{4!} ( H_4(\widetilde \partial
_{1;x}f_{\ell }(y(\theta))) + H_4(\widetilde \partial
_{2;x}f_{\ell }(y(\theta)))  \Big )
\Big ].
\end{eqnarray}
Thanks to Diagram Formula and \cite[(A1)-(A6)]{MRW} we get
\begin{eqnarray}\label{conti4}
\mathbb E\left [  \Psi_\ell(\overline x;4) \Psi_\ell(y(\theta);4) \right ] =&& \frac{\ell (\ell+1)}{2} \Big [ \left( \frac{\alpha
_{0,0}\beta _{4}(u)}{4!}\right )^2 4! P_\ell(\cos \theta)^4  \cr
&&+ 2\cdot \frac{\alpha
_{0,0}\beta _{4}(u)}{4!} \frac{\alpha
_{2,0}\beta _{2}(u)}{2! 2!} 4! P_\ell^2(\cos \theta) \frac{2}{\ell(\ell+1)} (P'_\ell(\cos \theta) \sin \theta)^2 \cr
&& + \left(  \frac{\alpha
_{2,0}\beta _{2}(u)}{2! 2!}\right)^2 \Big ( 4 P_\ell^2(\cos \theta) \frac{4}{\ell^2(\ell+1)^2} (P_\ell(\cos \theta)\cos \theta - P_\ell^{''}(\cos \theta)\sin^2\theta)^2 \cr
&& - 16 P_\ell(\cos \theta) \frac{4}{\ell^2(\ell+1)^2} (P'_\ell(\cos \theta) \sin \theta)^2 (P_\ell(\cos \theta)\cos \theta - P_\ell^{''}(\cos \theta)\sin^2\theta) \cr
&& + 4 \frac{4}{\ell^2(\ell+1)^2}(P'_\ell(\cos \theta)\sin \theta)^4 \Big ) \cr
&& + 2\cdot  \frac{\beta_0(u)\alpha_{40}}{4!} \frac{\beta_4(u)\alpha_{00}}{4!} 4!  \frac{4}{\ell^2(\ell+1)^2}(P'_\ell(\cos \theta)^4 \sin^4 \theta)  \cr
&& + 2 \cdot \frac{\beta_0(u)\alpha_{40}}{4!} \frac{\beta_2(u) \alpha_{20}}{2! 2!} 4!  \frac{8}{\ell^3(\ell+1)^3}(P'_\ell(\cos \theta)\sin \theta)^2\times \cr
&& \times  (P_\ell(\cos \theta)\cos \theta - P_\ell^{''}(\cos \theta)\sin^2\theta)^2\cr
&&+ \left (\frac{\beta_0(u)\alpha_{40}}{4!} \right )^2 4! \frac{2^4}{\ell^4(\ell+1)^4} (P_\ell(\cos \theta)\cos \theta - P_\ell^{''}(\cos \theta)\sin^2\theta)^4 \cr
&& + \left(  \frac{\alpha
_{2,0}\beta _{2}(u)}{2! 2!}\right)^2 4 \cdot P_\ell^2(\cos \theta) \frac{4}{\ell^2(\ell+1)^2} P'_\ell(\cos \theta)^2 \cr
&& +  2 \cdot \frac{\alpha
_{2,0}\beta _{2}(u)}{2! 2!}  \frac{\alpha
_{2,2}\beta _{0}(u)}{2! 2!} 4 \cdot \frac{2}{\ell(\ell+1)} (P'_\ell(\cos \theta) \sin \theta)^2 \frac{4}{\ell^2 (\ell+1)^2} P'_\ell(\cos \theta)^2 \cr
&&+ \left( \frac{\alpha
_{2,2}\beta _{0}(u)}{2! 2!} \right )^2 4 \cdot \frac{4}{\ell^2(\ell +1)^2}(P_\ell(\cos \theta)\cos \theta - P_\ell^{''}(\cos \theta)\sin^2\theta)^2\times \cr
&& \times  \frac{4}{\ell^2 (\ell+1)^2} P'_\ell(\cos \theta)^2 + \left (\frac{\beta_0(u)\alpha_{40}}{4!} \right )^2 4!  \frac{2^4}{\ell^4(\ell+1)^4} (P_\ell'(\cos \theta))^4 \Big ].
\end{eqnarray}
Plugging \paref{momento4} and \paref{calcolocoeff} into \paref{conti4}  we get, as $\ell\to+\infty$,
\begin{eqnarray*}
2 \cdot 8\pi^2 \int_{C/\ell} ^{\pi/2} \mathbb E\left [  \Psi_\ell^\varepsilon(\overline x;4) \Psi_\ell^\varepsilon(y(\theta);4) \right ] \sin\theta \,d\theta =  \frac{\pi}{4} \phi(u)^2 \left ( H_{4}(u)+2H_{2}(u)-\frac{3}{2}\right )^2  \log \ell + O(1)
\end{eqnarray*}
that allows to conclude.
\end{proof}

\subsection{Proof of Lemma \ref{lemma5}}

This part is inspired by the proof of Lemma 3.5 in \cite{DNPR2016}. We will need the following.
\begin{lemma}\label{lemmaOgrande}
As $\ell\to+\infty$
\begin{eqnarray}\label{result}
 \max_{\substack{m_1,\dots, m_4 \ge 0\\ m_1+\dots + m_4 =5}} && \left (\frac{2}{\ell(\ell+1)} \right )^{\frac12 m_2 +m_3+m_4}  \int_{C/\ell}^{L\pi/2} |P_\ell(\cos \theta)|^{m_1} |P_\ell(\cos \theta)\sin \theta|^{m_2} \times \cr
&&\times  |P_\ell(\cos \theta)\cos \theta - P_\ell^{''}(\cos \theta)\sin^2\theta|^{m_3}   |P'_\ell (\cos \theta)|^{m_4}\sin \theta\,d\theta \cr
&&= O\left (\frac{1}{\ell^2}\right ).
\end{eqnarray}
\end{lemma}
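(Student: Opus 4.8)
The plan is to reduce the entire statement to four elementary pointwise size bounds for $P_\ell,P'_\ell,P''_\ell$ and then integrate termwise; the feature I want to exploit is that, because the total degree is here $5$, no cancellation among oscillatory factors is needed. This is exactly what makes the estimate qualitatively easier than the degree $3$ and degree $4$ computations \paref{momento3} and \paref{momento4}. First I would observe that the maximum runs over the \emph{finite} set of quadruples $(m_1,m_2,m_3,m_4)$ of nonnegative integers with $m_1+m_2+m_3+m_4=5$, so it suffices to bound each of these finitely many integrals by $O(1/\ell^2)$ and then take the largest constant; fix one such quadruple.

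Next I would record the uniform size bounds valid on the range $[C/\ell,\pi/2]$ coming from the Hilb-type asymptotics for Legendre polynomials and their derivatives recalled in \cite[Appendix A]{MRW}. Setting $L=\ell+\tfrac12$ and using $\sin\theta\ge c\,\theta$ on $(0,\pi/2]$, I claim that each of the four normalised factors appearing in \paref{result},
\begin{equation*}
|P_\ell(\cos\theta)|,\quad \tfrac{\sqrt2}{\sqrt{\ell(\ell+1)}}|P_\ell(\cos\theta)\sin\theta|,\quad \tfrac{2}{\ell(\ell+1)}\big|P_\ell(\cos\theta)\cos\theta-P''_\ell(\cos\theta)\sin^2\theta\big|,\quad \tfrac{2}{\ell(\ell+1)}|P'_\ell(\cos\theta)|,
\end{equation*}
is bounded by $c\,(\ell\sin\theta)^{-1/2}$ for an absolute $c>0$ and all $\theta\in[C/\ell,\pi/2]$. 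These follow from the leading term $P_\ell(\cos\theta)\sim\sqrt{2/(\pi\ell\sin\theta)}\cos((\ell+\tfrac12)\theta-\tfrac\pi4)$, each $\theta$-differentiation producing one extra factor of order $\ell$ which is exactly absorbed by the normalising power of $2/(\ell(\ell+1))$; moreover each factor is, up to a constant, a correlation between unit-variance Gaussians, hence also bounded. The second and fourth factors are in fact of the strictly smaller orders $\ell^{-3/2}(\sin\theta)^{1/2}$ and $\ell^{-3/2}(\sin\theta)^{-3/2}$, which on $\theta\ge C/\ell$ are again at most $c\,(\ell\sin\theta)^{-1/2}$ because there $\ell\sin\theta\ge c'C$. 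The lower cutoff $\theta\ge C/\ell$ is precisely what renders these constants uniform, as it excludes the pole region $\theta\sim1/\ell$.

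Then comes the power counting, which is the heart of the argument. Multiplying the four factors to their respective powers and using $m_1+m_2+m_3+m_4=5$, the integrand including the measure $\sin\theta$ is at most
\begin{equation*}
\Big(\frac{c}{\sqrt{\ell\sin\theta}}\Big)^{5}\sin\theta=c^{5}\,\ell^{-5/2}(\sin\theta)^{-3/2},
\end{equation*}
so that, using $\sin\theta\ge c\theta$,
\begin{equation*}
\int_{C/\ell}^{\pi/2}\ell^{-5/2}(\sin\theta)^{-3/2}\,d\theta\le c\,\ell^{-5/2}\int_{C/\ell}^{\pi/2}\theta^{-3/2}\,d\theta\le c\,\ell^{-5/2}\cdot2\Big(\tfrac{C}{\ell}\Big)^{-1/2}=O\Big(\tfrac{1}{\ell^{2}}\Big).
\end{equation*}
Equivalently, substituting $\theta=\psi/L$ turns the left-hand side into $c\,\ell^{-5/2}L^{1/2}\int_{C}^{L\pi/2}\psi^{-3/2}\,d\psi$, and the whole point is that $\psi^{-3/2}$ is integrable at $+\infty$, so the $\psi$-integral is $O(1)$ uniformly in $\ell$ while the prefactor $\ell^{-5/2}L^{1/2}$ is of order $\ell^{-2}$. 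Taking the maximum over the finitely many quadruples then finishes the proof.

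The hard part, such as it is, is not this last integration but assembling the four normalised bounds above — in particular getting the normalisation powers right for the second-derivative combination $P_\ell\cos\theta-P''_\ell\sin^2\theta$ and for the lone $P'_\ell(\cos\theta)$ term, and checking that they persist all the way down to $\theta=C/\ell$; this is exactly where the cutoff enters. I would also stress the contrast that explains why only here the answer is $O(1/\ell^2)$ rather than a power of $\log\ell$: for total degree $3$ one meets $\int^{L\pi/2}\psi^{-1/2}\,d\psi$, which diverges, so Lemma \ref{lemma3} must invoke the mean-zero cancellation $\int_1^\infty(\sin\psi+\cos\psi)^3\psi^{-1/2}\,d\psi<\infty$; for degree $4$ one meets $\int^{L\pi/2}\psi^{-1}\,d\psi\sim\log\ell$, whose nonzero mean $\tfrac38$ produces the leading $\log\ell$ of Lemma \ref{lemma4}; whereas for degree $5$ the exponent $3/2>1$ makes crude absolute-value bounds already summable, with no cancellation required. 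This single bound is then what feeds Lemma \ref{lemma5}: for any chaos $q\ge5$ one keeps five correlation factors, estimates them by \paref{result}, and bounds every remaining factor by $1$ (each being a correlation), following \cite[Lemma 3.5]{DNPR2016}.
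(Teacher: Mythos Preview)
Your proof is correct and follows essentially the same route as the paper: invoke the Hilb-type bounds from \cite[Appendix A]{MRW} to control each normalised factor by $c(\ell\sin\theta)^{-1/2}$ on $[C/\ell,\pi/2]$, multiply to total degree $5$, and observe that $\int\psi^{-3/2}\,d\psi$ converges at infinity. The paper's one-line version keeps a slightly sharper intermediate bound (it tracks the extra decay coming from the fourth factor, writing $\psi^{-3/2-m_4}$ rather than $\psi^{-3/2}$), but this refinement is irrelevant to the conclusion and your uniform estimate is entirely sufficient.
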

\begin{proof}
Thanks to the estimates for Legendre polynomials and their derivatives recalled in \cite[Appendix A]{MRW} we have
\begin{eqnarray*}
&& \left (\frac{2}{\ell(\ell+1)} \right )^{\frac12 m_2 +m_3+m_4} \frac{1}{L}  \int_C^{L\pi}  |P_\ell(\cos \psi/L|^{m_1} |P_\ell(\cos \psi/L)\sin \psi/L|^{m_2} \times \cr
&& |P_\ell(\cos \psi/L)\cos \psi/L - P_\ell^{''}(\cos \psi/L)\sin^2\psi/L|^{m_3}   |P'_\ell (\cos \psi/L)|^{m_4}\sin \psi/L\,d\psi \cr
&& = O\left ( \frac{1}{L^2} \int_C^{L\pi/2} \psi^{-3/2 - m_4}\, d\psi \right ) = O\left (\frac{1}{\ell^2}\right )
\end{eqnarray*}
which gives \paref{result}.
\end{proof}

\begin{proof}[Proof of Lemma \ref{lemma5}]
As for the proof of Lemma \ref{lemmaFubini} in the Appendix
\begin{eqnarray}\label{conto}
&& \left | \int_{C/\ell}^{\pi/2} \E\left [  \Psi_\ell(\overline x;q) \Psi_\ell(y(\theta);q) \right ] \sin\theta\,d\theta\right | \cr
&&= \Big | \frac{\ell (\ell
+1)}{2}\sum_{2a + 2b + c = q}\sum_{2a' + 2b' + c'=q} \frac{\alpha
_{2a,2b}\beta _{c}(u)}{(2a)!(2b)!c!}\frac{\alpha
_{2a',2b'}\beta _{c'}(u)}{(2a')!(2b')!(c')!} \times \cr
&&\times \frac{1}{L}\int_{C/\ell}^{\pi/2} \mathbb E\Big[H_{c}(f_{\ell }(\overline x))H_{2a}(\widetilde \partial
_{1;x}f_{\ell }(\overline x))H_{2b}(\widetilde \partial
_{2;x}f_{\ell }(\overline x)) \times \cr
&&\times H_{c'}(f_{\ell }(y(\theta)))H_{2a}(\widetilde \partial
_{1;x}f_{\ell }(y(\theta))) H_{2b}(\widetilde \partial
_{2;x}f_{\ell }(y(\theta))) \Big ] \sin \theta\,d\theta \Big | \cr
&&\le   \ell (\ell
+1) \sum_{2a + 2b + c = q}\sum_{2a' + 2b' + c'=q} \left | \frac{\alpha
_{2a,2b}\beta_{c}(u)}{(2a)!(2b)!c!} \right | \left |\frac{\alpha
_{2a',2b'}\beta_{c'}(u)}{(2a')!(2b')!(c')!} \right |  \times\cr
&&\times \Big | \mathcal V(2a,2b,c,2a',2b',c')   \Big |,
\end{eqnarray}
where $\mathcal V(2a,2b,c,2a',2b',c') $ is the sum of no more than $q!$ terms of the form
\begin{eqnarray}\label{explicit}
&&\left ( \frac{2}{\ell(\ell+1)} \right )^{\frac12 m_2 +m_3 +m_4}\int_{C/\ell}^{\pi/2}  P_\ell(\cos \theta)^{m_1} (P_\ell(\cos \theta)\sin \psi/L)^{m_2} \times \cr
&& \times (P_\ell(\cos \theta)\cos \theta - P_\ell^{''}(\cos \theta)\sin^2\theta)^{m_3}  P'_\ell (\cos \theta)^{m_4}\sin\theta\,d\theta,
\end{eqnarray}
where $m_1, \dots, m_4 \ge 0$ and $m_1+m_2+m_3+m_4 = q$. Since $C/\ell <\theta < \pi/2$ we have that the absolute value of each factor of the integrand in \paref{explicit} is less than $1-\delta$ for some small $\delta >0$, see i.e., the expressions for $P_{\ell}$, $P'_{\ell}$, $P''_{\ell}$ which are proved in \cite{CMW}, Lemma B3 and reported in \cite{MRW}, Appendix A.   Since $q\ge 5$ we can write for \paref{conto}, taking into account \paref{explicit},
\begin{eqnarray*}
&& \left | \int_{C/\ell}^{\pi/2} \E\left [  \Psi_\ell(\overline x;q) \Psi_\ell(y(\theta);q) \right ] \sin\theta\,d\theta \right | \cr
&&\le   \ell (\ell
+1) \sum_{2a + 2b + c = q}\sum_{2a' + 2b' + c'=q} \left | \frac{\alpha
_{2a,2b}\beta_{c}(u)}{(2a)!(2b)!c!} \right | \left |\frac{\alpha
_{2a',2b'}\beta_{c'}(u)}{(2a')!(2b')!(c')!} \right |  \times\cr
&&\times q! (1-\delta)^{q-5} \max_{m_1+\dots+ m_4 =5} \left ( \frac{2}{\ell(\ell+1)} \right )^{\frac12 m_2 +m_3 +m_4}\int_{C/\ell}^{\pi/2} |P_\ell(\cos \theta)|^{m_1} \times \cr
&&\times |P_\ell(\cos \theta)\sin \theta |^{m_2}
 |P_\ell(\cos \theta)\cos \theta - P_\ell^{''}(\cos \theta)\sin^2\theta|^{m_3}
 |P'_\ell (\cos \theta)|^{m_4}\sin \theta\,d\theta.
\end{eqnarray*}
Therefore we have
\begin{eqnarray}\label{semifinal1}
&&\left | \sum_{q=5}^{+\infty} \int_{C/\ell}^{\pi/2} \E\left [  \Psi^\varepsilon_\ell(\overline x;q) \Psi^\varepsilon_\ell(y(\theta);q) \right ] \sin \theta\,d\theta \right |\cr
&& \le  \ell (\ell
+1) (1-\delta)^{-5} \max_{m_1+\dots + m_4 =5}  \left ( \frac{2}{\ell(\ell+1)} \right )^{\frac12 m_2 +m_3 +m_4}\frac{1}{L}\int_C^{L\pi/2} |P_\ell(\cos \theta)|^{m_1}  \times \cr
&& |P_\ell(\cos \theta)\sin \theta |^{m_2}
 |P_\ell(\cos \theta)\cos \theta - P_\ell^{''}(\cos \theta)\sin^2\theta|^{m_3}  |P'_\ell (\cos \theta)|^{m_4}\sin \theta\,d\theta \times \cr
&&\times \sum_{q=5}^{+\infty}  q! (1-\delta)^q \sum_{2a + 2b + c = q}\sum_{2a' + 2b' + c'=q} \left | \frac{\alpha
_{2a,2b}\beta_{c}(u)}{(2a)!(2b)!c!} \right | \left |\frac{\alpha
_{2a',2b'}\beta_{c'}(u)}{(2a')!(2b')!(c')!} \right |.
\end{eqnarray}
For the series on the r.h.s. of \paref{semifinal1}, repeating the same argument as in the proof of Lemma 3.5 in \cite{DNPR2016}, we get
\begin{eqnarray*}
&&\sum_{q=5}^{+\infty}  q! (1-\delta)^q \sum_{2a + 2b + c = q}\sum_{2a' + 2b' + c'=q} \left | \frac{\alpha
_{2a,2b}\beta_{c}(u)}{(2a)!(2b)!c!} \right | \left |\frac{\alpha
_{2a',2b'}\beta_{c'}(u)}{(2a')!(2b')!(c')!} \right |  \cr
&& \le \sum_{a,b,c,a',b',c'\ge 0} \frac{\alpha
_{2a,2b}^2\beta_{c}(u)^2}{(2a)!(2b)!c!} \frac{(2a+2b+c)!}{(2a)!(2b)!c!} \sqrt{1-\delta}^{2a+2b+c+2a'+2b'+c'}.
\end{eqnarray*}
Now note that the map
\begin{equation*}
(a,b,c)\mapsto \frac{\alpha
_{2a,2b}^2\beta_{c}(u)^2}{(2a)!(2b)!c!}
\end{equation*}
is  bounded.  Indeed, see \paref{alpha} and \paref{beta}, and recall that there exists $C>0$ s.t. for every $k\in \mathbb N$ and $u\in \mathbb R$
\begin{equation*}
|H_k(u)| \gamma(u) \le C \sqrt{k!}.
\end{equation*}
Finally Lemma \ref{lemmaOgrande} applied in \paref{semifinal1} allows to conclude the proof.
\end{proof}

\section{Proof of Proposition \ref{propdom}}

We will need the following key result (inspired by Proposition 3.1 in \cite{MRW}) whose proof is in the Appendix. Recall that $
L:= \ell + \frac12,
$
and set
\begin{equation*}
\mathcal J_\ell(\psi;4) := - \sqrt{ \frac{\lambda _{\ell }}{2}}\sqrt{\frac{\pi }{2}}\phi (u)\frac{1}{4!} \left ( H_{4}(u)+2H_{2}(u)-\frac{3}{2}%
\right )  \frac{8\pi ^{2}}{L}\E\left [\Psi_\ell(\overline x;4) H_{4}(f_{\ell
}(y(\psi/L)))\right].
\end{equation*}

\begin{lemma} \label{keylemma}
For any constant $C>0$, uniformly over $\ell$ we have, for $0<\psi<C$,
\begin{equation}\label{J1}
\mathcal J_\ell(\psi;4) = O(\ell),
\end{equation}
and, for $C<\psi< L\frac{\pi}{2}$,
\begin{eqnarray}\label{J2}
\mathcal J_\ell(\psi;4) = &&  \frac14 \frac{\pi}{2}\phi(u)^2 \left ( H_4(u) + 2H_2(u) -\frac38 \right )^2  \frac{1}{\psi \sin \psi/L} \cr
&& +  a(u) \frac{\cos 4\psi}{\psi \sin \psi/L} + b(u) \frac{\sin 2\psi}{\psi \sin \psi/L} \cr
&& + O\left(\frac{1}{(\psi^2 \sin^2 \psi/L)}\right)+O\left(\frac{1}{\ell} \frac{1}{\psi \sin ^{2}\frac{\psi }{L}}\right),
\end{eqnarray}
where $a(u)$ and $b(u)$ are two (explicit) constants that depend on $u$.
\end{lemma}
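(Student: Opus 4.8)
The plan is to evaluate the Gaussian expectation $\E[\Psi_\ell(\overline x;4)\,H_4(f_\ell(y(\psi/L)))]$ in closed form by the Diagram Formula, following the scheme of \paref{conti4} in the proof of Lemma \ref{lemma4}, and then to feed the resulting Legendre expressions into the Hilb-type asymptotics already used in \paref{conti}--\paref{secondterm}. First I would expand $\Psi_\ell(\overline x;4)$ as in Proposition \ref{prop_exp}, a linear combination of the monomials $H_c(f_\ell(\overline x))H_{2a}(\widetilde\partial_{1;x}f_\ell(\overline x))H_{2b}(\widetilde\partial_{2;x}f_\ell(\overline x))$ with $2a+2b+c=4$ and coefficients recorded in \paref{calcolocoeff}. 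Since the companion factor $H_4(f_\ell(y(\psi/L)))$ involves only the field and not its gradient, the sole covariances entering the diagrams are $\E[f_\ell(\overline x)f_\ell(y(\theta))]=P_\ell(\cos\theta)$ and $\E[\widetilde\partial_{1;x}f_\ell(\overline x)f_\ell(y(\theta))]$, a multiple of $\sqrt{2/(\ell(\ell+1))}\,P'_\ell(\cos\theta)\sin\theta$, whereas $\E[\widetilde\partial_{2;x}f_\ell(\overline x)f_\ell(y(\theta))]=0$ because $y(\theta)$ lies on the meridian $\varphi=0$. Hence every monomial carrying a $\widetilde\partial_2$-factor is killed and exactly three terms survive, yielding
\[
\E[\Psi_\ell(\overline x;4)H_4(f_\ell(y(\theta)))]=\sqrt{\tfrac{\ell(\ell+1)}{2}}\,\Big(c_1\,P_\ell(\cos\theta)^4+c_2\,\tfrac{2}{\ell(\ell+1)}P_\ell(\cos\theta)^2(P'_\ell(\cos\theta)\sin\theta)^2+c_3\,\tfrac{4}{\ell^2(\ell+1)^2}(P'_\ell(\cos\theta)\sin\theta)^4\Big),
\]
with $c_1=\alpha_{00}\beta_4(u)$, $c_2=6\,\alpha_{20}\beta_2(u)$, $c_3=\alpha_{40}\beta_0(u)$ obtained from \paref{calcolocoeff} together with the Diagram-Formula weights (the same weights as in \paref{conti4}, restricted to pairings compatible with a single $H_4(f_\ell(y))$).

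For the range $0<\psi<C$ of \paref{J1}, where $\theta=\psi/L=O(1/\ell)$ and the Hilb asymptotics degenerate, I would estimate directly: $|P_\ell(\cos\theta)|\le 1$ and $|P'_\ell(\cos\theta)\sin\theta|=O(\ell)$ near the pole, so each bracketed term is $O(1)$; multiplying by the prefactor $\sqrt{\lambda_\ell/2}\,\sqrt{\ell(\ell+1)/2}\,(8\pi^2/L)=O(\ell)$ gives $\mathcal J_\ell(\psi;4)=O(\ell)$, which is \paref{J1}.

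For $C<\psi<L\pi/2$ I would substitute $P_\ell(\cos\psi/L)=\sqrt{2/(\pi\ell\sin(\psi/L))}\,(\sin(\psi+\pi/4)+O(1/\psi))$ and the analogous expansion for $P'_\ell(\cos\psi/L)\sin(\psi/L)$, which carries the phase $\psi-\pi/4$ and an extra factor $\ell$. Raising to the relevant powers, the three terms reduce to leading order to a common factor $4/(\pi^2L^2\sin^2(\psi/L))$ times $c_1\sin^4(\psi+\pi/4)+2c_2\sin^2(\psi+\pi/4)\sin^2(\psi-\pi/4)+4c_3\sin^4(\psi-\pi/4)$. Linearising via $\sin^4(\psi+\pi/4)=\tfrac38+\tfrac12\sin2\psi-\tfrac18\cos4\psi$, $\sin^2(\psi+\pi/4)\sin^2(\psi-\pi/4)=\tfrac18(1+\cos4\psi)$ and $\sin^4(\psi-\pi/4)=\tfrac38-\tfrac12\sin2\psi-\tfrac18\cos4\psi$ isolates the non-oscillating constant $\tfrac38 c_1+\tfrac14 c_2+\tfrac32 c_3$, which by \paref{calcolocoeff} collapses to $\tfrac{3}{16}e^{-u^2/2}(H_4(u)+2H_2(u)-\tfrac32)$. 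Multiplying by the prefactor of $\mathcal J_\ell$, whose own factor is again $(H_4(u)+2H_2(u)-\tfrac32)$, builds the perfect square and the coefficient $\tfrac14\tfrac\pi2\phi(u)^2$ of the leading $1/(\psi\sin(\psi/L))$-term of \paref{J2} (after replacing $\ell\sin(\psi/L)$ by $\psi$ in the log-contributing range, consistently with Proposition \ref{varu4} and \paref{varMu}), while the $\cos4\psi$- and $\sin2\psi$-parts define $a(u)$ and $b(u)$, and the cross terms between the leading and $O(\cdot)$ parts of the Hilb expansions reproduce the remainders $O(1/(\psi^2\sin^2(\psi/L)))$ and $O(\ell^{-1}\psi^{-1}\sin^{-2}(\psi/L))$.

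The main obstacle is the bookkeeping in this last step: one must check that the three surviving Legendre monomials, with the correct weights from \paref{calcolocoeff}, recombine after the trigonometric linearisation into the single square $(H_4(u)+2H_2(u)-\tfrac32)^2$ — the key being the collapse $\tfrac38 c_1+\tfrac14 c_2+\tfrac32 c_3=\tfrac{3}{16}e^{-u^2/2}(H_4(u)+2H_2(u)-\tfrac32)$ — and, more delicately, that all remainders (those from truncating Hilb's expansion and from replacing $\ell\sin(\psi/L)$ by $\psi$) are genuinely of the stated orders \emph{uniformly} in $\ell$ over the whole range $C<\psi<L\pi/2$. By contrast, the near-pole bound \paref{J1} is routine once the explicit expansion of the first step is available.
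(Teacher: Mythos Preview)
Your proposal is correct and follows essentially the same route as the paper's proof: both compute $\E[\Psi_\ell(\overline x;4)H_4(f_\ell(y(\theta)))]$ via the Diagram Formula, observe that only the three monomials not involving $\widetilde\partial_2$ survive (the paper phrases this as ``it is sufficient to focus on the terms $A_\ell,B_\ell,C_\ell$''), insert the Hilb asymptotics for $P_\ell$ and $P'_\ell$, and linearise $\sin^4(\psi\pm\pi/4)$ and $\sin^2(\psi+\pi/4)\sin^2(\psi-\pi/4)$ to isolate the non-oscillating constant that produces the square $(H_4(u)+2H_2(u)-\tfrac32)^2$. Your explicit bookkeeping of the weights $c_1,c_2,c_3$ and the collapse $\tfrac38 c_1+\tfrac14 c_2+\tfrac32 c_3=\tfrac{3}{16}e^{-u^2/2}(H_4(u)+2H_2(u)-\tfrac32)$ is exactly the computation the paper carries out (in slightly different notation), and your treatment of the near-pole range \paref{J1} and of the remainder terms matches the paper's as well.
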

\smallskip
\begin{proof}[Proof of Proposition \ref{propdom}]
It suffices to prove that, as $\ell \to +\infty$,
\begin{equation}\label{cov1}
\Cov\left (  \sum_{q=3}^{+\infty} \text{proj}[\mathcal L_\ell(u)|q], \mathcal{M}_{\ell }(u)\right ) \sim \frac{\pi}{4} \phi(u)^2 \left ( H_{4}(u)+2H_{2}(u)-\frac{3}{2}\right )^2  \log \ell
\end{equation}
(cf. \eqref{varMu} and Proposition \ref{varu4}).
By continuity of the inner product in $L^{2}$ spaces, we have
\begin{equation*}
\Cov\left (  \sum_{q=3}^{+\infty} \text{proj}[\mathcal L_\ell(u)|q], \mathcal{M}_{\ell }(u)\right ) = \lim_{\varepsilon \rightarrow 0}\Cov\left (  \sum_{q=3}^{+\infty} \text{proj}[\mathcal L^\varepsilon_\ell(u)|q], \mathcal{M}_{\ell }(u)\right ).
\end{equation*}
Recall \paref{psi}.
Note that $ \sum_{q=3}^{+\infty}  \Psi^\varepsilon_\ell(\cdot ;q)$ and $H_{4}(f_{\ell }(\cdot ))$ are both in $%
L^{2}(\mathbb{S}^{2}\times \Omega )$ and they are isotropic, and thus
\begin{eqnarray*}
&&\Cov\left (  \sum_{q=3}^{+\infty} \text{proj}[\mathcal L^\varepsilon_\ell(u)|q], \mathcal{M}_{\ell }(u)\right )= \Cov\left( \sum_{q=3}^{+\infty} \int_{\mathbb S^2} \Psi^\varepsilon_\ell(x;q)\,dx,\mathcal{M}_{\ell }(u)\right )\cr
&=& - \sqrt{ \frac{\lambda _{\ell }}{2}}\sqrt{\frac{\pi }{2}}\phi (u)\left ( H_{4}(u)+2H_{2}(u)-\frac{3}{2}%
\right ) \frac{1}{4!}\E\left [ \int_{%
\mathbb{S}^{2}}\sum_{q=3}^{+\infty}  \Psi^\varepsilon_\ell(x;q)\,dx \int_{\mathbb{S}^{2}}H_{4}(f_{\ell
}(y))dy\right] \\
&=& - \sqrt{ \frac{\lambda _{\ell }}{2}}\sqrt{\frac{\pi }{2}}\phi (u)\left ( H_{4}(u)+2H_{2}(u)-\frac{3}{2}%
\right ) \frac{1}{4!}\int_{%
\mathbb{S}^{2}} \int_{\mathbb{S}^{2}}\E\left [ \sum_{q=3}^{+\infty}  \Psi^\varepsilon_\ell(x;q) H_{4}(f_{\ell
}(y))\right] \,dx dy \cr
&=&- \sqrt{ \frac{\lambda _{\ell }}{2}}\sqrt{\frac{\pi }{2}}\phi (u)\left ( H_{4}(u)+2H_{2}(u)-\frac{3}{2}%
\right ) \frac{1}{4!} \times 8\pi ^{2}\int_{0}^{\pi }\E\left [ \sum_{q=3}^{+\infty}  \Psi^\varepsilon_\ell(\overline x;q) H_{4}(f_{\ell
}(y(\theta)))\right] \sin \theta\, d\theta \\
&=&- \sqrt{ \frac{\lambda _{\ell }}{2}}\sqrt{\frac{\pi }{2}}\phi (u)\left ( H_{4}(u)+2H_{2}(u)-\frac{3}{2}%
\right ) \frac{1}{4!} \times 8\pi ^{2}\int_{0}^{\pi }\E\left [\Psi^\varepsilon_\ell(\overline x;4) H_{4}(f_{\ell
}(y(\theta)))\right] \sin \theta\, d\theta.
\end{eqnarray*}%
The integrand $\E\left [\Psi^\varepsilon_\ell(\overline x;4) H_{4}(f_{\ell
}(y(\theta)))\right] $ can be computed explicitly and it
is easily seen to be absolutely bounded for fixed $\ell $, uniformly over $%
\varepsilon ,$ see Lemma \ref{keylemma}. Hence by Lebesgue Theorem we may exchange the limit and the integral, and we have that
\begin{eqnarray*}
&&\Cov\left (  \sum_{q=3}^{+\infty} \text{proj}[\mathcal L_\ell(u)|q], \mathcal{M}_{\ell }(u)\right )
=\lim_{\varepsilon \rightarrow 0}\Cov\left (  \sum_{q=3}^{+\infty} \text{proj}[\mathcal L_\ell^\varepsilon (u)|q], \mathcal{M}_{\ell }(u)\right )\\
&=&- \sqrt{ \frac{\lambda _{\ell }}{2}}\sqrt{\frac{\pi }{2}}\phi (u)\left ( H_{4}(u)+2H_{2}(u)-\frac{3}{2}%
\right ) \frac{1}{4!} \times 8\pi ^{2}\int_{0}^{\pi }\E\left [\Psi_\ell(\overline x;4) H_{4}(f_{\ell
}(y(\theta)))\right] \sin \theta\, d\theta.
\end{eqnarray*}%
Performing the change of variable $\psi = L\theta$, we can now write
\begin{equation*}
\Cov\left ( \sum_{q=3}^{+\infty} \text{proj}[\mathcal L_\ell(u)|q],\mathcal{M}_{\ell }(u)\right ) =\int_{0}^{L\pi }%
\mathcal{J}_{\ell }(\psi ;4)\sin \frac{\psi }{L}d\psi.
\label{CovDen}
\end{equation*}%
It is now sufficient to notice that
\begin{eqnarray}
\Cov\left (\sum_{q=3}^{+\infty} \text{proj}[\mathcal L_\ell(u)|q],\mathcal{M}_{\ell }\right )
=\int_{0}^{C}\mathcal{J}_{\ell }(\psi ;4)\sin \frac{\psi }{L}d\psi
+2\int_{C}^{L\pi /2}\mathcal{J}_{\ell }(\psi ;4)\sin \frac{\psi }{L}d\psi.
  \label{CovDen2}
\end{eqnarray}
For the first summand in (\ref{CovDen2}) we have thanks to \paref{J1}
\begin{equation*}
\int_{0}^{C}\mathcal{J}_{\ell }(\psi ;4)\sin \frac{\psi }{L}d\psi \leq \ell
\int_{0}^{C}\sin \frac{\psi }{L}d\psi \leq \frac{\ell }{L}\int_{0}^{C}\psi
d\psi =O(1)\text{ , as }\ell \rightarrow \infty;
\end{equation*}%
for the second sum in (\ref{CovDen2}), using Lemma \ref{keylemma}  (precisely \paref{J2}) and
integrating we obtain%
\begin{equation*}
2\int_{C}^{L\pi /2}\mathcal{J}_{\ell }(\psi ;4)\sin \frac{\psi }{L}d\psi = \frac{\pi}{4} \phi(u)^2 \left ( H_{4}(u)+2H_{2}(u)-\frac{3}{2}\right )^2  \log \ell+O(1)
\end{equation*}
giving \paref{cov1}.
\end{proof}

\section*{Appendix}

\subsection{Proof of Lemma \ref{lemmaFubini}}

\begin{proof}
From Proposition \ref{prop_exp} we can write (recall in particular \paref{psi} and \paref{green2}), for every $\varepsilon >0$,
\begin{eqnarray}\label{app1}
\Var\left ( \sum_{q=3}^{+\infty} \int_{\mathbb S^2} \Psi_\ell^\varepsilon(x;q)\,dx  \right ) &=& \int_{\mathbb S^2}\int_{\mathbb S^2}  \mathbb E \left [ \Psi_\ell^\varepsilon(x) \Psi_\ell^\varepsilon(y)\right ]\,dxdy\cr
&& -  \int_{\mathbb S^2} \int_{\mathbb S^2} \E \left [  \Psi_\ell^\varepsilon(x;2) \Psi_\ell^\varepsilon(y;2) \right ]\,dxdy,
\end{eqnarray}
where for the last equality we applied Fubini Theorem, indeed recall \paref{green2} and that for every $x\in \mathbb S^2$
$$
|\Psi^\varepsilon(x)| \le \frac{1}{2\eps} \| f_\ell(x) \|.
$$
Thus we have for \paref{app1}, from the definition of $\Psi^\varepsilon_\ell$ in \paref{psi} and then by isotropy and usual symmetry arguments \cite{Wig09},
\begin{eqnarray}\label{scambio}
\Var\left ( \sum_{q=3}^{+\infty} \int_{\mathbb S^2} \Psi_\ell^\varepsilon(x;q)\,dx  \right ) &=& \int_{\mathbb S^2} \int_{\mathbb S^2} \E\left [ \sum_{q=2}^{+\infty} \Psi_\ell^\varepsilon(x;q) \sum_{q'=2}^{+\infty} \Psi_\ell^\varepsilon(y;q') \right ]\,dx dy\cr
&& -  \int_{\mathbb S^2} \int_{\mathbb S^2} \E \left [  \Psi_\ell^\varepsilon(x;2) \Psi_\ell^\varepsilon(y;2) \right ]\,dxdy\cr
&=& \int_{\mathbb S^2} \int_{\mathbb S^2} \sum_{q=3}^{+\infty} \E\left [ \Psi_\ell^\varepsilon(x;q) \Psi_\ell^\varepsilon(y;q) \right ]\,dx dy\cr
&=& 2\cdot 8\pi^2 \int_0^{\pi/2} \sum_{q=3}^{+\infty} \E\left [  \Psi_\ell^\varepsilon(\overline x;q) \Psi_\ell^\varepsilon(y(\theta);q) \right ]\sin \theta\, d\theta.
\end{eqnarray}
Let us split the integral on the r.h.s. of \paref{scambio} into two terms ($C>0$ is an absolute constant)
\begin{eqnarray}\label{2terms}
&&\int_0^{\pi/2} \sum_{q=3}^{+\infty} \E\left [  \Psi_\ell^\varepsilon(\overline x;q) \Psi_\ell^\varepsilon(y(\theta);q) \right ]\sin \theta\, d\theta\cr
&& =\int_0^{C/\ell} \sum_{q=3}^{+\infty} \E\left [  \Psi_\ell^\varepsilon(\overline x;q) \Psi_\ell^\varepsilon(y(\theta);q) \right ]\sin \theta\, d\theta\cr
&&+\int_{C/\ell}^{\pi/2} \sum_{q=3}^{+\infty} \E\left [  \Psi_\ell^\varepsilon(\overline x;q) \Psi_\ell^\varepsilon(y(\theta);q) \right ]\sin \theta\, d\theta.
\end{eqnarray}
We will separately investigate the two terms on the r.h.s. of \paref{2terms}.
 For the first one, we can write
 \begin{eqnarray}\label{partez1}
 &&\int_0^{C/\ell} \sum_{q=3}^{+\infty} \E\left [  \Psi_\ell^\varepsilon(\overline x;q) \Psi_\ell^\varepsilon(y(\theta);q) \right ]\sin \theta\, d\theta \cr
 &&=   \int_0^{C/\ell} K^\varepsilon_\ell(\overline x, y(\theta))\sin \theta\,d\theta \cr
 && - \int_0^{C/\ell} \E\left [  \Psi^\varepsilon_\ell(\overline x;2) \Psi^\varepsilon_\ell(y(\theta);2) \right ] \sin \theta\,d\theta,
   \end{eqnarray}
   where for $x,y\in \mathbb S^2$, $K^\varepsilon_\ell(x,y) := \mathbb E[\Psi^\varepsilon_\ell(x)\Psi^\varepsilon_\ell(y)]$ is the $\varepsilon$-approximation of the so-called two point correlation function (see \cite{Wig09})
  $$
  K_\ell(x,y):=p_{(f_\ell(x),f_\ell(y))}(u,u)\mathbb E[\| \nabla f_\ell(x) \| \|\nabla f_\ell(y)\| | f_\ell(x)=f_\ell(y)=u],\quad x,y\in \mathbb S^2,
  $$
  $p_{(f_\ell(x),f_\ell(y))}$ denoting the density of the Gaussian vector ${(f_\ell(x),f_\ell(y))}$.
We can use Lemma 3.4 so that
\begin{equation*}
\lim_{\varepsilon\to 0} \int_0^{C/\ell} K^\varepsilon_\ell(\overline x, y(\theta))\sin \theta\,d\theta = \int_0^{C/\ell} K_\ell(\overline x, y(\theta))\sin \theta\,d\theta
\end{equation*}
and then Corollary 3.5 in \cite{Wig09} entailing that
\begin{equation}\label{stimaI}
\int_0^{C/\ell} K_\ell(\overline x, y(\theta))\sin \theta\,d\theta =O\left (1 \right ).
\end{equation}
From  \paref{green2} it is immediate to show that the integrand of the second term on the r.h.s. of \paref{partez1} is $O(\ell^2)$ uniformly in $\varepsilon$ and so
$$
\lim_{\varepsilon\to 0} \int_0^{C/\ell} \E\left [  \Psi^\varepsilon_\ell(\overline x;2) \Psi^\varepsilon_\ell(y(\theta);2) \right ] \sin \theta\,d\theta =  \int_0^{C/\ell} \E\left [  \Psi_\ell(\overline x;2) \Psi_\ell(y(\theta);2) \right ] \sin \theta\,d\theta = O(1)
$$
that together with \paref{stimaI} gives that the l.h.s. of \paref{partez1} is $O(1)$.
 The second term on the r.h.s. of (\ref{2terms}) is more delicate to deal with (we will show that we can exchange integral and series), as follows.
\begin{eqnarray}\label{conto1}
&&\int_{C/\ell}^{\pi/2}  \left |\E\left [  \Psi^\varepsilon_\ell(\overline x;q) \Psi^\varepsilon_\ell(y(\theta);q) \right ]\right | \sin\theta\,d\theta \cr
&&\le \frac{\ell (\ell
+1)}{2}\sum_{2a + 2b + c = q}\sum_{2a' + 2b' + c'=q} \left | \frac{\alpha
_{2a,2b}\beta^\eps _{c}(u)}{(2a)!(2b)!c!}\frac{\alpha
_{2a',2b'}\beta^\eps _{c'}(u)}{(2a')!(2b')!(c')!}\right | \times \cr
&&\times \int_{C/\ell}^{\pi/2} \Big | \mathbb E\Big[H_{c}(f_{\ell }(\overline x))H_{2a}(\widetilde \partial
_{1;x}f_{\ell }(\overline x))H_{2b}(\widetilde \partial
_{2;x}f_{\ell }(\overline x)) \times \cr
&&\times H_{c'}(f_{\ell }(y(\theta)))H_{2a}(\widetilde \partial
_{1;x}f_{\ell }(y(\theta))) H_{2b}(\widetilde \partial
_{2;x}f_{\ell }(y(\theta))) \Big ] \Big |\sin \theta\,d\theta  \cr
&&\le   \ell (\ell
+1) \sum_{2a + 2b + c = q}\sum_{2a' + 2b' + c'=q} \left | \frac{\alpha
_{2a,2b}\beta^\eps _{c}(u)}{(2a)!(2b)!c!} \right | \left |\frac{\alpha
_{2a',2b'}\beta^\eps _{c'}(u)}{(2a')!(2b')!(c')!} \right |  \times\cr
&&\times \Big | \mathcal V(2a,2b,c,2a',2b',c')   \Big |,
\end{eqnarray}
where $\mathcal V(2a,2b,c,2a',2b',c') $ is the sum of no more than $q!$ terms of the form
\begin{eqnarray}\label{explicit1}
&&\left ( \frac{2}{\ell(\ell+1)}\right )^{\frac12 m_2 + m_3 + m_4}\int_{C/\ell}^{\pi/2}  P_\ell(\cos \theta)^{m_1} (P'_\ell(\cos \theta)\sin \theta)^{m_2} \times \cr
&& \times (P_\ell(\cos \theta)\cos \theta - P_\ell^{''}(\cos \theta)\sin^2\theta)^{m_3}  P'_\ell (\cos \psi/L)^{m_4}\sin\theta\,d\theta,
\end{eqnarray}
where $m_1, \dots, m_4 \ge 0$ and $m_1+m_2+m_3+m_4 = q$. Since $C/\ell <\theta < \pi/2$ we have that the absolute value of each (properly normalized) factor in \paref{explicit1} is less than $1-\delta$ for some small $\delta >0$, see i.e., the expressions for $P_{\ell}$, $P'_{\ell},$ $P''_{\ell}$ which are proved in \cite{CMW}, Lemma B3 and reported in \cite{MRW}, Appendix A.   Hence can write from \paref{conto1}, taking into account \paref{explicit1},
\begin{eqnarray*}
&&\sum_{q=3}^{+\infty}\int_{C/\ell}^{\pi/2}  \left | \E\left [  \Psi^\varepsilon_\ell(\overline x;q) \Psi^\varepsilon_\ell(y(\theta);q) \right ] \right |\sin\theta\,d\theta  \cr
&&\le   c\cdot  \ell (\ell
+1) \sum_{q=3}^{+\infty}\sum_{2a + 2b + c = q}\sum_{2a' + 2b' + c'=q} \left | \frac{\alpha
_{2a,2b}\beta^\eps _{c}(u)}{(2a)!(2b)!c!} \right | \left |\frac{\alpha
_{2a',2b'}\beta^\eps _{c'}(u)}{(2a')!(2b')!(c')!} \right |  q! (1-\delta)^{q},
\end{eqnarray*}
for some $c>0$.
Repeating the same argument as in the proof of Lemma 3.5 in \cite{DNPR2016}, we get
\begin{eqnarray*}\label{semifinal}
&&\sum_{q=3}^{+\infty}  q! (1-\delta)^q \sum_{2a + 2b + c = q}\sum_{2a' + 2b' + c'=q} \left | \frac{\alpha
_{2a,2b}\beta^\eps _{c}(u)}{(2a)!(2b)!c!} \right | \left |\frac{\alpha
_{2a',2b'}\beta^\eps _{c'}(u)}{(2a')!(2b')!(c')!} \right |  \cr
&& \le \sum_{a,b,c,a',b',c'\ge 0} \frac{\alpha
_{2a,2b}^2\beta^\eps _{c}(u)^2}{(2a)!(2b)!c!} \frac{(2a+2b+c)!}{(2a)!(2b)!c!} \sqrt{1-\delta}^{2a+2b+c+2a'+2b'+c'} < +\infty.
\end{eqnarray*}
Indeed, note that the map
$$
(a,b,c)\mapsto \frac{\alpha
_{2a,2b}^2\beta^\eps _{c}(u)^2}{(2a)!(2b)!c!}
$$
is  bounded uniformly over $\eps$: see \paref{alpha} and recall that there exists $C>0$ s.t. for every $k\in \mathbb N$ and $u\in \mathbb R$
\begin{equation*}
|H_k(u)| \gamma(u) \le C \sqrt{k!}
\end{equation*}
 immediately implying (see the definition of $\beta^\eps_{\cdot}$ in \paref{beta}) that for every $c\in \mathbb N$ and $\eps>0$
$$
\frac{\beta_c^\eps (u)^2}{ c!} \le C.
$$
We have just proved that
\begin{eqnarray*}
&&\int_{C/\ell}^{\pi/2} \sum_{q=3}^{+\infty} \E\left [  \Psi_\ell^\varepsilon(\overline x;q) \Psi_\ell^\varepsilon(y(\theta);q) \right ]\sin \theta\, d\theta \cr
&&=\sum_{q=3}^{+\infty} \int_{C/\ell}^{\pi/2} \E\left [  \Psi_\ell^\varepsilon(\overline x;q) \Psi_\ell^\varepsilon(y(\theta);q) \right ]\sin \theta\, d\theta
\end{eqnarray*}
and moreover
\begin{eqnarray*}
&&\lim_{\varepsilon\to 0} \sum_{q=3}^{+\infty} \int_{C/\ell}^{\pi/2} \E\left [  \Psi_\ell^\varepsilon(\overline x;q) \Psi_\ell^\varepsilon(y(\theta);q) \right ]\sin \theta\, d\theta\cr
&& = \sum_{q=3}^{+\infty} \int_{C/\ell}^{\pi/2} \E\left [  \Psi_\ell(\overline x;q) \Psi_\ell(y(\theta);q) \right ]\sin \theta\, d\theta.
\end{eqnarray*}
which is what we were looking for.
 \end{proof}

\subsection{Proof of Lemma \ref{keylemma}}

The projection of the boundary length on the fourth-order
chaos is
\begin{eqnarray}\label{fourthchaosdom}
\text{proj}[{\mathcal{L}}_{\ell }(u)|4] =\int_{\mathbb{S}^{2}} & \Psi _{\ell
}(x;4)\,dx=\sqrt{\frac{\ell (\ell +1)}{2}}\frac{\alpha _{00}\beta _{4}(u)}{4!}%
\int_{\mathbb{S}^{2}}H_{4}(f_{\ell }(x))\,dx\cr
&+\frac{\alpha _{20}\beta _{2}(u)}{%
2!2!}\int_{\mathbb{S}^{2}}H_{2}(f_{\ell }(x))H_{2}\left(\frac{\partial
_{1;x}f_{\ell }(x)}{\sqrt{\ell (\ell +1)/2}}\right)dx. \cr
&+\frac{\alpha _{40}\beta _{0}(u)}{4!}\int_{\mathbb{S}^{2}}H_{4}\left(\frac{\partial
_{1;x}f_{\ell }(x)}{\sqrt{\ell (\ell +1)/2}}\right)dx\cr
&+\frac{\alpha
_{22}\beta _{0}(u)}{2!2!}\int_{\mathbb{S}^{2}}H_{2}\left(\frac{\partial
_{1;x}f_{\ell }(x)}{\sqrt{\ell (\ell +1)/2}}\right)H_{2}\left(\frac{\partial
_{2;x}f_{\ell }(x)}{\sqrt{\ell (\ell +1)/2}}\right)dx\cr
&+\frac{\alpha _{02}\beta _{2}(u)}{2!2!}\int_{\mathbb{S}%
^{2}}H_{2}(f_{\ell }(x))H_{2}\left(\frac{\partial
_{2;x}f_{\ell }(x)}{\sqrt{\ell (\ell +1)/2}}\right)dx\cr
&+ \frac{\alpha _{04}\beta _{0}(u)}{4!}\int_{\mathbb{S}^{2}}H_{4}\left(\frac{\partial
_{2;x}f_{\ell }(x)}{\sqrt{\ell (\ell +1)/2}}\right)dx\cr
& =: \int_{\mathbb S^2} (A_\ell(x) + B_\ell(x) + C_\ell(x) + D_\ell(x) + E_\ell(x) + F_\ell(x))dx.
\end{eqnarray}
Let us set
$$
\mathcal M_\ell(u) = \sqrt{ \frac{\lambda _{\ell }}{2}}\sqrt{\frac{\pi }{2}}\phi (u)\left ( H_{4}(u)+2H_{2}(u)-\frac{3}{2}%
\right ) \frac{1}{4!}\int_{\mathbb{S}^{2}}H_{4}(f_{\ell }(x))dx = : \int_{\mathbb S^2} M_\ell(x;u)\,dx.
$$
\begin{proof}[Proof of Lemma \ref{keylemma}]
Repeating the same argument as in the proof of Proposition 3.1 in \cite{MRW} we prove \paref{J1}.
%
%We have
%\begin{eqnarray*}
%\mathcal{L}_{\ell }(u)-\mathbb{E}[\mathcal{L}_{\ell }(u)] &=&\sum_{q=2}^{\infty }\sqrt{\frac{\ell
%(\ell +1)}{2}}\sum_{2a+2b+c=q}\frac{\alpha
%_{2a,2b}\beta _{c}}{(2a)!(2b)!c!}\times  \\
%&&\times \int_{\mathbb{S}^{2}}H_{c}(f_{\ell }(x))H_{2a}\left (\frac{\partial
%_{1;x}f_{\ell }(x)}{\sqrt{\ell (\ell +1)/2}}\right )H_{2b}\left (\frac{\partial
%_{2;x}f_{\ell }(x)}{\sqrt{\ell (\ell +1)/2}}\right )\,dx\\
%&=&\sum_{q=2}^{\infty }\int_{\mathbb{S}^{2}}\Psi _{\ell }(x;q)\,dx,
%\end{eqnarray*}%
%where we set
%\[
%\Psi _{\ell }(x;q):=\sqrt{\frac{\ell
%(\ell +1)}{2}}\sum_{2a+2b+c=q}\frac{\alpha
%_{2a,2b}\beta _{c}}{(2a)!(2b)!c!}
%H_{c}(f_{\ell }(x))H_{2a}\left (\frac{\partial
%_{1;x}f_{\ell }(x)}{\sqrt{\ell (\ell +1)/2}}\right )H_{2b}\left (\frac{\partial
%_{2;x}f_{\ell }(x)}{\sqrt{\ell (\ell +1)/2}}\right ).
%\]
%For $x=(\theta _{x},\varphi _{x})$ we are still using the notation
%\[
%\partial _{1;x}=\left. \frac{\partial }{\partial \theta }\right\vert
%_{\theta =\theta _{x}},\qquad \partial _{2;x}=\left. \frac{1}{\sin \theta }%
%\frac{\partial }{\partial \varphi }\right\vert _{\theta =\theta _{x},\varphi
%=\varphi _{x}}.
%\]%
%
For the computations to follow, recall that  $\overline x = (0,0)$ and $y(\theta) = (\theta, 0)$. It is sufficient to focus on the terms $A_\ell$, $B_\ell$ and $C_\ell$, as in \cite[Proposition 3.1]{MRW}. %The three "leading" terms should be%
%\[
%\frac{\alpha _{00}\beta _{4}(u)}{4!}\int_{\mathbb{S}^{2}}H_{4}(f_{\ell
%}(x))dx+\frac{\alpha _{20}\beta _{2}(u)}{2!2!}\int_{\mathbb{S}%
%^{2}}H_{2}(f_{\ell }(x))H_{2}(\frac{\partial _{1;x}f_{\ell }(x)}{\sqrt{\ell
%(\ell +1)/2}})dx+\frac{\alpha _{40}\beta _{0}(u)}{4!}\int_{\mathbb{S}%
%^{2}}H_{4}(\frac{\partial _{1;x}f_{\ell }(x)}{\sqrt{\ell (\ell +1)/2}})dx
%\]%
An application of Diagram Formula gives
\begin{eqnarray}
\E\left [ A_\ell(\overline x) M_\ell(y(\theta);u)    \right] &=& \frac{\ell(\ell+1)}{2} \sqrt{\frac{\pi}{2}} \frac{\alpha_{00}\beta_4(u)}{4!} \phi(u) \left (H_4(u) + 2H_2(u) -\frac{3}{2}\right ) P_\ell(\cos \theta)^4, \cr
\E\left [ B_\ell(\overline x) M_\ell(y(\theta);u)    \right] &=&  \frac{\ell(\ell+1)}{2}\sqrt{\frac{\pi}{2}} \frac{\alpha_{20}\beta_2(u)}{2! 2!} \phi(u) \left (H_4(u) + 2H_2(u) -\frac{3}{2}\right )\times \cr
&\times& \frac{2}{\ell(\ell+1)}P_\ell(\cos \theta)^2 (P_\ell'(\cos \theta) \sin\theta)^2, \cr
\E\left [ C_\ell(\overline x) M_\ell(y(\theta);u)    \right] &=&  \frac{\ell(\ell+1)}{2}\sqrt{\frac{\pi}{2}} \frac{\alpha_{40}\beta_0(u)}{4!} \phi(u) \left (H_4(u) + 2H_2(u) -\frac{3}{2}\right )\times \cr
& \times& \left( \frac{2}{\ell(\ell+1)}\right )^2(P_\ell'(\cos \theta)\sin\theta)^4.
\end{eqnarray}
Analogously to the proof of Proposition 3.1 in \cite{MRW} we have
\begin{eqnarray}\label{1}
8\pi^2\E\left [ A_\ell(\overline x) M_\ell(y(\theta);u)    \right] &=& 8\pi^2 \frac{\ell(\ell+1)}{2} \sqrt{\frac{\pi}{2}} \frac{\alpha_{00}\beta_4(u)}{4!} \phi(u) \left (H_4(u) + 2H_2(u) -\frac{3}{2}\right ) P_\ell(\cos \theta)^4 \cr
 &= & 8\pi^2 \frac{\ell(\ell+1)}{2} \sqrt{\frac{\pi}{2}} \frac{\alpha_{00}\beta_4(u)}{4!} \phi(u) \left (H_4(u) + 2H_2(u) -\frac{3}{2}\right ) \times \cr
 &&\left ( \sqrt{\frac{2}{\pi\ell \sin \psi/L}}\left( \sin(\psi + \pi/4) + O\left(\frac{1}{\psi}\right)   \right ) \right )^4 \cr
 &=&  8\pi^2 \frac{\ell(\ell+1)}{2} \sqrt{\frac{\pi}{2}} \frac{\alpha_{00}\beta_4(u)}{4!} \phi(u) \left(H_4(u) + 2H_2(u) -\frac{3}{2}\right) \times \cr
 &\times& \frac{2^2}{\pi^2\ell^2 \sin^2 \psi/L} \sin(\psi + \pi/4)^4 + O\left(\frac{1}{(\psi \sin^2 \psi/L)}\right)      \cr
 & =&   8\pi^2 \frac{\ell(\ell+1)}{2} \sqrt{\frac{\pi}{2}} \frac{\alpha_{00}\beta_4(u)}{4!} \phi(u) \left (H_4(u) + 2H_2(u) -\frac{3}{2}\right )  \frac{2^2}{\pi^2\ell^2 \sin^2 \psi/L} \times \cr
 &\times& \left (\frac38  - \frac18\cos 4\psi +\frac12 \sin 2\psi    \right ) + O\left(\frac{1}{(\psi \sin^2 \psi/L)}\right)      .
\end{eqnarray}
Likewise
\begin{eqnarray}\label{2}
&&8\pi ^{2}\mathbb{E}\left[ B_{\ell }(\overline{x})M_{\ell }(y(\theta ))\right]\cr
&=&8\pi^2 \frac{\ell(\ell+1)}{2}\sqrt{\frac{\pi}{2}} \frac{\alpha_{20}\beta_2(u)}{2! 2!} \phi(u) \left (H_4(u) + 2H_2(u) -\frac{3}{2}\right )\frac{2}{\ell(\ell+1)}P_\ell(\cos \theta)^2 (P_\ell'(\cos \theta) \sin\theta)^2\cr
&=&8\pi^2 \frac{\ell(\ell+1)}{2}\sqrt{\frac{\pi}{2}} \frac{\alpha_{20}\beta_2(u)}{2! 2!} \phi(u) \left (H_4(u) + 2H_2(u) -\frac{3}{2}\right )\frac{2}{\ell(\ell+1)}\times \cr
&\times &\left[ \sqrt{\frac{2}{\pi \ell \sin \frac{\psi }{L}}}%
\left(\sin (\psi +\frac{\pi }{4})+O\left(\frac{1}{\psi }\right)\right)\right] ^{2}\left[ \sqrt{%
\frac{2}{\pi \ell \sin ^{3}\frac{\psi }{L}}}\left(\ell \sin \left(\psi -\frac{\pi }{4}%
\right)+O(1)\right)\sin \frac{\psi }{L}\right] ^{2}\cr
&=&8\pi^2 \frac{\ell(\ell+1)}{2}\sqrt{\frac{\pi}{2}} \frac{\alpha_{20}\beta_2(u)}{2! 2!} \phi(u) \left (H_4(u) + 2H_2(u) -\frac{3}{2}\right )\frac{2}{\ell(\ell+1)}\times \cr
&\times& \frac{2}{\pi \ell \sin \frac{\psi }{L}}\sin ^{2}(\psi +%
\frac{\pi }{4})\frac{2}{\pi \ell \sin \frac{\psi }{L}}\ell ^{2}\sin
^{2}\left(\psi -\frac{\pi }{4}\right)+O\left(\frac{1}{\ell \sin ^{2}\frac{\psi }{L}}\right)\cr
&=&8\pi^2 \frac{\ell(\ell+1)}{2}\sqrt{\frac{\pi}{2}} \frac{\alpha_{20}\beta_2(u)}{2! 2!} \phi(u) \left (H_4(u) + 2H_2(u) -\frac{3}{2}\right )\frac{2}{\ell(\ell+1)}\times \cr
&\times &\frac{4}{\pi^2 \sin^2 \frac{\psi }{L}} \frac18 (1+\cos 4\psi) +O\left(\frac{1}{\ell \sin ^{2}\frac{\psi }{L}}\right).
\end{eqnarray}%
Finally
\begin{eqnarray}\label{3}
8\pi ^{2}\mathbb{E}\left[ C_{\ell }(\overline{x})M_{\ell }(y(\theta ))\right]
&=&8\pi^2  \frac{\ell(\ell+1)}{2}\sqrt{\frac{\pi}{2}} \frac{\alpha_{40}\beta_0(u)}{4!} \phi(u) \left (H_4(u) + 2H_2(u) -\frac{3}{2}\right )\times \cr
&\times& \left( \frac{2}{\ell(\ell+1)}\right )^2( P_{\ell }^{\prime }(\cos \theta )\sin \theta
 )^{4}\cr
&=& 8\pi^2  \frac{\ell(\ell+1)}{2}\sqrt{\frac{\pi}{2}} \frac{\alpha_{40}\beta_0(u)}{4!} \phi(u) \left (H_4(u) + 2H_2(u) -\frac{3}{2}\right )\left( \frac{2}{\ell(\ell+1)}\right )^2\times \cr
&\times&  \left ( \sqrt{\frac{2%
}{\pi \ell \sin ^{3}\frac{\psi }{L}}}\left(\ell \sin \left(\psi -\frac{\pi }{4}%
\right)+O(1)\right)\sin \frac{\psi }{L}\right )  ^{4}\cr
&=& 8\pi^2  \frac{\ell(\ell+1)}{2}\sqrt{\frac{\pi}{2}} \frac{\alpha_{40}\beta_0(u)}{4!} \phi(u) \left (H_4(u) + 2H_2(u) -\frac{3}{2}\right )\left( \frac{2}{\ell(\ell+1)}\right )^2\times \cr
&\times& \frac{2^{2}}{\pi
^{2}\ell ^{2}\sin ^{2}\frac{\psi }{L}}\ell ^{4}\sin ^{4}\left(\psi -\frac{\pi }{4}%
\right)+O\left(\frac{1}{\ell \sin ^{2}\frac{\psi }{L}}\right)\cr
&=& 8\pi^2  \frac{\ell(\ell+1)}{2}\sqrt{\frac{\pi}{2}} \frac{\alpha_{40}\beta_0(u)}{4!} \phi(u) \left (H_4(u) + 2H_2(u) -\frac{3}{2}\right )\left( \frac{2}{\ell(\ell+1)}\right )^2\times \cr
&\times&  \frac{2^{2}}{\pi
^{2}\ell ^{2}\sin ^{2}\frac{\psi }{L}}\ell ^{4}\left[ \frac{3}{8}-\frac{1}{8%
}\cos 4\psi -\frac{1}{2}\sin 2\psi \right] +O\left(\frac{1}{\ell \sin ^{2}\frac{%
\psi }{L}}\right).
\end{eqnarray}
Plugging \paref{1}, \paref{2} and \paref{3} all together we find
\begin{eqnarray}
\mathcal J_\ell(\psi; 4) &= & \frac{1}{L} \sqrt{\frac{\pi}{2}}\phi(u) (H_4(u) + 2H_2(u) -\frac32) \times \cr
&&\Big \lbrace 8\pi^2 \frac{\ell(\ell+1)}{2} \frac{\alpha_{00}\beta_4(u)}{4!}    \frac{2^2}{\pi^2\ell^2 \sin^2 \psi/L} \times \cr
 &\times& \left (\frac38  - \frac18\cos 4\psi +\frac12 \sin 2\psi    \right ) + O\left(\frac{1}{(\psi \sin^2 \psi/L)}\right)      \cr
 && + 8\pi^2 \frac{\ell(\ell+1)}{2} \frac{\alpha_{20}\beta_2(u)}{2! 2!} \frac{2}{\ell(\ell+1)}\times \cr
&\times& \frac{4}{\pi^2 \sin^2 \frac{\psi }{L}} \frac18 (1+\cos 4\psi) +O\left(\frac{1}{\ell \sin ^{2}\frac{\psi }{L}}\right) \cr
&& + 8\pi^2  \frac{\ell(\ell+1)}{2} \frac{\alpha_{40}\beta_0(u)}{4!} \left( \frac{2}{\ell(\ell+1)}\right )^2\times \cr
&\times&  \frac{2^{2}}{\pi
^{2}\ell ^{2}\sin ^{2}\frac{\psi }{L}}\ell ^{4}\left[ \frac{3}{8}-\frac{1}{8%
}\cos 4\psi -\frac{1}{2}\sin 2\psi \right]  \Big \rbrace \cr
&= &\frac{1}{L} \sqrt{\frac{\pi}{2}}\phi(u) (H_4(u) + 2H_2(u) -\frac32) \times \cr
&\times& \Big \lbrace  \frac{2}{3} \alpha_{00}\beta_4(u)    \frac{1}{ \sin^2 \psi/L} \left (\frac38  - \frac18\cos 4\psi +\frac12 \sin 2\psi    \right ) + O\left(\frac{1}{(\psi \sin^2 \psi/L)}\right)      \cr
 && +  \alpha_{20}\beta_2(u)  \frac{1}{ \sin^2 \frac{\psi }{L}}  (1+\cos 4\psi) +O\left(\frac{1}{\ell \sin ^{2}\frac{\psi }{L}}\right) \cr
&& + \frac83 \alpha_{40}\beta_0(u) \frac{1}{\sin ^{2}\frac{\psi }{L}} \left[ \frac{3}{8}-\frac{1}{8%
}\cos 4\psi -\frac{1}{2}\sin 2\psi \right]  \Big \rbrace \cr
&=&\sqrt{\frac{\pi}{2}}\phi(u) (H_4(u) + 2H_2(u) -\frac32) \times \cr
&\times& \Big \lbrace \frac{1}{L \sin^2 \psi/L} (\frac14 \alpha_{00}\beta_4(u) + \alpha_{20}\beta_2(u) + \alpha_{40}\beta_0(u))\cr
&&+ \frac{\cos 4\psi}{L \sin^2\psi/L}(-\frac{1}{12}\alpha_{00}\beta_4(u)+\alpha_{20}\beta_2(u) -\frac13 \alpha_{40}\beta_0(u))\cr
 && + \frac{\sin 2\psi}{L\sin^2\psi/L}(\frac13 \alpha_{00}\beta_4(u) -\frac43 \alpha_{40}\beta_0(u))
  \Big \rbrace + O\left(\frac{1}{\psi \sin^2 \psi/L}\right) +O\left(\frac{1}{\ell \sin ^{2}\frac{\psi }{L}}\right)  \cr
&=& \sqrt{\frac{\pi}{2}}\phi(u) (H_4(u) + 2H_2(u) -\frac32) \times \cr
&\times& \Big \lbrace \frac{1}{\psi \sin \psi/L} (\frac14 \alpha_{00}\beta_4(u) + \alpha_{20}\beta_2(u) + \alpha_{40}\beta_0(u))\cr
&&+ \frac{\cos 4\psi}{\psi \sin\psi/L}(-\frac{1}{12}\alpha_{00}\beta_4(u)+\alpha_{20}\beta_2(u) -\frac13 \alpha_{40}\beta_0(u))\cr
 && + \frac{\sin 2\psi}{\psi \sin\psi/L}(\frac13 \alpha_{00}\beta_4(u) -\frac43 \alpha_{40}\beta_0(u))
  \Big \rbrace  + O\left(\frac{1}{\psi \sin^2 \psi/L}\right) +O\left(\frac{1}{\ell \sin ^{2}\frac{\psi }{L}}\right) \cr
  &= & \sqrt{\frac{\pi}{2}}\phi(u) (H_4(u) + 2H_2(u) -\frac32) \times \cr&\times &\Big \lbrace \frac{1}{\psi \sin \psi/L} \sqrt{\frac{\pi}{2}}\phi(u) \frac14 (H_4(u) + 2H_2(u) -\frac32)  \Big \rbrace + a(u) \frac{\cos 4\psi}{\psi \sin \psi/L} + b(u) \frac{\sin 2\psi}{\psi \sin \psi/L} \cr
  &&+ O\left(\frac{1}{\psi \sin^2 \psi/L}\right) +O\left(\frac{1}{\ell \sin ^{2}\frac{\psi }{L}}\right) \cr
    &=&  \frac14 \frac{\pi}{2}\phi(u)^2 (H_4(u) + 2H_2(u) -\frac32)^2 \times \cr&\times &\Big \lbrace \frac{1}{\psi \sin \psi/L}   \Big \rbrace + a(u) \frac{\cos 4\psi}{\psi \sin \psi/L} + b(u) \frac{\sin 2\psi}{\psi \sin \psi/L} \cr
  &&+ O\left(\frac{1}{\psi \sin^2 \psi/L}\right) +O\left(\frac{1}{\ell \sin ^{2}\frac{\psi }{L}}\right).
  \end{eqnarray}
as claimed.
\end{proof}

\

{\sc Dipartimento di Matematica, Universit\'a di Roma ``Tor Vergata"}

\emph{marinucc[at]mat.uniroma2.it} (Corresponding author)

\medskip

{\sc Dipartimento di Matematica, Universit\'a di Pisa }

\emph{maurizia.rossi[at]unipi.it}

\end{document}